\newcommand{\solver}{RCP}
\definecolor{Lightblue}{rgb}{0.68,0.85,0.9}
\definecolor{Midblue}{rgb}{0.5,0.7,0.9}
\definecolor{Blue}{rgb}{0.4,0.6,0.9}
\definecolor{Gray}{gray}{0.85}
\definecolor{LightCyan}{rgb}{0.88,1,1}
\newcolumntype{a}{>{\columncolor{Gray}}c}
\newcolumntype{b}{>{\columncolor{white}}c}
\newtheorem{example}{Example}[section]
\newtheorem{definition}[example]{Definition}
\newtheorem{theorem}[example]{Theorem}
\newtheorem{lemma}[example]{Lemma}
\newtheorem{corollary}[example]{Corollary}
\newcommand{\OMIT}[1]{}
\newcommand{\ialphabet}{\Sigma}
\definecolor{light-gray}{gray}{0.9}
\definecolor{light-yellow}{RGB}{255, 255, 220}
\newcommand{\sidematt}[1]{{\color{orange} #1 -- MH}}
\definecolor{light-gray}{gray}{0.9}
\definecolor{light-yellow}{RGB}{255, 255, 220}
\newcommand{\akm}{\ensuremath{\left(a^m\right)^*}}
\lstdefinelanguage{JavaScript}{
	keywords={break, case, catch, const, continue, debugger, default, delete, do, else, export, finally, for, function, if, import, in, instanceof, let, new, return, super, switch, this, throw, try, typeof, var, void, while, with, yield},
	keywordstyle=\color{blue}\bfseries,
	ndkeywords={class, export, boolean, throw, implements, import, this},
	ndkeywordstyle=\color{gray}\bfseries,
	identifierstyle=\color{black},
	sensitive=false,
	comment=[l]{//},
	morecomment=[s]{/*}{*/},
	commentstyle=\color{gray}\ttfamily,
	stringstyle=\color{red}\ttfamily,
	morestring=[b]',
	morestring=[b]"
}
\tiny\color{gray},
\newcommand\maineasychair[2]{#2}
\title{The Power of Regular Constraint Propagation (Technical Report)}
\author{
    Matthew Hague\inst{1}
    \and
    Artur Je\.z\inst{2}
    \and
    Anthony W.\ Lin\inst{3}
    \and
    Oliver Markgraf\inst{4}
    and
    Philipp R\"ummer\inst{5}
}
\institute{
    Royal Holloway University of London \\
    \email{matthew.hague@rhul.ac.uk}
    \and
    University of Wroclaw \\
    \email{aje@cs.uni.wroc.pl}
    \and
    University of Kaiserslautern-Landau and Max-Planck Institute \\
    \email{awlin@mpi-sws.org}
    \and
    University of Kaiserslautern-Landau \\
    \email{markgraf@cs.uni-kl.de}
    \and
    University of Regensburg and Uppsala University \\
    \email{philipp.ruemmer@ur.de}
}
\authorrunning{M. Hague, A. Jeż, A. W. Lin, O. Markgraf, P. R\"ummer}
\titlerunning{The Power of Regular Constraint Propagation}
\begin{document}

\maketitle

\begin{abstract}
    The past decade has witnessed substantial developments in string solving.
    Motivated by the complexity of string solving strategies adopted in existing
    string solvers, we investigate a simple and generic method 
    for solving string constraints: regular constraint propagation.
    The method repeatedly computes pre- or post-images of regular languages 
    under 
    the string functions present in a string formula, inferring more and more 
    knowledge about the possible values of string variables, until either a 
    conflict is found or satisfiability of the string formula can be 
    concluded. Such a propagation strategy is applicable to string constraints 
    with multiple operations like concatenation, replace, and almost all 
    flavors of string transductions. We demonstrate the generality and 
    effectiveness of 
    this method theoretically and experimentally.    
    On the theoretical side, we 
    show that RCP is sound and complete for a large fragment of string 
    constraints, subsuming both straight-line and chain-free constraints,
    two of the most expressive decidable fragments for which some modern string solvers 
    provide formal completeness guarantees.
    On the practical side, we implement regular constraint propagation within the open-source string solver OSTRICH. 
    Our experimental evaluation shows that this addition significantly improves OSTRICH’s performance and makes it competitive with existing solvers. In fact, it substantially outperforms other solvers on random PCP and bioinformatics benchmarks. The results also suggest that incorporating regular constraint propagation alongside other techniques could lead to substantial performance gains for existing solvers.
\end{abstract}

\section{Introduction} \label{sec:introduction}

Strings are a fundamental data type in many programming languages, heavily
utilized across popular languages like Python and JavaScript. These languages
are often equipped with rich string libraries, enabling a programmer to ``do
more with less code''. However, string manipulation is error-prone, leading to
security vulnerabilities such as cross-site scripting (XSS)
\cite{owasp13,owasp17,owasp21}.
Recent years have witnessed the emergence
of string solving as a promising automated method for reasoning about
string-manipulating programs, among others, the presence of security
vulnerabilities like XSS.
String solving has the theory of word equations as its foundation, a topic studied for over 50 years.
While many decidability results have been established (e.g., \cite{plandowski1999,jez2016,popl19,chain19}), their underlying proofs are often intricate and remain difficult to implement in practice.
The field also continues to face several long-standing open problems.

On the practical side of string solving, substantial progress has been made in recent years, with solvers adopting and optimizing foundational algorithms.
A prevalent method remains the strategy of \emph{splitting word equations} into multiple simpler equations, often combined with delayed handling of regular expressions~\cite{Makanin1977,Nielsen1917}, and implemented in modern tools such as Z3, Z3-alpha, Z3-Noodler, and OSTRICH~\cite{z3, z3alpha1,popl19,noodler-tool}.
This method, though effective, is far from straightforward: achieving good performance requires
carefully crafted heuristics that determine the most efficient way to split
equations and choose the branches to explore first, balancing the complexity and
efficiency of the solving process.
This is usually achieved by a combination
of other reasoning rules (e.g.\ involving length, regular languages, etc.).
Equation splitting is a fundamental method
used by most existing string solvers.

While equation splitting has been the dominant strategy so far, an alternative technique grounded in constraint propagation (CP)~\cite{rossi2006handbook} has also developed over time.
The first major application of this approach in string constraint solving can be traced back approximately 20 years ago \cite{cp03}, with early work on regular constraint propagation (RCP) techniques over fundamental string functions such as length, concatenation, prefix, suffix and regular constraints.
RCP, works by iteratively computing pre-images or post-images of regular languages
under the string functions present in a formula, propagating regular
constraints forwards and backwards. Propagation deduces increasingly more precise
information about the possible values of string variables,  until it either
encounters a conflict, indicating that the formula is unsatisfiable, or reaches
a point where the satisfiability of the string formula can be determined.

Over time, various solvers have incorporated RCP techniques to varying degrees, including STRANGER~\cite{stranger}, CertiStr~\cite{certistr}, OSTRICH~\cite{popl19,popl22}, and Z3-Noodler\cite{z3noodler,z3noodler1,noodler,noodler-tool}. However, these tools either lack completeness guarantees or support only restricted fragments.

Our work builds upon these prior efforts by refining and extending RCP techniques. We demonstrate that a subset of the proof system is sufficient to achieve completeness in the chain-free fragment, while also capturing more complex string functions such as \texttt{replaceAll}, transducers, reverse operations, and polyregular functions. Additionally, we provide an intuitive, algorithmic characterization of the chain-free fragment and analyze the limitations of the proof system, even when augmented with additional proof rules like equation splitting.

\OMIT{
We work in the context of the string solver OSTRICH~\cite{popl19}, the
2023 winner of the SMT-COMP QF\_S category of pure string
constraints~\cite{smt-comp}. OSTRICH implements several string solving
algorithms, including algorithms based on the backwards-propagation of
regular constraints through string functions. The proof system in
\cite{popl22} generalizes this paradigm to include both forwards- and
backwards-propagation. In this paper, we introduce and evaluate several strategies to apply the propagation rules systematically, turning the abstract calculus into a practical algorithm.
We show that regular constraint propagation, while conceptually simple,
can be used to construct solvers that are competitive with the state of the art,
and can also solve formulas that are beyond the capabilities of all
other implemented algorithms.
}

\maineasychair{\paragraph{Contributions.} 
We formalize RCP as a subset of the proof system from \cite{popl22}. 
%
%
Our \textbf{first contribution} is to show that RCP is complete for a large natural fragment of string constraints, which we call the \emph{orderable fragment}.
This fragment subsumes two of the largest known decidable
fragments called the \emph{chain-free string constraint fragment}
\cite{chain19} (over concatenation and rational transducer functions) and the 
\emph{straight-line fragment} \cite{popl19} (over a more general class of string functions satisfying
some semantic conditions). In doing so, we discover a substantially simplified
formulation of the chain-free fragment in \cite{chain19}.
Our \textbf{second contribution} is the implementation of RCP within the 
state-of-the-art solver OSTRICH \cite{ostrichsmtcomp}. We demonstrate a significant improvement in OSTRICH’s performance, and our experiments indicate that our RCP-based solver is the only string solver capable of solving random PCP benchmarks. 
Moreover, when combined with other solvers, our approach reduces the number of unsolved benchmarks by improvements ranging from 25\% up to 74\%, thereby complementing existing techniques.
Finally, given the power of regular constraint propagation, it is natural
to wonder if it can prove unsatisfiability for all unsatisfiable string
constraints with only word equations and regular constraints, which would give
rise to a simple alternative for complete algorithms (e.g.\ Makanin's algorithm
\cite{Makanin1977} and recompression algorithm \cite{jez2016}). 
Our \textbf{third contribution} is a result exhibiting this theoretical
limitation of RCP, even together with some other commonly used proof rules. 
In particular, we 
enrich RCP with two simple proof rules: (1) Nielsen's transformation
rule (which can be used to show decidability quadratic word equations
\cite{diekert}), and (2) The ``Cut'' rule \cite{popl22}. This essentially amounts to the
proof system of \cite{popl22} enriched with Nielsen's transformation.
We show that there are string
constraints with concatenation and regular constraints that are unsatisfiable
but cannot be proved using only RCP, Nielsen, and Cut. To the best of our 
knowledge, this constitutes one of the handful of impossibility results for a 
string solving strategy, especially with a rich set of proof rules.

\OMIT{
This submission has two main
contributions. As the \textbf{first contribution}, we define four
strategies for regular constraint propagation, starting with only one
direction of propagation at first and then combining them in
increasingly complex manner:
\begin{inparaenum}[(i)]
\item backwards propagation only,
\item forwards propagation only,
\item an initial forwards propagation phase followed by backwards propagation, and
\item a worklist-based technique alternating forwards and backwards propagation on updated variables.
\end{inparaenum}
\marginpar{ \sidematt{Why this list of strategies? Why not also
    backwards once and then forwards? Why not a naive iteration:
    forwards, backwards, forwards, backwards, \&c.?}}
Our \textbf{second contribution} is the implementation and
experimental evaluation of the four strategies within the OSTRICH
string solver.  We identify the handling of unsatisfiable benchmarks
as a particular challenge, as (judging from the SMT-COMP
results~\cite{smt-comp}) the top solvers are 2--3x as likely to
timeout on an unsatisfiable problems as on satisfiable ones. Our
evaluation therefore focuses on unsatisfiable benchmarks, and shows
that ... \textbf{(summarize results)}
Lastly, we propose a simple lower bound: we show that regular constraint propagation,
even with the cut rule and using Nielsen's transform, cannot solve some simple instances of
string constraints.
}

}{}

\paragraph{Organization.} The paper is structured as follows. Section~\ref{sec:motivating-example} gives three motivating examples.
Section~\ref{sec:proof_sys} formalizes regular constraint propagation as a proof system.
Section~\ref{sec:guarantee} discusses completeness and decidability.
Theoretical limitations are addressed in Section~\ref{sec:lower}.
Finally, Section~\ref{sec:expr} presents experimental results, Section~\ref{sec:related-work} discusses related work and Section~\ref{sec:conc} concludes with remarks on future work.

\section{Motivating Examples}\label{sec:motivating-example}

\subsection{String Sanitization}

To motivate the challenges in reasoning about real-world string-processing programs, we consider a transformation that normalizes decimal strings by first applying sanitization and then rewriting and restructuring the content.

Consider the function shown in Figure~\ref{fig:normalize-decimal}, which takes a string representation of a decimal number, removes leading and trailing whitespace, eliminates redundant zeros, and ensures a canonical form. The function would, for instance, normalize \texttt{"   000123.45000   "} to \texttt{"123.45"}.
This function can be fully expressed as a string constraint and solved within a constraint-solving framework. 
By adding an assertion that the output must be in normalized form, a solver can determine whether all inputs correctly transform into their canonical representation. 
This enables automated reasoning about the correctness and behavior of the function, allowing us to verify that any input will always yield a properly formatted decimal number.

\begin{figure}[tb]
	\begin{lstlisting}[language=JavaScript]
		function normalize(decimal) {
			decimal = decimal.trim();
			const decimalReg = /^(\d+)\.?( \d*)$/;
			var decomp = decimal.match(decimalReg);
			var result = "";
			if (decomp) {
				var integer = decomp[1].replace(/^0+/, "");
				var fractional = decomp[2].replace(/0+$/, "");
				if (integer !== "") result = integer; else result = "0";
				if (fractional !== "") result = result + "." + fractional;
			}
			return result;
		}
	\end{lstlisting}

        \vspace{-4ex}
	\caption{Normalize a decimal by trimming whitespace and removing leading and trailing zeros.}
	\label{fig:normalize-decimal}
\end{figure}

It should be noted that the string transformations in
Figure~\ref{fig:normalize-decimal} are challenging for existing SMT
solvers.  With the most widely used SMT solvers, Z3~\cite{z3} andf cvc5~\cite{cvc5}, the
\lstinline!decimal.match! step could only be modeled using the
operations~\texttt{str.indexof} and \texttt{str.substr}.  Z3, cvc5 and Z3-noodler
are not complete for string constraints involving those operations,
however, so that no guarantees can be given that the solvers are able
to verify desired properties of the \lstinline!normalize! function.
Related work implemented in the SMT solver OSTRICH~\cite{popl22} supports similar functionality using prioritized streaming transducers. These transducers can express the \lstinline!decimal.match! operation, but result in a more complex and comparatively slower solving procedure.
We show that RCP gives rise to a much simpler decision procedure that is still able to precisely model the transformations in Figure~\ref{fig:normalize-decimal}.

\OMIT{
Previous work~\cite{popl22} studied a similar function but did not include an initial sanitization step. 
direct regular expression matching and replacement but did not account for cases where user input contained extraneous whitespace. This omission makes the function less robust for real-world applications such as user input validation, database normalization, and security-sensitive string comparisons. Including a sanitization step significantly expands the range of possible inputs and introduces additional complexity in reasoning about the transformation.
}

The function can be expressed in terms of functional
transformations. The first step applies a trimming function
$f_{\text{trim}}$, which removes leading and trailing whitespace. The
result is then decomposed into an integer part $y$ and a fractional
part $ z$, separated by a decimal point. The transformation proceeds
by applying two additional functions: $g_{\text{lead}}$ to remove
leading zeros from $y$ and $ g_{\text{trail}}$ to remove trailing
zeros from $ z $.

This transformation can be expressed using the following constraints:

\[
\begin{aligned}
	& \text{decimal} \in \text{decimalReg} \land\mbox{} \\
	& \text{decimal} = f_{\text{trim}}(\text{input}) \land\mbox{} \\
	& \text{decimal} = \text{integer} \mathbin{++}  \texttt{"."} \mathbin{++} \text{fractional} \;\land\mbox{} \\
	& \text{result} = g_{\text{lead}}\text{(integer)} \mathbin{++}  \texttt{"."} \mathbin{++} g_{\text{trail}}\text{(fractional)} \land\mbox{} \\
        & \text{result} \not\in \text{correctFormat}
\end{aligned}
\]
where \( f_{\text{trim}} \), \( g_{\text{lead}} \), and
\( g_{\text{trail}} \) can all be expressed as functional transducers.
In this formulation, we use $++$ to denote string concatenation, and
assert that the result string does not match some regular expression
``$\text{correctFormat}$'' to verify that no execution exists in which
string normalization fails.

The constructed formula is unsatisfiable, but beyond the fragments
decided by most of today's string solvers. Solvers like Z3, cvc5, or
Z3-noodler are not able to handle
\( f_{\text{trim}} \), \( g_{\text{lead}} \), and
\( g_{\text{trail}} \) and again have to resort to functions like
\texttt{str.indexof} and \texttt{str.substr} to encode those
transformations. The solver OSTRICH~\cite{popl19,popl22} supports
reasoning over transducers and \texttt{replace}/\texttt{replaceAll},
but requires input formulas in the \emph{straight-line} fragment to
guarantee completeness. Our formula is not straight-line, since two
equations with ``decimal'' as the left-hand side exist, causing
OSTRICH to fail in solving the constraint. We will
show that the formula is in a new fragment proposed in this paper,
denoted the \emph{orderable} fragment, which generalizes both the
straight-line and chain-free fragments. Since RCP is complete for
orderable formulas, it gives rise to a decision procedure for formulas
like the one shown here and can easily show the formula to be
unsatisfiable.
We revisit this example in Section \ref{sec:benchmarks}, Table \ref{tab:motivating-examples}, where \solver{} is the only solver to handle it successfully.
\OMIT{
Other work, such as OSTRICH~\cite{popl19,popl22}, supports reasoning over transducers, \texttt{replaceAll}. However, the inclusion of a preprocessing step such as $ f_{\text{trim}} $ introduces an additional assignment to the variable \texttt{decimal}, which modifies the structure of the constraint. Specifically, this transformation pushes the formula outside the straight-line fragment, causing OSTRICH to fail in solving the constraint. 
Most existing string solvers are not equipped to encode the specific functions required by this example—such as trimming whitespace or selectively removing leading and trailing zeros—in a precise and modular way.
While certain operations may be expressible in specialized or approximate forms in earlier solvers, they typically fall outside the scope of modern SMT-LIB-based solver support.
We demonstrate that, despite moving beyond the straight-line fragment, this formula remains decidable within our proposed fragment, which generalizes both the straight-line and chain-free fragments.
}

\maineasychair{\subsection{Post's Correspondence Problem}
\label{ex:pcp}
\emph{Post's Correspondence Problem (PCP)} is a well-known \textit{undecidable problem} that asks whether a given set of \textit{domino-like word pairs} (tiles) can be arranged in sequence such that the concatenation of the top and bottom sequences produces the same string.

In general, PCP instances consist of \textit{multiple dominos}, and the word lengths on either side of a domino may differ. However, even for highly restricted cases, the problem remains undecidable.

To illustrate, consider the following \textit{simplified instance}, consisting of a \textit{single} domino: {$\frac{10}{01}$}.
This means we are given the tile  {$\frac{10}{01}$} and need to determine whether any non-empty sequence of these tiles produces the same string on both top and bottom.

In string constraint solving, we encode this problem using a \textit{string variable} $x$ that selects repeated instances of this domino. Specifically:
\begin{equation*}
	\begin{aligned}
		&\text{x} \in \texttt{"2"}^+ \land 
		\text{y} = \texttt{replaceAll}(\text{x}, \texttt{"2"}, \texttt{"10"}) \; \land \\
		&\text{z} = \texttt{replaceAll}(\text{x}, \texttt{"2"}, \texttt{"01"}) \land \text{y} = \text{z}
	\end{aligned}
\end{equation*}
Here, $x$ is a \textit{non-empty string} over the symbol \texttt{"2"} (denoted \( \texttt{"2"}^+ \)), which represents repeated selections of the given domino. The \texttt{replaceAll} function \textit{maps} each occurrence of \texttt{"2"} to its corresponding \textit{top (\texttt{"10"}) and bottom (\texttt{"01"})} strings.
This formulation enforces that applying the same sequence of replacements to both top and bottom strings must yield the same result.
However, this instance is unsatisfiable—no sequence of replacements can equate the top and bottom string.
By propagating the regular constraints from $x$ forward onto $y$ and $z$, and using the fact that $y = z$, we find that any solution for $y$ and $z$ must lie in $(\texttt{"10"})^+ \cap (\texttt{"01"})^+$. Since this intersection is empty, the formula is unsatisfiable.
Notably, despite its simplicity, this small PCP instance is already beyond the reach of state-of-the-art string solvers. See Table~\ref{tab:motivating-examples} in Section~\ref{sec:benchmarks} for solver performance on this example.

\subsection{Reverse Transcription in Bioinformatics}

This example models a reverse transcription process inspired by bioinformatics~\cite{mount2004bioinformatics,compeau2015bioinformatics}. Here, an unknown RNA string $y$ is converted into a DNA string by applying a series of \texttt{replaceAll} operations that simulate nucleotide base pairing. In addition, the RNA string is required to contain a specific pattern.

The following formula captures an instance of this problem (where the \texttt{...}, as in \texttt{"TGAGTAT..."} and \texttt{"ucuc..."}, indicate longer concrete strings omitted for brevity):
\begin{equation*}
	\begin{aligned}
		&\text{x} = \texttt{"TGAGTAT..."} \land \\
		&\text{y1} = \texttt{replaceAll}(\text{y}, \texttt{"u"}, \texttt{"A"}) \; \land \\
		&\text{y2} = \texttt{replaceAll}(\text{y1}, \texttt{"a"}, \texttt{"T"}) \; \land \\
		&\text{y3} = \texttt{replaceAll}(\text{y2}, \texttt{"g"}, \texttt{"C"}) \; \land \\
		&\text{x} = \texttt{replaceAll}(\text{y3}, \texttt{"c"}, \texttt{"G"}) \; \land \\
		&\text{z} = \texttt{"ucuc..."} \; \land \\
		&\texttt{str.contains(y,z)}
	\end{aligned}
\end{equation*}

In this instance, the solver must determine an RNA string $y$ such that, after sequentially replacing \texttt{"u"} with \texttt{"A"}, \texttt{"a"} with \texttt{"T"}, \texttt{"g"} with \texttt{"C"}, and \texttt{"c"} with \texttt{"G"}, the resulting DNA string $x$ matches a given constant. Moreover, $y$ must contain the RNA pattern specified by $z$, as enforced by \texttt{str.contains}(y, z). This example illustrates another application of complex string transformations in a biologically motivated context. Importantly, this instance falls within the straight-line fragment and can be solved by OSTRICH as well as by our solver. However, it also highlights the need for robust support for complex string functions such as \texttt{replaceAll}—a feature that receives only limited support in other state-of-the-art solvers. 
Solver performance on one instance of this benchmark is summarized in Table~\ref{tab:motivating-examples} in Section~\ref{sec:benchmarks}.

}{}


\section{Regular Constraint Propagation as a Proof System}
\label{sec:proof_sys}

\subsection{String Constraint Fragment}

In this work, we focus on a subset of string constraint language that is
most relevant for regular constraint propagation. In particular, we do not
deal with length constraints, which can already be effectively dealt with
\emph{after} regular constraint propagation is done, for instance, by
looking at the length abstraction of the equational constraints, regular
constraints, together with the length constraints (e.g.\ see
\cite{z3noodler,ostrichsmtcomp,popl16}).
%
For simplicity, we also assume that formulas are provided in a \emph{normal
form}. The syntax of this normal form is defined as follows:
\begin{equation*}
	\psi ~::=~ \phi \mid \psi \land \psi \qquad\qquad
	\phi ~::=~ x \in e \mid x = f(x_1, \dots, x_n)
\end{equation*}
This normal form is not restrictive, as any formula in the supported string
language can be systematically transformed into an equivalent formula in this normal form.

To clarify the components of this normal form, we denote string variables by $x, x_1, \ldots, x_n$ (and later using $y$ and $z$).
A string constraint $\psi$ is a conjunction of string formulas~$\phi$.
There are two types of string formulas: \emph{regular constraints} and \emph{equational constraints}.

\emph{Regular constraints} are regular membership tests~$x \in e$ that assert that the value assigned to $x$ must belong to the regular language defined by a regular expression (RegEx)~$e$.
The exact supported regular expression syntax is not expanded on here, as it does not affect decidability.
However, it may influence the applicability or efficiency of certain proof rules discussed later, depending on whether features like intersection or complement are supported.
In an implementation, one could use equivalent representations of regular
languages including (nondeterministic) finite automata.

\emph{Equational constraints} are formulas~$x = f(x_1, \ldots, x_n)$ that assert that the value of $x$ is equal to the result of applying string function $f$ to the values of $x_1, \ldots, x_n$.
We admit general string functions $f: (\ialphabet^*)^n \to (\ialphabet^*)$ (in
fact, even when $f$ is a relation), but our approach works especially under the
assumption of at least one of \emph{forwardability} and \emph{backwardability},
which will be expounded below. A plethora of string functions satisfy at
least one of these conditions including concatenation, \texttt{replace},
\texttt{replaceAll}, \texttt{reverse}, and functions defined by various flavors of transducers (e.g.~rational,
two-way, and streaming transducers) (e.g.~\cite{popl19,z3noodler,berstel,CCHLW18,AC10,AD11}).


\subsection{The Proof System}
We introduce the \emph{RCP proof system}, which is based on propagating regular languages. The system operates on \emph{one-sided sequents} (i.e.\ left-sided) and can be interpreted similarly to \emph{Gentzen-style sequents} \cite{Gentzen1935}.
A \emph{sequent} in this system is a string constraint, denoted as $ \Gamma = \{ \phi_1, \phi_2, \dots, \phi_n \}$, where each \( \phi_i \) is a string formula;
it should be understood as a conjunction $\phi_1\land \phi_2 \land  \dots \land \phi_n$.

For convenience, we sometimes write sequents as a list $ \Gamma, \phi_1, \dots, \phi_n $ which represents the union $\Gamma \cup \{ \phi_1, \dots, \phi_n \}$,
and when writing a string constraint, we sometimes write $\{\phi_1, \phi_2\}$ instead of $\phi_1 \land \phi_2$.
A string constraint $\phi_1 \land \cdots \land \phi_n$ naturally corresponds to a sequent $\{ \phi_1, \ldots, \phi_n \}$.

A \emph{proof} (or \emph{proof tree}) is a \emph{rooted tree} $(V, E)$ where the nodes $V$ represent sequents.
For the sake of presentation, we omit an explicit labeling function from nodes to sequents and assume that each node is uniquely identified within the tree. In particular, if the same sequent appears more than once in the proof tree, we treat each occurrence as a distinct node.

The root of the tree is located at the \emph{bottom} and corresponds to the sequent whose unsatisfiability we want to prove,
while the \emph{leaves} at the \emph{top} of the tree are called \emph{axioms}, and represent unsatisfiable instances.
We say a proof tree is \emph{closed} if and only if all leaves correspond to the empty sequent $\Gamma = \emptyset$.
The edge relation of the proof tree is given by the proof rules.
Except for the initial sequent, every sequent in the proof tree must be \emph{derived} by one of the proof rules shown in Figure~\ref{fig:proof-rules}.

A \emph{proof rule} in our system takes the form  $\frac{S_1  S_2 \dots  S_n}{S}$,
indicating that the sequent $S$ may be proved from a sequence of sequents $S_1, S_2, \dots, S_n$.
If a sequent $S$ is derived from sequents $S_1, \dots, S_n$ by a proof rule, then $S$ is the parent node of $ S_1, \dots, S_n$ in the proof tree.
Formally, the edge relation $E \subseteq V \times V$ is defined so that for every proof step, we have  $(S, S_1), \dots, (S, S_n) \in E.$

Our \emph{RCP proof system} consists of exactly the proof rules, as given in
Figure \ref{fig:proof-rules};
note that $\times$ in \texttt{[Bwd-Prop]} represents the Cartesian product.
Before explaining the proof rules, we note that the proof system is a fragment of the proof system introduced in~\cite{popl22}, from which we obtain that it is \emph{sound}.

\begin{lemma}[Soundness~\cite{popl22}]
	\label{lem:RCP_is_sound}
    The RCP proof system is sound.
    That is, the root of a~closed proof is an unsatisfiable sequent.
\end{lemma}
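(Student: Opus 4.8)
The plan is to establish soundness by showing that each proof rule in Figure~\ref{fig:proof-rules} is \emph{locally sound}, in the sense that the conclusion sequent $S$ is unsatisfiable whenever all of its premises $S_1, \dots, S_n$ are unsatisfiable. Since the statement explicitly says this is a fragment of the proof system of~\cite{popl22}, the cleanest route is to invoke the soundness already established there: every RCP rule is one of the rules of~\cite{popl22}, and soundness is preserved under taking subsets of a sound rule set, so RCP inherits soundness directly. I would state this as the primary argument, since the excerpt advertises exactly this provenance (``the proof system is a fragment of the proof system introduced in~\cite{popl22}, from which we obtain that it is sound'').

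For a self-contained argument, I would instead prove local soundness rule by rule, then lift it to whole proof trees by induction. First I would fix the relevant semantic notion: a sequent $\Gamma = \{\phi_1, \dots, \phi_n\}$ is \emph{satisfiable} iff there is an assignment of strings to variables satisfying $\phi_1 \land \cdots \land \phi_n$; it is \emph{unsatisfiable} otherwise. Local soundness of a rule $\frac{S_1 \; \cdots \; S_n}{S}$ means: if each $S_i$ is unsatisfiable then $S$ is unsatisfiable. Equivalently, by contraposition, any satisfying assignment for $S$ can be extended or restricted to a satisfying assignment for at least one premise $S_i$. For the propagation rules (forward and backward), this amounts to checking that computing a pre-image or post-image of a regular language under a string function $f$ exactly captures the constraint induced on the relevant variable, so that adding the derived regular membership does not change satisfiability. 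For the splitting/case rules, one checks that the premises enumerate a complete set of cases, so that every model of $S$ is a model of some $S_i$.

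The induction then runs over the structure of the proof tree. The base case is the leaves: in a \emph{closed} proof every leaf is the empty sequent $\emptyset$, which I would interpret as $\bot$ (an axiom representing an unsatisfiable instance), hence unsatisfiable by convention. For the inductive step, consider a node $S$ with children $S_1, \dots, S_n$ obtained by some rule; by the induction hypothesis each $S_i$ is unsatisfiable, and by local soundness of that rule, $S$ is unsatisfiable. Applying this to the root yields that the root sequent is unsatisfiable, which is exactly the claim.

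The main obstacle is the case analysis for the propagation rules, specifically verifying that the pre-image and post-image computations are semantically exact with respect to the possibly general (even relational) string functions $f: (\ialphabet^*)^n \to \ialphabet^*$ admitted by the fragment. One must argue that intersecting the current regular constraint on a variable with the image (or pre-image) of $f$ neither adds spurious solutions nor removes genuine ones; for \texttt{[Bwd-Prop]}, care is needed with the Cartesian product over the argument tuples, ensuring the propagated constraint on $(x_1, \dots, x_n)$ is a sound over-approximation in the right direction. Since the excerpt grants us that RCP is a fragment of~\cite{popl22}, the most economical resolution is to cite their soundness result for precisely these rules and thereby sidestep re-deriving the image computations, which is why I would lead with the citation-based argument and offer the rule-by-rule induction only as the underlying justification.
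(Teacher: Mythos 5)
Your proposal is correct and takes essentially the same approach as the paper: the paper gives no standalone proof of this lemma, justifying it exactly as your primary argument does, namely that RCP is a fragment of the proof system of~\cite{popl22}, so any closed RCP proof is a closed proof in that (sound) system and its root is therefore unsatisfiable. Your supplementary rule-by-rule local-soundness induction is a correct elaboration of why that inherited soundness holds, but it goes beyond what the paper itself records.
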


\begin{figure}[tb]
\texttt{[Close]}
\[
    \prftree[r]{
        if $L(e_1) \cap \dots \cap L(e_n) = \emptyset$
    }{
        \phantom{all done}
    }{
        \Gamma, x \in e_1,\dots, x \in e_n
    }
\]

\texttt{[Intersect]}
\[
\prftree[r]{
	if $L(e) = \bigcap_{i=1}^n L(e_i)$
}{
	\Gamma, x \in e
}{
	\Gamma, x \in e_1, \dots, x \in e_n
}
\]
\texttt{[Fwd-Prop]}
\[
    \prftree[r]{
        if $L(e) = f(L(e_1),\dots,L(e_n))$
    }{
        \Gamma, x \in e, x = f(x_1,\dots, x_n), x_1 \in e_1,\dots, x_n \in e_n
    }{
        \Gamma, x_1 \in e_1, \dots, x_n \in e_n, x = f(x_1,\dots, x_n)
    }
\]
\texttt{[Bwd-Prop]}
\[
    \prftree[r]{
        $\begin{array}{l}
            \text{if } f^{-1}(L(e)) \\
            \ = \bigcup^k_{i=1} (L(e^i_1)\times \dots \times L(e^i_n))
        \end{array}$
    }{
        \left\{
            \begin{array}{c}
                \Gamma,
                x \in e,
                x = f(x_1,\dots, x_n), \\
                x_1 \in e^i_1,\dots, x_n \in e^i_n
            \end{array}
        \right\}^k_{i=1}
    }{
        \Gamma, x\in e, x = f(x_1,\dots, x_n)
    }
\]
\caption{\label{fig:proof-rules}Key proof rules of RCP}
\end{figure}

We explain the proof rules.
The $\texttt{Close}$ rule terminates a branch when a contradiction is found.
That is, a variable is subject to an unsatisfiable combination of regular constraints.
The $\texttt{Intersect}$ rule computes the intersection of regular expressions.
Forwards and backwards constraint propagation derive new regular constraints on string variables that arise due to function application.
Forwards propagation transmits regular constraints from function inputs to function outputs, while backwards propagation propagates constraints from function outputs back to their inputs.
Details and examples are given below.
 
For example, when \( x = \texttt{concat}(y, z) \), we can combine any known regular constraints \( y \in e_1 \) and \( z \in e_2 \) to derive that \( x \in e_1 e_2 \).  
For readability, we will from now on write \( x = yz \) instead of \( x = \texttt{concat}(y, z) \) to denote string concatenation.
For this forward propagation, we require that the image of a function on
regular languages is also regular, which is the case when concatenating two different variables.
In general, the forward-propagated constraint is only an \emph{overapproximation} of the true constraint on $x$:
Consider $x=f( y ) \wedge y \in e$, where $f( y ) = yy$,
in this case, the obtained regular constraint $x \in e e$ is a superset 
of the true constraint, which is $\{ ww : w \in L(e)\}$.
Although overapproximations can be easily added in our RCP
implementation (and are in fact sound when proving unsatisfiability), we do not
admit this in our theoretical study and hence not in our proof system above.

In the backwards direction, when $x =yz$, a regular constraint $x
\in e$ can be split into a finite disjunction of conjuncts of the form
$y \in e_1^i$ and $y \in e_2^i$ for any $\{e_1^i\}_{i=1}^k$ and $\{e_2^i\}_{i=1}^k$
such that $L(e) = \bigcup_{i=1}^k L(e_1^i e_2^i)$.
For example, if $x \in a^*b^*$, one potential disjunction could be
\[
    (y \in a^* \wedge z \in a^*b^*) \vee
    (y \in a^*b^* \wedge z \in b^*).
\]
The same in fact is true for $x=f( y ) \wedge
y \in e$, with  $f( y ) = yy$.
The condition for a finite number of alternatives for $e_1$ and $e_2$
is expressed by the side-condition
$f^{-1}(L(e))  = \bigcup^k_{i=1} (L(e^i_1)\times \dots \times L(e^i_n))$
and causes branching in the proof trees (all $k$ alternatives are considered).
\OMIT{
Intuitively, the ``splitting point'' of the constraint $e$ on $x$ can appear in multiple places.
For example, if $e = ab$, the split may occur before the $a$, between the $a$ and $b$, or after the $b$.
}

As mentioned, propagations can be performed for functions that allow these
in at least one of the two directions. We formalize this in terms of
forwardability and backwardability.
We lift string functions $f: (\Sigma^*)^k \to \Sigma^*$ to act on languages in the standard way.


\begin{definition}[Forwardable]
	A function $f: (\Sigma^*)^k \to \Sigma^*$ is \emph{forwardable} if, for all regular languages $L_1, \ldots, L_k \subseteq \Sigma^*$, the image $f(L_1, \ldots, L_k)$ is also a regular language, and there exists an algorithm that, given representations of $L_1, \ldots, L_k$, computes a representation of $f(L_1, \ldots, L_k)$.
\end{definition}

\begin{definition}[Backwardable]
	A function $f: (\Sigma^*)^k \to \Sigma^*$ is \emph{backwardable} if, for all regular languages $L \subseteq \Sigma^*$, the preimage $f^{-1}(L)$ is a recognizable relation~\cite{berstel}, and there exists an algorithm that, given a representation of $L$, computes a representation of $f^{-1}(L)$.
\end{definition}

The recognizable relation in the definition corresponds exactly to the notion in \texttt{Bwd-Prop}:
$R \subseteq (\Sigma^*)^n$ is recognizable, if it can be represented as
$R = \bigcup_{i=1}^k L_{i,1} \times \cdots \times L_{i,k}$,
with each $L_{i,j}$ being a regular language.
Note that such a representation is not unique.

The following lemmas summarize conditions on concatenation functions to be
forwardable.
\begin{lemma}
    Suppose $t = f(x_1,\ldots,x_n)$ is a term over the function symbol
    $\textnormal{\texttt{concat}}$ (possibly with string constants). Then, if each $x_i$
    appears at most once in $t$, then the string function associated with $t$
    is forwardable.
\end{lemma}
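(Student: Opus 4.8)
The plan is to reduce the claim to the effective closure of regular languages under concatenation, the only real content being an argument that exploits the ``at most once'' hypothesis. First I would put $t$ into a flat normal form. Since $\texttt{concat}$ is associative and the only other ingredients of $t$ are constant strings, $t$ can be rewritten as
\[
  t \;=\; w_0\, x_{i_1}\, w_1\, x_{i_2} \cdots x_{i_m}\, w_m,
\]
where each $w_j \in \Sigma^*$ is a (possibly empty) constant string and $x_{i_1}, \ldots, x_{i_m}$ lists the variable occurrences of $t$ from left to right. The hypothesis that each variable appears at most once means exactly that the indices $i_1, \ldots, i_m$ are pairwise distinct.

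The key step is to establish, for all regular $L_1, \ldots, L_n \subseteq \Sigma^*$, the factorization
\[
  f(L_1,\ldots,L_n) \;=\; \{w_0\}\cdot L_{i_1}\cdot \{w_1\}\cdots L_{i_m}\cdot \{w_m\},
\]
valid whenever every $L_i$ is nonempty; if some $L_i = \emptyset$ the lifted image is simply $\emptyset$, a case one checks and dispatches first. The inclusion $\subseteq$ is immediate from the definition of the lifted function. For $\supseteq$ I would use distinctness: given $u_j \in L_{i_j}$ for each occurrence $j$, the assignment $x_{i_j} := u_j$ is well defined precisely because the $i_j$ are pairwise distinct, so no variable is forced to take two values; any variable not occurring in $t$ is assigned an arbitrary element of its (nonempty) language. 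This assignment witnesses membership of $w_0 u_1 w_1 \cdots u_m w_m$ in the image. This is exactly where the ``at most once'' condition is essential: a repeated variable would force its two occurrences to agree, yielding a correlated image of the shape $\{u\,u : u \in L\}$, which is in general non-regular.

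Finally, I would invoke the effective closure of regular languages under concatenation. Each singleton $\{w_j\}$ is trivially regular with an immediately computable automaton, and each $L_{i_j}$ is regular by assumption with a representation supplied as input. The standard NFA concatenation construction (linking the accepting states of one automaton to the initial state of the next by $\varepsilon$-transitions) then builds a representation of the right-hand side in polynomial time, establishing both that the image is regular and that it is computable, i.e.\ that $f$ is forwardable.

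The main obstacle is more a matter of care than of difficulty: it lies in verifying the factorization identity, specifically the $\supseteq$ direction, where distinctness of the variable occurrences is what licenses choosing witnesses for the individual factors independently. The remaining steps are routine automata-theoretic bookkeeping, and the structural normalization of $t$ can be justified by a simple induction on the term if one prefers not to appeal to associativity directly.
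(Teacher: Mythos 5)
Your proposal is correct and follows exactly the route the paper intends: the paper's proof is the single remark that the claim is ``a simple application of the closure of regular languages under concatenation,'' and your argument is precisely that application spelled out in full (flattening $t$, using the at-most-once hypothesis to justify the factorization of the image, and invoking the effective NFA concatenation construction). Your explicit handling of empty argument languages and of variables not occurring in $t$ fills in details the paper leaves implicit, but the approach is the same.
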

The proof is a simple application of the closure of regular languages under
concatenation. In the case when a variable appears multiple times in $t$, we
have also remarked above that the associated function is generally not
forwardable. That said, string functions using unrestricted concatenation and string constants are always
backwardable.
\begin{lemma}
    Suppose $t = f(x_1,\ldots,x_n)$ is a term over the function symbol
    \textnormal{\texttt{concat}} (possibly with string constants). Then,
    the string function associated with $t$
    is backwardable.
\end{lemma}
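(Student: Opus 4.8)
The plan is to reduce backwardability of a concatenation term to a finite bookkeeping over the runs of an automaton recognizing the target language. Fix a regular language $L$ and take a finite automaton $A = (Q, \Sigma, \delta, q_0, F)$ with $L(A) = L$. First I would write the term $t$ as a finite sequence of \emph{slots} $s_1 s_2 \cdots s_m$, where each slot $s_j$ is either a constant string (coming from the string constants in $t$) or one of the variables $x_1, \ldots, x_n$, with a given variable possibly occurring in several slots. A tuple $(w_1, \ldots, w_n)$ lies in $f^{-1}(L)$ exactly when the word obtained by substituting $w_i$ for every occurrence of $x_i$ and concatenating the slots in order is accepted by $A$.

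The key step is to guess the states the run of $A$ visits at the slot boundaries. For a sequence $\vec q = (p_0, p_1, \ldots, p_m)$ with $p_0 = q_0$ and $p_m \in F$, an accepting run passing through $p_{j-1}$ before slot $j$ and $p_j$ after it exists iff each slot is consistent with the transition $p_{j-1} \to p_j$. For a constant slot this is a single decidable membership check (does $A$ read the constant from $p_{j-1}$ to $p_j$); if it fails, I discard $\vec q$. For a variable slot $s_j = x_i$ this requires $w_i \in L_{p_{j-1}, p_j}$, where $L_{q,q'}$ denotes the regular, effectively computable set of words driving $A$ from $q$ to $q'$.

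Here is the only subtlety, and the reason the statement is nontrivial beyond the forwardable case: because a variable $x_i$ may appear in several slots, the same $w_i$ must simultaneously satisfy the constraints at all those slots. Once the boundary states $\vec q$ are fixed, however, these constraints are decoupled across variables, so the admissible values for $x_i$ form the regular language $C_i^{\vec q} := \bigcap_{j:\, s_j = x_i} L_{p_{j-1}, p_j}$ (with $C_i^{\vec q} = \Sigma^*$ when $x_i$ does not occur in $t$). Fixing $\vec q$ thus contributes exactly the product $C_1^{\vec q} \times \cdots \times C_n^{\vec q}$, and
\[
f^{-1}(L) = \bigcup_{\vec q}\; C_1^{\vec q} \times \cdots \times C_n^{\vec q},
\]
where $\vec q$ ranges over the finitely many boundary-state sequences with $p_0 = q_0$, $p_m \in F$, and all constant slots consistent. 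This is a finite union of Cartesian products of regular languages, i.e.\ a recognizable relation in the sense required by \texttt{Bwd-Prop}. Each $L_{q,q'}$ and each intersection is computable, and there are only $|Q|^{m}$ candidate sequences to enumerate, so the representation is produced effectively; hence $f$ is backwardable.

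I expect the main obstacle to be conceptual rather than computational: convincing the reader that repeated variables do not destroy recognizability. The natural worry is that a shared variable induces a correlation between several coordinates of the relation that no finite union of products can capture. The resolution, and the heart of the argument, is that quantifying over the boundary states \emph{first}, and only then reading off the per-variable constraints via intersection, removes exactly this correlation and restores the product structure within each summand.
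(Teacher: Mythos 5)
Your proof is correct, and it is precisely the folklore argument the paper invokes without spelling out (the paper only cites \cite{popl19,z3noodler} for this lemma): guess the automaton states at slot boundaries, and observe that once these states are fixed, repeated occurrences of a variable decouple into an intersection of regular languages $L_{p_{j-1},p_j}$, yielding a finite union of Cartesian products. Your handling of the one genuine subtlety---that repeated variables are tamed by quantifying over boundary states first and then intersecting per variable---is exactly right, so nothing is missing.
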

\label{sec:replaceable}
Proof of this lemma is folklore (and already exploited, e.g., in
\cite{popl19,z3noodler}). Other functions are also known to be forwardable
including the string-reverse function, \texttt{replaceAll(x,v,w)} --- which
replaces every occurrence of the
constant string $v \in \ialphabet^*$ in $x$ by the constant string $w \in
\ialphabet^*$ --- and more generally \emph{rational transducers} (e.g.\ see
\cite{berstel}), which are standard finite automata with input and output
(i.e.\ each transition has a label $(v,w) \in \ialphabet^* \times \ialphabet^*$,
and has the semantics of consuming $v$ in the input and then outputting $w$).
The same is no longer true for \texttt{replaceAll(x,v,y)} (i.e.\ where the
replacement string is a variable) \cite{CCHLW18}, and
two-way transducers (i.e.\ head of the
input/output tapes can move left/right), or equivalently, streaming
transducers \cite{AC10,AD11}. Interestingly, all of these aforementioned
functions are backwardable \cite{popl19}.

\OMIT{
\paragraph{Example of Non-Backwardable Function:} The function $f(x,y,z) = $ \texttt{replaceAll(x,y,z)} which takes strings $x,y,z$ and replaces each occurrence of $y$ in $x$ by $z$ is not backwardable.

In the following we will use that the following functions are forwardable and backwardable:

\begin{lemma}
	\label{lem:what_is_forwardable}
	If $f : \Sigma^* \to \Sigma^*$ is a rational function (so given by a rational transducer) then it is forwardable and backwardable.

	If $f : (\Sigma^*)^k \to \Sigma^*$ is defined as an (arbitrary) concatenation of its arguments (perhaps many times) then $f$ is backwardable.
	If additionally $f$ uses each of its arguments at most (in the concatenation) then it is forwardable.
\end{lemma}
All of the claims are folklore.
Note that the assumption that $f$ concatenates each of its arguments twice is essential:
$f(x) = xx$ applied on $\Sigma^*$ gives the language of all squares, which is well-known to be not regular (in fact: not context-free).
}

\begin{example} \label{ex:proof-tree}
	Given the string constraint:
	\[
	y = zu \land y = xx
	\land z \in b \land u \in a \ .
	\]
	we apply the proof rule \texttt{Fwd-Prop} to propagate the regular constraints $z \in b$ and $u \in a$ through the equational constraint $y = zu$, allowing us to deduce that $y \in ab$.

	Next, applying \texttt{Bwd-Prop} on $y \in ab$, we propagate this information backwards through $y = xx$. This leads to a contradiction in all branches of the proof.
	The proof tree illustrating this contradiction is shown in Figure \ref{fig:proof-tree}.

\begin{figure}
		
	\overfullrule=0pt
	\centering
	\scalebox{.75}{
		\parbox{\columnwidth}{
			\begin{prooftree}
				\AxiomC{}
				\RightLabel{$\texttt{[Close]}$}
				\UnaryInfC{$\Gamma, y = xx, y \in ba,  \mathbf{x \in b}, \mathbf{x \in a}$}

				\AxiomC{}
				\RightLabel{$\texttt{[Close]}$}
				\UnaryInfC{$\Gamma, y = xx, y \in ba, \mathbf{x \in \epsilon}, \mathbf{x \in ba}$}

				\RightLabel{$\texttt{[Bwd-prop]}$ on $y = xx$}
				\BinaryInfC{$y = zu, y = xx, z\in b, u \in a, \mathbf{y \in ba}$}
				\RightLabel{$\texttt{[Fwd-prop]}$ on $y = zu$}
				\UnaryInfC{$y = zu, y = xx,  z\in b, u \in a$}
	\end{prooftree}}}
\caption{Proof tree for Example \ref{ex:proof-tree}}\label{fig:proof-tree}
\end{figure}
\end{example}

%




\section{Completeness and Decidability}

\label{sec:guarantee}
\begin{figure}[h]
	\centering
	\begin{tikzpicture}[
		every node/.style={font=\small},
		every edge/.style={draw, thick},
		every loop/.style={min distance=10mm, out=130, in=50, looseness=8}
		]

		\node[red] (x) at (4,0) {$x$};
		\node[red] (y) at (2,0) {$y$};
		\node[red] (z) at (2,2) {$z$};
		\node[red] (u) at (4,1.5) {$u$};

		\draw[->] (y) -- (x) node[midway, below] {\scriptsize $(y = xx, \rightarrow)$};
		\draw[->] (z) -- (y) node[midway, left] {\scriptsize $(y = zu, \leftarrow)$};
		\draw[->] (u) to[bend right] (y) node[xshift = 1.3cm, yshift = 0.5cm] {\scriptsize $(y = zu, \leftarrow)$};

	\end{tikzpicture}
	\caption{Flow graph of $Eqs = \{y=zu,y=xx\}$ depicting flow of constraints.
		\label{fig:flowgraph}}
\end{figure}
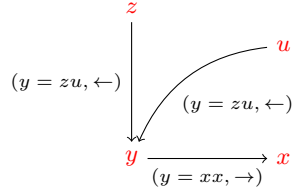
Here we provide a general completeness condition for RCP for
proving unsatisfiability.
By \emph{complete}, we mean that, for every unsatisfiable string constraint, there exists a closed RCP proof.
In general, this suffices to guarantee decidability
since if the constraint is satisfiable, then—by a standard argument in computability theory, based on the finiteness of the alphabet and the number of variables—one can enumerate all assignments of increasing length until a satisfying model is eventually found; otherwise, a closed RCP proof of unsatisfiability will be discovered.
%
Our completeness condition can
be understood in terms of finding a ``flow'' of regular constraints through
forwardable and backwardable functions. In a set
$Eqs =\{x_1 = f_1(y_1\ldots y_k),\ldots\}$ of
equational constraints, a variable $x$ that occurs \emph{only once} in $Eqs$
(say in equation $\phi \in Eqs$, on either left- or right-hand side)
is a potential ``source point'' of a flow. That is, the \emph{only} way $x$
can affect other variables is via the equation $\phi$. Therefore, by propagating
\emph{all} the regular constraints on $x$ through $\phi$ to other variables in
$\phi$, we may \emph{eliminate} $x$, as well as $\phi$, from $Eqs$, while preserving
satisfiability.
As an example, consider $Eqs = \{y = zu, y=xx\}$
along with regular constraints $z \in a^+ \wedge u \in b^+$. Since $z$ and $u$
occur only once and $f(z,u) := zu$ is forwardable, we could order the flow
of the regular constraints $z \in a^+ \wedge u \in b^+$ in the ``forward'' direction of
$y = f(z,u)$ resulting in the equisatisfiable constraint:
\[
	y = xx \wedge y \in a^+b^+.
\]
In turn, since $g(x) := xx$ is a backwardable function and $y$ occurs only
once, we could order the flow of the regular constraint $y \in a^+b^+$ in the
``backward'' direction resulting in the equisatisfiable constraint $\bot$.
That
is, we found the \emph{flow sequence} $(y=zu,\leftarrow),(y=xx,\rightarrow)$, where each pair depicts the equational constraint and the direction of propagation.
The propagation steps in a flow sequence can be visualized using a \emph{flow graph}. Given a flow sequence, we construct a directed graph where nodes represent variables. Each propagation rule $(\phi, d)$, with $\phi$ an equation like $y = f(x_1, \ldots, x_k)$, contributes edges from the variables on the source side of the propagation to those on the target side—i.e., from $x_1, \ldots, x_k$ to $y$ for a forward step, or from $y$ to $x_1, \ldots, x_k$ for a backward step. This flow graph helps visualize how information is propagated through equations. Notably, regular constraints do not influence the graph's structure, as it is determined solely by the equational relationships. An example of the flow graph for the constraint above is shown in Figure~\ref{fig:flowgraph}.
This intuition is formalized in the Marking Algorithm (Algorithm \ref{alg:marking}) and the
condition of \emph{orderable} set of equations, as given below.

\OMIT{
is the notion of \emph{effectively determined
variables}. To illustrate this, consider the

The intuition behind our
completeness condition is two-fold. First is the so-called \emph{deterministic
equation}. These are

an ability to ``order'' a
}

\SetKwComment{Comment}{/* }{ */}
\begin{algorithm}
\caption{Marking Algorithm\label{alg:marking}}
\KwData{A set of equational constraints $Eqs =\{x_1 = f_1(y_1\ldots y_k),\ldots\}$}
\KwResult{A flow sequence on the equational constraints, or $\bot$ if the constraints cannot be ordered}
$\mathrm{remainingEqs} \gets Eqs$\;
$\mathrm{markedVars} \gets \emptyset$\;
$\mathrm{flowSeq} \gets \epsilon$\;
\Repeat{$\mathrm{markedVars}$ is unchanged}{
    \ForEach{variable $x$ appearing exactly once in $\mathrm{remainingEqs}$}
    {
        $\mathrm{markedVars} \gets \mathrm{markedVars} \cup \{x\}$
    }
    \ForEach{$\phi = (x = f(y_1, \ldots, y_k)) \in \mathrm{remainingEqs}$ }
    {
        \If{$x \in \mathrm{markedVars}$ and $f$ is backwardable}
        {
            $\mathrm{flowSeq}
                 \gets \mathrm{flowSeq}, (\phi,\rightarrow)$\;
            $\mathrm{remainingEqs} \gets \mathrm{remainingEqs} \setminus \{\phi\}$\;
        }
        \ElseIf{$y_1, \ldots, y_k \in \mathrm{markedVars}$ and $f$ is forwardable}
        {
            $\mathrm{flowSeq} \gets \mathrm{flowSeq}, (\phi,\leftarrow)$\;
            $\mathrm{remainingEqs} \gets \mathrm{remainingEqs} \setminus \{\phi\}$\;
        }
    }
}
return $\mathrm{flowSeq}$ if $\mathrm{remainingEqs} = \emptyset$ else $\bot$
\end{algorithm}

\subsection{Orderable Constraints}

Given a set of equational constraints $Eqs$, a \emph{flow sequence} is a sequence of pairs $(\phi, d)$ where $\phi \in Eqs$ and $d \in \{\leftarrow, \rightarrow\}$.
Each such pair is called a \emph{propagation rule}.
A pair $(\phi, \leftarrow)$ indicates that the rule \texttt{Fwd-Prop} should be applied against the equational constraint $\phi$.
Conversely, $(\phi, \rightarrow)$ indicates a use of the \texttt{Bwd-Prop} rule.

\begin{definition}
    A set $Eqs$ of equations is said to be \emph{orderable} if, on input $Eqs$,
Algorithm~\ref{alg:marking} outputs a flow sequence. A string constraint $\varphi$
is orderable if the equational part is
    orderable.
\end{definition}
It can be easily checked that $\{y=zu,y=xx\}$ is orderable. An example of
a set of equations that is not orderable is $\{y=zu,y=xx,y=uv\}$; the
reason being that $u$ occurs on the right hand side of two equations,

The main result of this section is that the flow sequence returned by the Marking Algorithm gives a strategy to apply the forward and backward propagation
in RCP that is sound and complete.

\begin{theorem}
	\label{thm:marking_sound_an_complete_proof}
	Given a set of orderable constraints
	applying the forward propagation and backward propagation by RCP according
	to the flow sequence returned by Algorithm~\ref{alg:marking}
	is sound and complete.
	In particular, if such a string constraint is not satisfiable,
	the proof system will produce a proof with all closed branches.
\end{theorem}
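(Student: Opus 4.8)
The soundness half is immediate: since the RCP proof system is a fragment of the system of \cite{popl22}, Lemma~\ref{lem:RCP_is_sound} already gives that the root of any closed proof is unsatisfiable. So the plan is to concentrate on \emph{completeness}, i.e.\ to show that if the orderable constraint $\psi = Reg \wedge Eqs$ is unsatisfiable, then driving RCP by the flow sequence returned by Algorithm~\ref{alg:marking} yields a proof all of whose branches close.

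The core of the argument is a pair of \emph{elimination lemmas} justifying a single step of the flow sequence. For a step $(\phi,\leftarrow)$ with $\phi = (x = f(y_1,\dots,y_k))$, the marking algorithm guarantees that every $y_i$ occurs exactly once in $Eqs$ (hence only in $\phi$) and that $f$ is forwardable; I would first apply \texttt{Intersect} to collapse all regular constraints on each $y_i$ into a single $y_i \in e_i$, then apply \texttt{Fwd-Prop} to add $x \in e$ with $L(e) = f(L(e_1),\dots,L(e_k))$. The key claim is that $\psi$ is equisatisfiable with the constraint obtained by deleting $\phi$ together with the $y_i \in e_i$ and inserting $x \in e$: forwardability makes $L(e)$ exactly the set of values $f$ can produce, and since each $y_i$ is otherwise unconstrained and unused, any value of $x$ in $L(e)$ can be realised by choosing suitable $y_i \in L(e_i)$. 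Dually, for a step $(\phi,\rightarrow)$ the output variable $x$ occurs only in $\phi$ and $f$ is backwardable; after intersecting to $x \in e$ I would apply \texttt{Bwd-Prop}, using the recognizable decomposition $f^{-1}(L(e)) = \bigcup_{i=1}^{j} L(e^i_1)\times\cdots\times L(e^i_k)$ to branch into $j$ subgoals, and the claim becomes that $\psi$ is equisatisfiable with the disjunction of the $j$ branches (each deleting $\phi$ and the $x$-constraint and adding $y_l \in e^i_l$ for all $l$). Here the decomposition exactly enumerates the tuples with $f(y_1,\dots,y_k)\in L(e)$, and since $x$ is free elsewhere its value $f(y_1,\dots,y_k)$ can always be recovered. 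In both cases the constraints that \texttt{Fwd-Prop}/\texttt{Bwd-Prop} retain in the sequent sit harmlessly in the context $\Gamma$ (the rules are monotone in $\Gamma$), so they never block a later \texttt{Close}.

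With these lemmas in hand I would prove completeness by induction on the length $m$ of the flow sequence (equivalently, on $|Eqs|$). In the base case $m = 0$ the constraint is a conjunction of regular memberships; as distinct variables are independent, it is unsatisfiable iff some variable's memberships have empty intersection, which \texttt{Intersect} followed by \texttt{Close} detects, giving a closed proof. For the inductive step I apply the first propagation rule via the appropriate elimination lemma; the reduced constraint (or, in the backward case, each of the $j$ branches) has equation set $Eqs\setminus\{\phi_1\}$, which remains orderable and is ordered by the suffix $(\phi_2,d_2),\dots,(\phi_m,d_m)$. Equisatisfiability gives that $\psi$ is unsatisfiable iff the reduced constraint is (resp.\ iff every branch is), so the induction hypothesis supplies closed subproofs that assemble, under the relevant rule, into a closed proof of $\psi$.

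The main obstacle I anticipate is twofold. First, the equisatisfiability claims --- especially the backward one --- must be argued carefully: one has to verify that the recognizable-relation decomposition of $f^{-1}(L(e))$ captures \emph{exactly} the satisfying input tuples, and that eliminating a once-occurring variable never loses a solution. This is precisely where the ``appears exactly once'' marking condition is essential, since it is what makes the eliminated variable's value free. Second, the induction requires that deleting the first equation from an orderable set leaves an orderable set ordered by the remaining flow sequence; this is a monotonicity property of the marking procedure --- removing an equation can only enlarge the set of variables occurring at most once --- which I would record as a small standalone lemma about Algorithm~\ref{alg:marking} before running the main induction.
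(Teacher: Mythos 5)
Your proof is correct and rests on the same two pillars as the paper's own argument: soundness via Lemma~\ref{lem:RCP_is_sound}, and the forward/backward single-step equisatisfiability lemmas, which are exactly the paper's Lemma~\ref{lemma:fwd} and Lemma~\ref{lemma:bwd} (the paper retains the regular constraints on the eliminated variables rather than deleting them, but since those variables occur nowhere else this is immaterial). Where you differ is in how the per-step lemmas are assembled into a completeness proof. You induct on the length of the flow sequence, passing to the \emph{reduced} constraint with equation set $Eqs\setminus\{\phi_1\}$, and then lift the resulting closed proof back to the actual sequents by your weakening / monotonicity-in-$\Gamma$ observation; this forces you to prove the auxiliary lemma you flag at the end, namely that the suffix of the flow sequence is a valid flow sequence for the reduced equation set (removing an equation only enlarges the set of once-occurring variables), and strictly speaking it also requires restating the theorem for arbitrary valid flow sequences rather than ``the one returned by Algorithm~\ref{alg:marking}''. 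The paper avoids both issues by never re-running Algorithm~\ref{alg:marking} on a reduced set: it follows the single original run and maintains an invariant along the growing proof tree via a marking device --- every open branch marks exactly the equations still in $\mathrm{remainingEqs}$, and the constraint restricted to marked formulas stays equivalent (as a disjunction over branches) under each rule application; at the end, if some branch had all intersections nonempty, the invariant pushed down to the root would make the original constraint satisfiable, a contradiction. The trade-off is that your induction is more modular, while the paper's invariant bookkeeping needs no extra lemma about the marking algorithm at the cost of a more intricate in-line argument; both correctly handle the fact that the proof system never actually removes equations.
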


In particular, if the instance of orderable constraints is not satisfiable,
RCP can produce a proof of that.

\begin{corollary}
	\label{cor:sound_complete_orderable}
	Every unsatisfiable orderable constraint admits a proof in RCP
	system with the following rules: \textnormal{\texttt{Close}}, \textnormal{\texttt{Intersect}}, \textnormal{\texttt{Fwd-Prop}}, and \textnormal{\texttt{Bwd-Prop}}.
\end{corollary}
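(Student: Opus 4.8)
The plan is to obtain the corollary as an essentially immediate consequence of Theorem~\ref{thm:marking_sound_an_complete_proof}, supplemented by a careful accounting of exactly which proof rules the induced strategy invokes. Since the constraint $\varphi$ is orderable, by definition Algorithm~\ref{alg:marking} does not return $\bot$ but a flow sequence $(\phi_1, d_1), \ldots, (\phi_m, d_m)$ that eliminates every equation of the equational part $Eqs$. As $\varphi$ is unsatisfiable, Theorem~\ref{thm:marking_sound_an_complete_proof} already guarantees that executing this flow sequence yields an RCP derivation in which every branch is closed. What remains is therefore not to re-establish completeness, but to confirm that the derivation so produced uses \emph{only} the four rules \texttt{Close}, \texttt{Intersect}, \texttt{Fwd-Prop}, and \texttt{Bwd-Prop} --- that is, that no equation splitting, Nielsen, or Cut step is ever required.

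First I would replay the flow sequence step by step and read off the rule each step uses. A pair $(\phi, \leftarrow)$, with $\phi = (x = f(x_1,\ldots,x_n))$, calls for \texttt{Fwd-Prop}: by construction this step is scheduled only once $x_1,\ldots,x_n$ are marked and $f$ is forwardable, so each input variable already carries a single membership $x_i \in e_i$ and forwardability supplies a regular $e$ with $L(e) = f(L(e_1),\ldots,L(e_n))$, exactly meeting the side condition. Dually, $(\phi, \rightarrow)$ calls for \texttt{Bwd-Prop}, scheduled only when $x$ is marked and $f$ is backwardable, so a single $x \in e$ is available and $f^{-1}(L(e))$ is recognizable, again meeting the side condition and producing the finite branching into the conjuncts $x_1 \in e^i_1,\ldots,x_n \in e^i_n$. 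I would then point out the two auxiliary roles. Whenever propagation deposits a second regular constraint on a variable that already carries one (from the original formula or from an earlier step), \texttt{Intersect} collapses them into the single membership demanded as input by the next propagation step; and once a branch reaches a variable whose accumulated regular constraints have empty intersection, \texttt{Close} terminates it. No further rule ever fires.

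The main obstacle is the bookkeeping of the previous paragraph rather than any deep argument: I must verify that the side conditions of \texttt{Fwd-Prop} and \texttt{Bwd-Prop} are genuinely met at each scheduled step. This is precisely where forwardability and backwardability, together with the ``exactly once'' marking discipline of Algorithm~\ref{alg:marking} --- which ensures that a variable about to be eliminated really does hold all of its own regular constraints before it is propagated away --- do the work, and where closure of regular languages under intersection guarantees \texttt{Intersect} is always applicable on the chosen representations. Granting Theorem~\ref{thm:marking_sound_an_complete_proof}, closedness of every branch is inherited verbatim; the corollary's sole additional content is the explicit rule inventory, so once the rule-by-rule reading above is checked, the statement follows.
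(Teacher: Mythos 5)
Your proposal is correct and follows essentially the same route as the paper: the paper likewise presents this corollary as an immediate consequence of Theorem~\ref{thm:marking_sound_an_complete_proof}, whose proof already executes the flow sequence using only \texttt{Intersect} (applied to each variable before every propagation step), \texttt{Fwd-Prop}/\texttt{Bwd-Prop} (justified by forwardability/backwardability at each scheduled step, exactly as you check), and \texttt{Close} for contradictory branches. Your added rule-by-rule accounting is precisely the implicit content of the paper's argument, so nothing is missing.
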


\maineasychair{

To prove Theorem~\ref{thm:marking_sound_an_complete_proof}
we use the following Lemmas~\ref{lemma:fwd} and~\ref{lemma:bwd},
which show that the removal of the equational constraint by Algorithm~\ref{alg:marking} yields an equisatisfiable string constraint, in cases of forward and backwards propagation, respectively.
Note that the proof system does not actually remove the equations.

\begin{lemma}\label{lemma:fwd}
	If $z = f(y_1,\ldots, y_n)$ 
	is an equational constraint in a string constraint and $y_1,\ldots, y_n$ have exactly one occurrence in the equational part of string constraint
	and $f$ is forwardable then the string constraint obtained by propagating the constraints on $y_1,\ldots, y_n$ forward and removal of the equation
	are equisatisfiable.
	Formally
	\begin{align*}
		&\exists z \exists y_1,\ldots, y_n \;\phi(\bar{x}, z) \land z \in L_z \land z = f(y_1,\ldots ,y_n) \land \bigwedge_{i=1}^n y_i \in L_i\\
		\equiv\; & \exists z \exists y_1,\ldots, y_n  \; \phi(\bar{x}, z) \land z \in L_z \land z \in f(L_1,\ldots L_n) \land \bigwedge_{i=1}^n y_i \in L_i
	\end{align*}
\end{lemma}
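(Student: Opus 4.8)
The plan is to establish the stated $\equiv$ directly as a logical equivalence of the two existentially quantified formulas, proving each direction separately for an arbitrary fixed assignment to the free variables $\bar{x}$. The only semantic fact I need about $f$ is the definition of its lifting to languages, namely $f(L_1, \ldots, L_n) = \{\, f(w_1, \ldots, w_n) : w_i \in L_i \text{ for all } i \,\}$; forwardability itself enters only to guarantee that this set is \emph{regular}, so that $z \in f(L_1, \ldots, L_n)$ is expressible as a genuine regular constraint $z \in e$ and the right-hand side stays inside the fragment. The logical equivalence, by contrast, will hold for any $f$.

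For the forward direction (left implies right), I would take any witnesses $z, y_1, \ldots, y_n$ for the left-hand side. From $y_i \in L_i$ and $z = f(y_1, \ldots, y_n)$, the definition of the lifting gives immediately $z \in f(L_1, \ldots, L_n)$. Keeping exactly the same assignment then satisfies every conjunct of the right-hand side: $\phi(\bar{x}, z)$, $z \in L_z$, and $\bigwedge_{i=1}^n y_i \in L_i$ are literally unchanged, while $z = f(y_1, \ldots, y_n)$ has been replaced by the now-established $z \in f(L_1, \ldots, L_n)$. This direction requires no structural hypothesis.

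For the backward direction (right implies left), I would take witnesses $z, y_1, \ldots, y_n$ for the right-hand side, so $z \in f(L_1, \ldots, L_n)$. Unfolding the definition of the lifting yields preimage witnesses $w_1 \in L_1, \ldots, w_n \in L_n$ with $f(w_1, \ldots, w_n) = z$. I then keep $z$ and $\bar{x}$ fixed but reassign $y_i := w_i$. This makes $z = f(y_1, \ldots, y_n)$ hold and preserves each $y_i \in L_i$ because $w_i$ was chosen from $L_i$. The conjuncts $\phi(\bar{x}, z)$ and $z \in L_z$ survive the reassignment precisely because they do not mention any $y_i$ — and this is exactly where the hypothesis that each $y_i$ occurs exactly once in the equational part is used: its single occurrence is the equation $z = f(y_1, \ldots, y_n)$, so once that equation is set aside the remaining constraint is genuinely a formula $\phi(\bar{x}, z)$ in $\bar{x}$ and $z$ alone.

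The main obstacle, and the only place a hypothesis is essential, is ensuring that reassigning the $y_i$ in the backward direction does not disturb the rest of the constraint; this is handled entirely by the one-occurrence condition, which lets us treat the $y_i$ as witnesses constrained only by $y_i \in L_i$. Without it, some $y_i$ could also appear inside $\phi$, and reassigning it could break another conjunct. I would also note explicitly that the $y_i \in L_i$ conjuncts must be \emph{retained} on the right-hand side rather than dropped: they are what guarantee that the preimage witnesses $w_i$ supplied in the backward direction actually lie in $L_i$, and discarding them would make the equivalence fail.
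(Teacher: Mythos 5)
Your proof is correct and takes essentially the same route as the paper: the paper's entire argument is the one-line observation that $z \in f(L_1, \ldots, L_n) \equiv \exists y_1,\ldots, y_n \; \bigl(z = f(y_1,\ldots ,y_n) \land \bigwedge_{i=1}^n y_i \in L_i\bigr)$ holds by definition of the image, and your two witness-chasing directions are exactly the two directions of that equivalence, combined with the (correct) remark that the single-occurrence hypothesis is what lets the rest of the constraint be written as $\phi(\bar{x}, z)$ without any $y_i$. Your version merely spells out, in model-theoretic terms, what the paper compresses into one displayed formula.
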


\begin{lemma}\label{lemma:bwd}
	If $z = f(y_1,\ldots, y_n)$ is an equational constraint in a string constraint and $z$ has no other occurrence in equational part of the string constraint,
	then the string constraint obtained by propagating the regular constraints on $z$ backwards and removal of the equation
	are equisatisfiable.
	Formally:
	\begin{align*}
		&\exists z \exists y_1,\ldots, y_n \;\phi(\bar{x}, y_1,\ldots, y_n) \land z \in L_z \land z = f(y_1,\ldots ,y_n)
		\\
		\equiv  \bigvee_{j=1}^k &\exists z \exists y_1,\ldots, y_n \; \phi(\bar{x},  y_1,\ldots, y_n)
		\land z \in L_z \land \bigwedge_{i=1}^n y_i \in L_{i,j}
	\end{align*}
	where $f^{-1}(L_z) = \bigcup^k_{j=1} L_{1,j} \times \dots \times L_{n,j}$.
\end{lemma}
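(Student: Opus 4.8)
The plan is to prove the stated equivalence (which is stronger than, and hence implies, the claimed equisatisfiability) by quantifier reasoning: pushing $\exists z$ inward and then replacing the resulting preimage membership by the finite recognizable decomposition guaranteed by backwardability. The decisive structural observation, which is exactly the hypothesis, is that $z$ occurs in the equational part only in the equation $z = f(y_1,\ldots,y_n)$. Hence $z$ does not occur in $\phi$, and (after an application of \texttt{Intersect}) all regular constraints on $z$ are collected into the single membership $z \in L_z$. Fixing an arbitrary assignment to the free variables $\bar x$, the variable $z$ therefore appears on the left-hand side only in the two conjuncts $z \in L_z$ and $z = f(y_1,\ldots,y_n)$, so I may restrict the scope of $\exists z$ to those conjuncts.

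The first key step is the local equivalence
\[
\exists z\,\bigl(z \in L_z \land z = f(y_1,\ldots,y_n)\bigr)\ \equiv\ (y_1,\ldots,y_n)\in f^{-1}(L_z).
\]
When $f$ is a function this is immediate: $z$ is forced to equal $f(y_1,\ldots,y_n)$, and $z \in L_z$ becomes $f(y_1,\ldots,y_n)\in L_z$. When $f$ is a relation, $z = f(y_1,\ldots,y_n)$ is read as $(y_1,\ldots,y_n)\mathrel{R_f} z$, and the left-hand side asserts the existence of some $z \in L_z$ related to $(y_1,\ldots,y_n)$, which is precisely the definition of the preimage $f^{-1}(L_z)$. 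The second step invokes backwardability: by hypothesis $f^{-1}(L_z)=\bigcup_{j=1}^k L_{1,j}\times\cdots\times L_{n,j}$, so membership in $f^{-1}(L_z)$ is equivalent to the finite disjunction $\bigvee_{j=1}^{k}\bigwedge_{i=1}^n y_i \in L_{i,j}$.

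Combining these, the left-hand side becomes $\exists y_1\cdots\exists y_n\,\bigl(\phi(\bar x,y_1,\ldots,y_n) \land \bigvee_{j=1}^k \bigwedge_{i=1}^n y_i \in L_{i,j}\bigr)$; distributing the disjunction across the conjunction and across the existentials yields the disjunctive normal form of the right-hand side. To match the stated right-hand side verbatim I would reinstate the retained conjunct $z \in L_z$ under $\exists z$: this is harmless because whenever some clause $j$ is satisfiable we have $f^{-1}(L_z)\neq\emptyset$ and hence $L_z\neq\emptyset$, so a witness $z\in L_z$ always exists (and conversely dropping the conjunct is trivial). I expect the main obstacle to be purely in the careful treatment of $f$ as a possibly non-functional relation --- in particular justifying the local equivalence above and, in the backward direction, recovering from a right-hand witness a genuine $z$ with $z \in L_z$ and $(y_1,\ldots,y_n)\mathrel{R_f} z$ via the preimage definition rather than by evaluating $f$ --- together with the degenerate boundary cases $L_z=\emptyset$ and $k=0$, where both sides collapse to falsity and must be checked to agree.
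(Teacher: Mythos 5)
Your proof is correct and takes essentially the same approach as the paper's: both hinge on the equivalence between $\exists z\,\bigl(z \in L_z \land z = f(y_1,\ldots,y_n)\bigr)$ and membership of $(y_1,\ldots,y_n)$ in $f^{-1}(L_z)=\bigcup_{j=1}^k L_{1,j}\times\cdots\times L_{n,j}$, and both recover the witness $z=f(y_1,\ldots,y_n)\in L_z$ (possibly different from the one given) when passing from the right-hand side back to the left. Your quantifier-scoping formulation, relational reading of $f$, and explicit check of the degenerate cases $k=0$ and $L_z=\emptyset$ are just a more detailed rendering of the paper's two-direction model-based argument.
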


\begin{proof}[Proof of Theorem~\ref{thm:marking_sound_an_complete_proof}]
By Lemma~\ref{lem:RCP_is_sound} RCP is sound in general,
so in particular it is sound for some specific order of applying the proof rules.	

	By assumption that the constraints are orderable we obtain that Algorithm~\ref{alg:marking}
	terminates with a~flow sequence that gives an order on equational constraints
	(and information, whether we should propagate forwards or backwards).
	We will propagate them in this order.

	Note, that applying the backwards propagation creates many branches in the proof tree
	and when constructing the proof tree we need to take care of each open branch.
	However, we treat all open branches in the same way:
	we will apply the same rule to each open branch (and implicitly close branches on which contradiction was obtained).
	We apply the intersection rule for each variable, before each of the propagations (on each open branch), without mentioning it explicitly.	
	
	Observe that the proof rules do not modify nor remove the equational constraints.
	Therefore, the set of equational constraints is the same in each sequent, except for the axioms.
	However, as a tool of the construction, we mark some equational constraints in the open branches,
	and as an invariant:
	\begin{itemize}
		\item in each of the open branches the same set of equations is marked and it is the same
		as the set of equations $\mathrm{remainingEqs}$ from Algorithm~\ref{alg:marking} at the corresponding step
		\item after application of a rule, which changes the sequent $\Gamma$ to $\{\Gamma_i\}_{i=1}^k$,
		the constraint obtained by removing the unmarked constraints from $\Gamma$ is equivalent to the (disjunction of) constraints obtained by removing
		the unmarked constraints from each $\Gamma_i$.
	\end{itemize} 
	In particular, a variable occurs once in a marked equational constraints in some open sequent
	if and only if it occurs once in a marked equational constraints in each of the open sequents.

	The proof follows by applying the propagation according to the order returned by Algorithm~\ref{alg:marking}.
	Note that the above invariants are clearly satisfied at the beginning,
	when the root is the open sequent and we set all equations as marked
	and the Algorithm~\ref{alg:marking} has not removed any of the equational constraints.

	First, consider \texttt{Intersect} rule, which rewrites sequent $\Gamma$ as $\Gamma'$.
	We mark exactly the same set of equational constraints in $\Gamma'$ as they are marked in $\Gamma$.
	Also none equational constraints were changed. Hence the first invariant clearly holds.
	For the second invariant note that we replaced some regular constraints $x \in e_1, \ldots , x \in e_k$
	with $x \in e$, where $L(e) = \bigcap_{i=1}^k L(e_i)$,
	so clearly those are equivalent and so the second invariant holds as well.

	Consider the next equational constraint for the propagation (in the sequence returned by Algorithm~\ref{alg:marking}).
	If we are to propagate forwards for $z = f(y_1, \ldots, y_n)$,
	then variables $y_1, \ldots, y_n$ appear only once in the set of equational constraints in $\mathrm{remainingEqs}$ 
	and so in each of the open sequents they appear exactly once in the marked equations
	(by the invariant).
	Hence the string constraint (in an open sequent and restricted to marked equational constraints) is of the form as in Lemma~\ref{lemma:fwd},
	so of a form
	\[\Gamma, \Gamma', y_1 \in e_1, \dots, y_n \in e_n, z = f(y_1,\dots, y_n) \enspace ,\]
	where $\Gamma'$ are the unmarked equational constraints.
	Let the corresponding formula be 
	\begin{equation*}
	\exists z \exists y_1,\ldots, y_n \;\phi(\bar{x}, z) \land z \in L_z \land z = f(y_1,\ldots ,y_n) \land \bigwedge_{i=1}^n y_i \in L_i \enspace ,
	\end{equation*}
	where $\phi(\bar{x}, z)$ is the concatenation of all constraints in $\Gamma$.
	Using the Lemma~\ref{lemma:fwd} we can remove the equational constraint obtaining an equisatisfiable formula
	\begin{equation}
	\label{eq:fwd_prop_formula}
	\exists z \exists y_1,\ldots, y_n  \; \phi(\bar{x}, z) \land z \in L_z \land z \in f(L_1,\ldots L_n) \land \bigwedge_{i=1}^n y_i \in L_i \enspace .
	\end{equation}
	

	On the other hand, the proof system applies the \texttt{Fwd-Prop} rule, i.e.\
	\[
	\prftree[r]{
	}{
		\Gamma, \Gamma', z \in e, z = f(y_1,\dots, y_n), y_1 \in e_1,\dots, y_n \in e_n
	}{
		\Gamma, \Gamma', y_1 \in e_1, \dots, y_n \in e_n, z = f(y_1,\dots, y_n)
	} \enspace ,
	\]
	where $e$ denotes a regular expression such that $L(e) = f(L(e_1), \ldots, L(e_n))$.
	In the new sequent we additionally unmark $z = f(y_1 \ldots, y_n)$.
	Clearly, in all open branches the same sets of equational constraints
	are marked (as we unmark exactly the same equation $z = f(y_1 \ldots, y_n)$)
	and this the set that remain in Algorithm~\ref{alg:marking}. 
	
	It is left to observe that the sequent without the unmarked constraints $\Gamma, z \in e, y_1 \in e_1,\dots, y_n \in e_n$
	corresponds exactly to formula from~\eqref{eq:fwd_prop_formula},
	which we already know is equivalent to formula corresponding to old sequent without marked constraints.
	Hence, the second invariant was shown, which ends the proof for \texttt{Fwd-Prop}.
	
%
%
%
%
%
%
%

	So consider the case when the equational constraint $z = f(y_1, \ldots, y_n)$ is scheduled for backwards propagation.
	Then $z$ appears only once in the set of equational constraints in $\mathrm{remainingEqs}$ 
	and so in each of the open sequents it appears exactly once in the marked equations
	(by the invariant).
	Fix an open sequent $	\Gamma, \Gamma', z \in e, z = f(y_1,\dots, y_n)$, where $\Gamma'$ contains all unmarked equation constraints.
	Let the formula corresponding to $\Gamma, z \in e, z = f(y_1,\dots, y_n)$ be
	\[
	\exists z \exists y_1,\ldots, y_n \;\phi(\bar{x}, y_1,\ldots, y_n) \land z \in L_z \land  z = f(y_1,\ldots ,y_n) \enspace 
	\]
	where $L_z = L(e)$.
	Then it is of the form as in Lemma \ref{lemma:bwd}
	and we can remove the equational constraint obtaining an equisatisfiable formula:
	\begin{equation}
	\label{eq:prop_bwd_formula}
	\bigvee_{j=1}^k \exists z \exists y_1,\ldots, y_n \; \phi(\bar{x},  y_1,\ldots, y_n)
	\land z \in L_z \land \bigwedge_{i=1}^n y_i \in L_{i,j}
	\end{equation}
	

	On the other hand, the proof system applies the \texttt{Bwd-Prop} rule, i.e.\
	\[
   \prftree[r]{
}{
	\left\{
	\begin{array}{c}
		\Gamma, \Gamma',
		z \in e,
		z = f(y_1,\dots, y_n), \\
		y_1 \in e^i_1,\dots, y_n \in e^i_n
	\end{array}
	\right\}^k_{i=1}
}{
	\Gamma, \Gamma', z \in e, z = f(y_1,\dots, y_n)
}	\]
	where each $e_{i,j}$ is a regular expression such that $L_{i,j} = L(e_{i,j})$.
	In the new sequents we additionally unmark $z = f(y_1 \ldots, y_n)$.
	Clearly, in all open branches the same sets of equational constraints
	are marked (as we unmarked the same equation $z = f(y_1 \ldots, y_n)$)
	and this the set  $\mathrm{remainingEqs}$ in Algorithm~\ref{alg:marking}. 
	
	It is left to observe that the formula from~\eqref{eq:prop_bwd_formula}
	corresponds to the (disjunction of) new sequents without unmarked equational constraints
	$\left \{
		\Gamma, z \in e, y_1 \in e^i_1,\dots, y_n \in e^i_n
	\right\}^k_{i=1}$.
	Which ends the proof for all rules.
%
%

	As Algorithm~\ref{alg:marking} eventually removes all equations,
	at this step all open branches in the proof have all equational constraints unmarked,
	by the invariant.
	At each open branch we apply the \texttt{Intersect} rule for each variable.
	If the intersection is empty for some variable then we close the branch.
	If at some branch all intersections are non-empty,
	then we claim that initial formula is satisfiable:
	at this sequent we can find a substitution satisfying the regular constraints
	and there are no marked equational constraints in this sequent.
	Using the second invariant we conclude that on each sequent on the path to the root the constraint obtained by removing the unmarked constraints
	are satisfiable:
	the only needed observation is that if some sequent $\Gamma_j$ is satisfiable,
	then also $\left \{  \Gamma_i \right \}_{i=1}^k$ is satisfiable.
	In particular, in the root sequent there are only marked equations,
	so the whole input constraint is satisfiable,
	contradiction with the assumption that it is unsatisfiable.
	Thus, if the initial sequent is not satisfiable then we obtain that all branches are eventually closed when we follow an order of propagations returned by Algorithm~\ref{alg:marking},
	which shows that the proof rules are complete for unsatisfiable formulas.
	Note that this does not immediately yield a solution of the original string constraint
	for satisfiable constraints,
	as only equisatisfiability is guaranteed.
\end{proof}

}{}

\subsection{Straight-Line Constraints are Orderable}

The straight-line fragment of string constraints in~\cite{popl19} corresponds to constraints in which the equational part can be ordered with a flow sequence of the form $(\phi_1, \rightarrow), \ldots, (\phi_n, \rightarrow)$.
The fragment permits all string functions that are backwardable, which is called in~\cite{popl19} as a RegInvRel condition.
This includes a rich class of functions such as the transductions and \texttt{replaceAll} functions described in Section~\ref{sec:replaceable}.\footnote{The straight-line fragment also permits string relations that can be effectively expressed as recognizable relations, which can be reduced to regular constraints.}

\begin{corollary}
    RCP is sound and complete on straight-line constraints.
    In particular, every unsatisfiable straight-line constraint admits a proof in the RCP system with the following rules: \textnormal{\texttt{Close}}, \textnormal{\texttt{Intersect}}, and \textnormal{\texttt{Bwd-Prop}}.
\end{corollary}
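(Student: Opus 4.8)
The plan is to reduce the statement to Corollary~\ref{cor:sound_complete_orderable} by proving that every straight-line constraint is orderable and, crucially, that Algorithm~\ref{alg:marking} can be driven to a flow sequence consisting \emph{exclusively} of backward-propagation steps $(\phi,\rightarrow)$, so that \texttt{Fwd-Prop} is never invoked. Soundness is already given by Lemma~\ref{lem:RCP_is_sound}, so the real content is completeness together with this restriction of the rule set to \texttt{Close}, \texttt{Intersect}, and \texttt{Bwd-Prop}.

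I would begin by unfolding the straight-line definition: the equations admit an ordering $x_1 = f_1(\dots), \ldots, x_n = f_n(\dots)$ in which each left-hand variable $x_i$ is defined by exactly one equation, and every argument of $f_i$ is either an input variable (never a left-hand side) or some $x_j$ with $j < i$. Each $f_i$ is backwardable, this being exactly the RegInvRel condition of~\cite{popl19}. The key structural observation is that $x_n$, the output of the last equation, occurs exactly once in the equational part: it is defined only by $\phi_n$, and since no later equation exists to consume it as an argument, it appears nowhere else.

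The core of the argument is then an induction mirroring the outer loop of Algorithm~\ref{alg:marking}. Initially $x_n$ occurs exactly once, hence is marked; since $f_n$ is backwardable the algorithm takes the backward branch, emits $(\phi_n,\rightarrow)$, and deletes $\phi_n$ from $\mathrm{remainingEqs}$. Removing $\phi_n$ eliminates every occurrence of $x_{n-1}$ outside its defining equation $\phi_{n-1}$, so on the next pass $x_{n-1}$ occurs exactly once in $\mathrm{remainingEqs}$, is marked, and $\phi_{n-1}$ is removed by a further backward step. Iterating, the equations are removed in the order $\phi_n, \phi_{n-1}, \ldots, \phi_1$, each via backward propagation, so the algorithm terminates with $\mathrm{remainingEqs} = \emptyset$ and returns an all-backward flow sequence; the constraint is therefore orderable. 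Applying Theorem~\ref{thm:marking_sound_an_complete_proof} (equivalently Corollary~\ref{cor:sound_complete_orderable}) yields a closed proof for every unsatisfiable instance, and because this flow sequence triggers only \texttt{Bwd-Prop}, interleaved with the implicit \texttt{Intersect} steps and closed by \texttt{Close}, no \texttt{Fwd-Prop} application occurs.

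The step I expect to require the most care is confirming that the algorithm genuinely selects the backward branch at each stage rather than an incidental forward step: since it tests backwardability of a marked left-hand variable \emph{before} testing forwardability, it suffices to verify that when $\phi_i$ is processed its left-hand variable $x_i$ is already marked, which the induction guarantees. A secondary subtlety is that an input variable occurring exactly once may become marked early; I would argue this can never force a forward elimination of a defining equation out of order, since the backward peeling of $\phi_n, \ldots, \phi_1$ already accounts for every equation and input variables are never left-hand sides. With the all-backward flow sequence established, both the completeness claim and the restriction to \texttt{Close}, \texttt{Intersect}, \texttt{Bwd-Prop} follow immediately from the already-proved general theorem.
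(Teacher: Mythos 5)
Your overall reduction (straight-line $\Rightarrow$ orderable via an all-backward flow sequence $\Rightarrow$ invoke the general completeness result) matches the paper's intent, and your peeling argument---$x_n$ occurs exactly once, remove $\phi_n$ backwards, then $x_{n-1}$ occurs exactly once, and so on---is exactly the right reason why an all-backward flow sequence \emph{exists}. However, there is a genuine gap in the step you yourself flag as delicate: it is not true that Algorithm~\ref{alg:marking} ``can be driven'' to emit only backward steps, because the inner foreach loop processes \emph{every} equation of $\mathrm{remainingEqs}$ in \emph{every} pass, not only the equation whose left-hand side has just become marked. Concretely, take the straight-line constraint $x_1 = f_1(y) \wedge x_2 = f_2(x_1)$ where $f_1, f_2$ are rational transducers (hence both forwardable and backwardable). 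In the first pass the algorithm marks $y$ and $x_2$ (each occurs exactly once) but not $x_1$ (it occurs twice). When the foreach reaches $\phi_1$---and it must reach it under \emph{any} iteration order, since nothing removes $\phi_1$ before it is processed---the test ``$x_1 \in \mathrm{markedVars}$'' fails, so the \emph{ElseIf} branch fires: $y$ is marked and $f_1$ is forwardable, and the algorithm emits $(\phi_1, \leftarrow)$, i.e.\ a \texttt{Fwd-Prop} step. Thus your key verification, ``when $\phi_i$ is processed its left-hand variable $x_i$ is already marked,'' is false for every $i<n$ during the first pass; the If/ElseIf priority you appeal to only helps when the left-hand side of that \emph{same} equation is already marked, which your induction does not (and cannot) guarantee.

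The corollary is nevertheless true, and the repair is the paper's route: the straight-line ordering itself directly yields a valid all-backward flow sequence $(\phi_n,\rightarrow),\ldots,(\phi_1,\rightarrow)$, where validity means that at the moment each $\phi_i$ is propagated backwards its left-hand variable occurs exactly once among the not-yet-propagated equations---which is precisely your peeling induction. The proof of Theorem~\ref{thm:marking_sound_an_complete_proof} uses only this occurrence condition (via Lemma~\ref{lemma:bwd}) together with the stated invariants; it nowhere depends on the flow sequence being the literal output of Algorithm~\ref{alg:marking}. So rather than reasoning about the algorithm's execution, you should argue that the peeled sequence is a legitimate flow sequence to which the completeness argument applies verbatim; since that sequence contains only $(\phi,\rightarrow)$ entries, the resulting closed proofs use only \texttt{Close}, \texttt{Intersect}, and \texttt{Bwd-Prop}, as the statement requires.
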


\subsection{Chain-free Constraints are Orderable}
We show that orderable constraints subsume the class of chain-free constraints~\cite{chain19}.
In fact, when we restrict to functions allowed in the chain-free fragment then those two classes coincide;
in particular: unsatisfiable chain-free constraints admit a proof in RCP.

\begin{theorem}
		\label{thm:chain_free_is_orderable}

       A string constraint given by string terms and rational transducers,
       is chain-free if and only if it is orderable.
\end{theorem}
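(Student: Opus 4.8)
The theorem claims an equivalence between two fragments: the chain-free fragment of \cite{chain19} and the orderable fragment defined via Algorithm~\ref{alg:marking}, when restricted to string terms and rational transducers. Since rational transducers are both forwardable and backwardable, both directions of propagation are always available, so the Marking Algorithm's only obstruction to success is the "appears exactly once" condition governing which variables get marked. My plan is to prove the two implications separately, and in doing so extract an explicit combinatorial characterization of what it means for the Marking Algorithm to succeed.

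**The plan for each direction.**

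First I would recall the definition of the chain-free fragment from \cite{chain19}, which is phrased in terms of a \emph{splitting graph} (or dependency graph) built from the equational constraints: one forbids certain cyclic dependencies among the equations, where a "chain" is essentially a cycle in this graph. The key conceptual step is to observe that the Marking Algorithm is itself a fixpoint/elimination procedure: it repeatedly identifies a variable occurring exactly once, marks it, and removes the equation containing it (propagating in whichever direction the occurrence dictates). This is structurally identical to a \emph{peeling} or topological-elimination process. So the plan is to show that \textbf{the Marking Algorithm removes all equations if and only if the underlying dependency graph is acyclic in the chain-free sense}. For the direction ``chain-free $\Rightarrow$ orderable'': I would argue that if no chain (forbidden cycle) exists, then at every stage there must exist a variable occurring only once among the remaining equations — otherwise every variable occurring in the remaining equations occurs at least twice, and a counting/pigeonhole argument on the bipartite variable–equation incidence structure forces a cycle, contradicting chain-freeness. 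Hence the Marking Algorithm never gets stuck and empties $\mathrm{remainingEqs}$. For the converse ``orderable $\Rightarrow$ chain-free'': I would show the contrapositive, namely that a chain (cycle) in the dependency graph produces a set of equations in which every relevant variable occurs at least twice and which is closed under the elimination steps, so the Marking Algorithm can never mark a variable inside it and therefore terminates with $\mathrm{remainingEqs} \neq \emptyset$, returning $\bot$.

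**Where the main difficulty lies.**

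The technical heart — and the step I expect to be the main obstacle — is reconciling the precise bookkeeping of the two definitions. The chain-free condition in \cite{chain19} is stated over a particular notion of splitting/dependency graph whose edges encode both left-hand and right-hand occurrences of variables in transducer and concatenation constraints, and these equations may be two-sided (a variable may appear on the output side of one equation and as an input of another). By contrast, the Marking Algorithm treats each equation uniformly as $x = f(y_1,\dots,y_k)$ and keys everything off the single-occurrence condition. I therefore need to verify that the notion of ``variable occurring exactly once in $\mathrm{remainingEqs}$'' corresponds \emph{exactly} to being a source/sink that can be safely peeled in the chain-free graph, accounting correctly for the asymmetry between the $x$ side (backwardable, propagate $\rightarrow$) and the $y_i$ side (forwardable, propagate $\leftarrow$). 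The cleanest route, which I would adopt, is to define an explicit bipartite graph $G$ with one vertex per equation and one per variable, with incidence edges, and prove that the Marking Algorithm's peeling succeeds iff $G$ admits a perfect matching of equations to distinct variables (each equation matched to a uniquely-occurring variable it can be eliminated on), then show by a Hall-type argument that this matching condition is equivalent to the absence of chains. Establishing that the Hall/matching condition coincides with \cite{chain19}'s cycle-freeness — rather than merely being implied by it — is the delicate part, since one must rule out the degenerate possibility of a proper cyclic block that nonetheless admits incremental peeling, and conversely ensure no spurious obstruction blocks an acyclic instance. Because the theorem also promises a \emph{simplified} reformulation of the chain-free fragment (as advertised in the introduction), I would present the matching/peeling characterization as the canonical statement and derive the equivalence with the original \cite{chain19} definition as the payoff.
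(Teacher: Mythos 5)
Your high-level plan (two directions, viewing the Marking Algorithm as a peeling process against the splitting graph of \cite{chain19}) matches the paper's structure, but both of your key technical devices fail. First, the static matching characterization is false: for the orderable (indeed straight-line) system $x = \mathcal{T}_1(y)$, $z = \mathcal{T}_2(x)$, $w = \mathcal{T}_3(z)$, only $y$ and $w$ occur exactly once, so there is no perfect matching of the three equations to distinct uniquely-occurring variables, yet the Marking Algorithm succeeds --- after the first and third equations are removed, $x$ and $z$ \emph{become} uniquely occurring. The peeling is inherently dynamic, and no Hall-type condition on the original incidence structure captures it. Second, your characterization of ``stuck'' is wrong: the algorithm does not merely need \emph{some} variable occurring exactly once; to delete an equation it needs its left-hand variable to occur once, \emph{or all of} its right-hand variables to occur once each. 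The system $x = yz \wedge x = zw$ is stuck although $y$ and $w$ each occur exactly once (every equation has an unmarked left-hand side and an unmarked right-hand variable), so your pigeonhole argument, which starts from ``every variable occurs at least twice,'' does not cover all stuck configurations. A third, smaller issue: your opening premise that both propagation directions are always available ignores concatenation terms with repeated variables (e.g.\ $y = xx$), which are not forwardable; the argument survives only because marking all right-hand variables forces each to occur once in the term, hence forwardability --- a point the paper explicitly calls crucial.

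For comparison, the paper argues both directions directly on the splitting graph. For chain-free $\Rightarrow$ orderable it shows that if the algorithm is stuck, then (after merging the nodes of each right-hand side, which it proves cannot create cycles) every node has an outgoing and an incoming edge, so the finite graph contains a cycle, contradicting chain-freeness; this uses exactly the per-equation stuck condition above. For orderable $\Rightarrow$ chain-free it shows that every deletion performed by the algorithm removes only nodes with no incoming edges or no outgoing edges, so deletions can never destroy a cycle --- hence if a chain existed, the algorithm could never empty the equation set. Your direction-two sketch (a cycle is closed under elimination) is close in spirit to this second argument and could be repaired along those lines, but the matching reformulation and the pigeonhole step for direction one must be abandoned and replaced by the per-equation analysis. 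The normalization issue you flag (two-sided constraints $t = \mathcal{T}(t')$) is real; the paper handles it with a separate lemma showing that conversion to normal form preserves chain-freeness.
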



\begin{corollary}
	\label{cor:sound_complete_chain_free}
	RCP is sound and complete on chain-free constraints.
	In particular, every unsatisfiable chain-free constraint
	admits a proof in the RCP system with the following rules: \textnormal{\texttt{Close}}, \textnormal{\texttt{Intersect}}, \textnormal{\texttt{Fwd-Prop}}, and \textnormal{\texttt{Bwd-Prop}}.
\end{corollary}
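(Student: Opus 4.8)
The plan is to derive this corollary directly by chaining the two preceding results, so that no fresh proof-system argument is required. First I would dispatch soundness: by Lemma~\ref{lem:RCP_is_sound} the RCP proof system is sound in general, hence in particular on chain-free constraints, so the root of any closed proof is indeed unsatisfiable. The substantive content is therefore completeness, namely the existence of a closed proof for every unsatisfiable chain-free constraint.

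For completeness I would argue in two steps. Let $\varphi$ be an unsatisfiable chain-free constraint built from string terms and rational transducers. By Theorem~\ref{thm:chain_free_is_orderable}, under exactly this restriction on the admitted functions chain-freeness and orderability coincide; hence $\varphi$ is orderable, and the Marking Algorithm (Algorithm~\ref{alg:marking}) returns a flow sequence on its equational part. I would then invoke Theorem~\ref{thm:marking_sound_an_complete_proof}: applying forward and backward propagation in the order dictated by that flow sequence is sound and complete, so because $\varphi$ is unsatisfiable the resulting proof tree has all branches closed. This is precisely a closed RCP proof of $\varphi$.

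The final clause---that only \texttt{Close}, \texttt{Intersect}, \texttt{Fwd-Prop}, and \texttt{Bwd-Prop} are needed---I would read off from the shape of the constructed proof, exactly as in Corollary~\ref{cor:sound_complete_orderable}: each element of the flow sequence is realised by a single \texttt{Fwd-Prop} or \texttt{Bwd-Prop} application, \texttt{Intersect} is applied before each propagation to consolidate the several regular constraints accumulated on a variable, and \texttt{Close} terminates a branch as soon as an empty intersection appears. No other rule is invoked anywhere in the construction.

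The only point that genuinely needs verification, rather than being purely formal, is that every function occurring in a chain-free constraint is eligible for the Marking Algorithm, i.e.\ is forwardable or backwardable as the algorithm demands. I would justify this using the lemmas of Section~\ref{sec:proof_sys}: concatenation terms with string constants are always backwardable, and rational transducers are both forwardable and backwardable. Since these are exactly the functions permitted in the restricted chain-free fragment, the Marking Algorithm never stalls on an ineligible equation and the chain of implications goes through. I do not expect a real obstacle here; the difficulty is concentrated entirely in Theorem~\ref{thm:chain_free_is_orderable}, which this corollary merely consumes.
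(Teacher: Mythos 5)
Your proposal is correct and follows essentially the same route as the paper: the paper derives this corollary by combining Theorem~\ref{thm:chain_free_is_orderable} (chain-free $=$ orderable for constraints over string terms and rational transducers) with Corollary~\ref{cor:sound_complete_orderable}, which is itself just Theorem~\ref{thm:marking_sound_an_complete_proof} that you invoke directly. Your additional checks (soundness via Lemma~\ref{lem:RCP_is_sound}, forwardability/backwardability of the admitted functions) are already absorbed into those cited results, so nothing is missing.
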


Below we define chain-free constraints and show Lemmata needed for the proof of Theorem~\ref{thm:chain_free_is_orderable}.

\maineasychair{

\paragraph{Chain-free fragment}

The chain-free fragment of string constraints~\cite{chain19} is defined via a splitting graph:
Consider a string constraint $\psi = \bigwedge^n_{i=1} \phi_i$,
where each $\phi_j$ is of a form $t_{2j-1} = \mathcal{T}(t_{2j})$,
where $\mathcal T$ is a rational function (so given by a rational transducer) and $t_{2j-1},t_{2j}$ are string terms;
in particular this includes an equation $t_{2j-1} = t_{2j}$.
Each term $t_i$ is a concatenation of variables $x_{i,1}, \dots, x_{i,n_j}$ and constants,
one variable can occur many times in the concatenation.
We construct the splitting graph as follows:
\begin{description}
	\item[Nodes] The graph contains nodes $\{(j,i) | 1 \leq j \leq 2n, 1 \leq i \leq n_j\}$.
	The node $(2j-1, i)$ represents the $i$-th term on the left-hand side, while node $(2j, i)$ represent the $i$-th term on the right-hand side and they are labelled with the corresponding variables.
	\item[Edges] There is an edge from a node $p$ to node $q$ if there exists a node $p'$ (different from $q$) such that
	\begin{itemize}
		\item $p$ and $p'$ represents the nodes on opposite sides of the same constraint (say $\phi_i$), and
		\item $p'$ and $q$ have the same label.
	\end{itemize}
\end{description}
A chain in the graph is a sequence of edges of the form $(p_0,p_1),(p_1,p_2),\ldots, (p_n,p_0)$.
A splitting graph is chain-free if it has no chains;
a set of equational string constraints is chain-free if its splitting graph is chain-free;
a string constraint is chain-free if its equational part is chain-free.

Note that the original definition of chain-free fragment allowed rational relations,
so $t_{2j-1} \in \mathcal{T}(t_{2j})$ (or $\mathcal{T} (t_{2j-1}, t_{2j})$, depending on the preferred syntax);
our approach and methods generalize to this case, yet all relevant applications
and examples in the SMT-LIB benchmarks use rational functions only,
so for simplicity of presentation we use only rational functions.
On the other hand, the original algorithm showing the decidability of chain-free constraints~\cite{chain19}
assumed that the rational transducer is length-preserving,
which is not needed in the case of orderable constraints (we omit this assumption for ease of presentation).

}{}

\begin{lemma}
	\label{lem:marking_chain_free}
	A chain-free constraint $\psi$ is orderable.
\end{lemma}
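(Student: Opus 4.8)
The plan is to prove that a chain-free constraint $\psi$ is orderable by showing that the Marking Algorithm (Algorithm~\ref{alg:marking}) successfully eliminates every equational constraint. The key observation connecting the two notions is that the splitting graph of the chain-free fragment encodes exactly the obstructions to the marking process: a variable that occurs more than once across the equational part is precisely what can create a chain, and the Marking Algorithm stalls exactly when no variable occurs only once among the remaining equations. So I would aim to establish a correspondence between ``having no chain'' and ``the Marking Algorithm can always make progress until $\mathrm{remainingEqs} = \emptyset$''.

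First I would set up notation translating each equational constraint $\phi_j$ of the form $t_{2j-1} = \mathcal{T}(t_{2j})$ (where $t_{2j-1}, t_{2j}$ are concatenations of variables and constants) into the form $x = f(x_1, \ldots, x_k)$ used by the Marking Algorithm, noting that since each side is a concatenation of variables under a rational transducer, the associated functions are backwardable (by the concatenation lemma together with the fact that rational transducers are backwardable), so the backward direction of propagation is always available. This means the Marking Algorithm's only real requirement to remove an equation $\phi$ is that some variable in $\phi$ be currently marked and appear only once in $\mathrm{remainingEqs}$. I would then argue by contradiction: suppose the algorithm halts with $\mathrm{remainingEqs} \neq \emptyset$. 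At that halting point, every remaining equation has the property that no variable occurring in it appears exactly once among the remaining equations — otherwise that variable would get marked and its equation removed.

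The main step is to convert this ``every variable in every remaining equation occurs at least twice'' condition into the existence of a chain in the splitting graph. The idea is to trace a path: start from any node $(j,i)$ in a remaining equation labelled by some variable $x$; since $x$ occurs at least twice, there is another occurrence of $x$, which (through the edge-definition using the opposite side of a shared constraint) lets me extend an edge in the splitting graph; because the set of remaining equations is finite and every variable always has a ``partner'' occurrence, this path can be continued indefinitely and must eventually revisit a node, yielding a chain. Conversely, I would check that if the splitting graph has a chain, then the corresponding cyclic dependency prevents the algorithm from ever isolating a single-occurrence variable among those equations, so the algorithm fails — though for this particular lemma only the forward direction (chain-free $\Rightarrow$ orderable) is needed.

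The hard part will be making the graph-theoretic argument precise, specifically getting the bookkeeping of the edge definition exactly right: an edge goes from $p$ to $q$ when there is a third node $p'$ on the opposite side of $p$'s constraint sharing $q$'s label, so the ``at least twice'' condition must be carefully shown to guarantee the existence of such a mediating $p'$ at every step, and I must ensure the infinite path genuinely lives inside the remaining (un-eliminated) equations rather than wandering into already-removed ones. A subtle point is handling constant-labelled nodes and single-variable terms, which do not participate in chains, so I would argue these never obstruct marking. Once the pigeonhole-style extraction of a cycle is set up correctly, closing the loop to contradict chain-freeness is routine.
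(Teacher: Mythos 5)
Your overall strategy (chain-free $\Rightarrow$ the Marking Algorithm cannot get stuck, proved by turning a stuck configuration into a cycle in the splitting graph) matches the paper's, but your central step rests on a mischaracterization of Algorithm~\ref{alg:marking}, and this is a genuine gap. You claim that, since all functions involved are backwardable, ``the only real requirement to remove an equation $\phi$ is that some variable in $\phi$ be currently marked,'' and you deduce that at a stuck configuration \emph{every} variable in every remaining equation occurs at least twice. Both statements are false. Backward removal of $x = f(y_1,\ldots,y_k)$ requires that the \emph{left-hand-side} variable $x$ be marked; a marked right-hand-side variable is useless unless \emph{all} of $y_1,\ldots,y_k$ are marked \emph{and} $f$ is forwardable. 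Concretely, for $Eqs = \{x = y_1y_2,\; x = y_2y_2\}$ the algorithm marks $y_1$ and then halts with both equations remaining: $y_1$ is marked and occurs exactly once, yet its equation cannot be removed ($x$ is unmarked, so backward propagation does not apply, and $y_2$ is unmarked, so forward propagation does not apply). So your halting invariant does not follow from the halting condition, and the path-tracing argument built on it (``every variable always has a partner occurrence'') collapses: a path may reach a node labelled by a marked, single-occurrence variable such as $y_1$ and cannot be extended.

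The correct invariant at a stuck configuration is weaker: for each remaining equation, the LHS variable is unmarked, \emph{and at least one} RHS variable is unmarked. Deriving the second half is exactly where forwardability---which your proposal never invokes---is essential, and the paper explicitly flags it as the crux: if all RHS variables of a concatenation equation were marked, each would occur exactly once in the whole remaining system, hence at most once in the concatenation, so the associated function would be forwardable and the equation would be removed by forward propagation (similarly, $x = \mathcal{T}(y)$ with $y$ marked is removed because rational transducers are forwardable). With this corrected invariant, the cycle extraction must alternate sides rather than follow ``any'' repeated variable: from (the merged node of) the RHS of a remaining equation, follow the second occurrence of its LHS variable; from its LHS node, follow the second occurrence of an unmarked RHS variable. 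Both second occurrences lie in remaining equations, because ``unmarked'' means ``occurs at least twice in $\mathrm{remainingEqs}$,'' which also resolves the worry you raise about paths wandering into removed equations. A secondary gap: you translate constraints $t_{2j-1} = \mathcal{T}(t_{2j})$ into the normal form $x = f(\ldots)$ used by the algorithm as mere ``notation,'' but this introduces fresh variables and changes the splitting graph, so you also need the paper's Lemma~\ref{lem:normal_form_chain_free} stating that normalization preserves chain-freeness.
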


\begin{proof}
	As regular constraints do not affect whether the set of constraints is chain free, nor whether it is orderable,
	we ignore them for the purpose of the proof.
	
	By Lemma~\ref{lem:normal_form_chain_free} we can assume that the string constraint is in normal form,
	let it be $\psi := \bigwedge^m_{j=1} \phi_j$ and $\phi_j := x_j = f(y_1, \ldots, y_{k_j})$ or $x_j = \mathcal{T}(y)$,
	where $f$ is an (arbitrary) concatenation of its variables and $\mathcal{T}(y)$ is a rational transducer.
	Note that in general $f$ does not need to be forwardable.
	Assume that the string constraint $\psi$ is chain-free, thus the splitting graph defined by $\psi$ has no cycles.
	For the sake of contradiction, assume that Algorithm \ref{alg:marking} returns $\bot$.
	That is, all equational constraints in $Eqs$ are eliminated.
	Observe that merging all nodes $(2j,i)$ over various $i$, so all nodes corresponding to a single r.h.s., does not create a cycle:
	all nodes $(2j,i)$ over various $i$ have the outgoing edges to exactly the same nodes, so if there is an incoming edge to $(2j,i)$ and outgoing from $(2j,i')$ to $p$
	then there is also outgoing from $(2j,i)$ to $p$.

	So let us merge all nodes corresponding to one r.h.s.\ to one node.
	Consider an equational constraint, say $\phi_j$.
	Then the variable on its left-hand side, say $x$, is not marked, as otherwise $f$ (or $\mathcal{T}$) is backwardable and so the equational constraint would have been removed.
	Hence this variable appears in some other equational constraint, say $\phi_{j'}$.
	Then the node corresponding to the right-hand side of $\phi_j$ has an outgoing edge (to the node representing $x$ in $\phi_{j'}$).
	Moreover, the node representing the other side of $\phi_{j'}$ has an edge to the l.h.s.\ of $\phi_j$.
	Observe that at least one variable of the r.h.s.\ of $\phi_j$ is not marked: if all of them were marked and the equational constraint
	is $x = \mathcal{T}(y)$ then $\mathcal{T}$ is forwardable and so this equational constraints should be removed,
	if the equational constraint is $x = f(y_1, \ldots, y_k)$ then as all $y_i$ are marked, each occurs once in the concatenation defined by $f$, and so $f$ is forwardable,
	so it also should be removed.
	Thus, one of the variables on the r.h.s.\ of $\phi_j$ appears somewhere else;
	using the same argument we show that the l.h.s.\ of $\phi_j$ also has at least one outgoing edge and the r.h.s.\ has at least one incoming edge.
	Thus, each node has an outgoing edge and an incoming edge,
	therefore the simplified splitting graph has a cycle, and so also the splitting graph has a cycle.

	Note that the observation that if all variables of $f$ are marked then $f$ is forwardable is crucial for the argument.
\end{proof}

\begin{lemma}
	\label{lem:chain_free_marking}
	If a string constraint given by string terms and rational transducers is orderable then it is chain-free.
\end{lemma}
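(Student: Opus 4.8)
The plan is to prove the contrapositive: if a string constraint is \emph{not} chain-free, i.e.\ its splitting graph contains a chain, then it is not orderable, meaning Algorithm~\ref{alg:marking} returns $\bot$. Equivalently, I want to show that the existence of a cycle in the (simplified) splitting graph forces a nonempty set of equations that can never be eliminated by the marking process. As in the proof of Lemma~\ref{lem:marking_chain_free}, I would first normalize the constraint (invoking the normal-form lemma, which the excerpt refers to as Lemma~\ref{lem:normal_form_chain_free}) so that each $\phi_j$ is either $x_j = f(y_1,\ldots,y_{k_j})$ with $f$ a concatenation of its variables, or $x_j = \mathcal{T}(y)$ with $\mathcal{T}$ a rational transducer, and work with the simplified splitting graph obtained by merging all right-hand-side nodes of a single equation into one node (the excerpt already argues this merge introduces no new cycles and preserves existing ones).

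The core of the argument is a loop invariant on the execution of Algorithm~\ref{alg:marking}, tracking the correspondence between marked variables / removed equations and the structure of the splitting graph. Concretely, I would prove: \emph{at every iteration, the set of equations remaining in $\mathrm{remainingEqs}$ whose splitting-graph nodes still lie on a cycle is never reduced}. The key step is to show that an equation lying on a cycle can never satisfy the removal conditions of the algorithm. Suppose $\phi = (x = f(y_1,\ldots,y_k))$ (or $x = \mathcal{T}(y)$) sits on a cycle. Being on a cycle in the simplified graph means its left-hand-side node has an incoming edge and its right-hand-side node has an outgoing edge. An outgoing edge from the r.h.s.\ node witnesses that some $y_i$ occurs elsewhere in $\mathrm{remainingEqs}$, so not all of $y_1,\ldots,y_k$ can be marked-by-uniqueness; hence the forward-removal branch (\texttt{ElseIf}) cannot fire on $\phi$. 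Dually, an incoming edge to the l.h.s.\ node witnesses that $x$ occurs in another remaining equation, so $x$ cannot have been marked as occurring exactly once, and the backward-removal branch (\texttt{If}) cannot fire either. Therefore no equation on a cycle is ever removed.

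To close the contrapositive I would argue that the cycle persists: because removing equations off the cycle cannot destroy edges between cycle nodes (those edges are witnessed by variable occurrences that are all internal to the cycle equations), the subgraph induced by the cycle equations retains every node with both an incoming and an outgoing edge throughout the run. Consequently the algorithm terminates with $\mathrm{remainingEqs} \neq \emptyset$ (it still contains at least the cycle equations), and so returns $\bot$; the constraint is not orderable. Here I need the precise definition that an edge from $p$ to $q$ exists when there is $p'$ on the opposite side of the same constraint with the same label as $q$; I would translate each edge into a concrete ``this variable appears both here and there'' statement so that the marking conditions can be read off directly.

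The main obstacle I anticipate is the bookkeeping around the difference between the \emph{full} splitting graph of Section on the chain-free fragment and the \emph{simplified} merged graph used in the algorithm's analysis, together with the subtlety of \emph{when} during the iterative marking a variable counts as ``appearing exactly once.'' In particular, a variable's occurrence count is taken over $\mathrm{remainingEqs}$, which shrinks over time, so I must be careful that edges I rely on are witnessed by occurrences in equations that themselves stay on the cycle and are never removed — otherwise an edge could vanish and the cycle could, in principle, break. Establishing that the cycle equations form a self-supporting closed set (each cycle edge is witnessed by occurrences among cycle equations only) is the delicate point that makes the persistence argument go through, and it is essentially the converse counterpart to the ``crucial observation'' highlighted at the end of the proof of Lemma~\ref{lem:marking_chain_free}.
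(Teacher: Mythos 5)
Your overall strategy---proving the contrapositive via the invariant that equations whose nodes lie on a cycle are never removed by Algorithm~\ref{alg:marking}---is sound, and it is essentially the dual of the paper's own argument (the paper argues directly that every removal performed by the algorithm deletes only nodes without incoming edges and nodes without outgoing edges, together with their incident edges, so no removal can destroy or create a cycle; since the algorithm empties $\mathrm{remainingEqs}$ on orderable input, no cycle can have existed). However, your key blocking step misreads the splitting-graph definition, and as stated it fails. An edge leaves a node $p$ towards the other occurrences of the variables labelling the \emph{opposite} side of $p$'s constraint; hence an outgoing edge from a right-hand-side node of $\phi = (x = f(y_1,\ldots,y_k))$ witnesses another occurrence of the left-hand-side variable $x$, \emph{not} of some $y_i$. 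Consequently the two facts you invoke---``the l.h.s.\ node has an incoming edge'' and ``the r.h.s.\ node has an outgoing edge''---are both equivalent to ``$x$ occurs in another remaining equation''. They block the backward branch of the algorithm, but they do \emph{not} block the forward branch. Concretely, for $Eqs = \{x = f(y),\, x = g(z)\}$ with $f,g$ forwardable, both of your facts hold for both equations (since $x$ occurs twice), yet $y$ and $z$ are marked and both equations are removed by forward propagation; so your stated premises cannot imply non-removability.

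The repair is to use both edges incident to the node that actually lies on the cycle, with the correct translation: if the cycle passes through the l.h.s.\ node, its incoming edge gives ``$x$ occurs elsewhere'' (blocking the backward branch) and its outgoing edge gives ``some $y_i$ occurs elsewhere'' (blocking the forward branch); if the cycle passes through a r.h.s.\ node labelled $y_i$, its outgoing edge gives ``$x$ occurs elsewhere'' and its incoming edge gives ``$y_i$ occurs elsewhere''. With this correction the remainder of your argument---that cycle edges are witnessed only by occurrences inside cycle equations, so the cycle is self-supporting, and that occurrence counts only decrease over the run, so a variable occurring twice among surviving equations was never marked---goes through and yields the lemma. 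Note that the paper's formulation sidesteps this case analysis entirely: rather than arguing which branch could fire on a cycle equation, it observes that \emph{whatever} the algorithm removes consists only of source nodes and sink nodes, which can never lie on a cycle.
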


\begin{proof}
	As in Lemma~\ref{lem:marking_chain_free} we can ignore the regular constraints.
	
	Note that if $\mathrm{remainingEqs}$ is not chain-free then $\mathrm{remainingEqs} \cup \{\phi\}$ is not chain-free:
	each edge in splitting graph for $\mathrm{remainingEqs}$ exists in the splitting graph for $\mathrm{remainingEqs} \cup \{\phi\}$.
	
	We show that whenever the Marking Algorithm removes an equational constraint $\phi$ from $\mathrm{remainingEqs} \cup \{\phi\}$,
	then in the splitting graph this corresponds to a removal of some nodes without incoming edges and some nodes without outgoing edges together with the edges leading from and to them, respectively, which clearly cannot remove nor create a cycle.
	
	So suppose that $\phi$ is an equational constraint $x = f(y_1, \ldots, y_k)$ or $x=\mathcal{T}(y_1)$ is removed because $x$ is marked.
	Then the node corresponding to $x$ has no incoming edges (as there is no other copy of $x$) and each $y_i$ has no outgoing edges (as there is no other occurrence of $x$).
	
	Finally, observe that the edges that are missing in the splitting graph for $\mathrm{remainingEqs}$ with respect to splitting graph $\mathrm{remainingEqs} \cup \{\phi\}$ are the edges to nodes corresponding to $y_1, \ldots, y_k$ and edges from $x$.
	
	The analysis in the case when $y_1, \ldots, y_k$ are marked are done in a similar manner (this time there are no incoming edges to each of $y_1, \ldots, y_k$ and no outgoing edge from $x$).
\end{proof}

Lemmata~\ref{lem:marking_chain_free}, Lemma~\ref{lem:chain_free_marking} prove Theorem~\ref{thm:chain_free_is_orderable};
Corollary~\ref{cor:sound_complete_chain_free} follows from Corollary~\ref{cor:sound_complete_orderable}.


\section{Lower Bounds}
\label{sec:lower}

We now show that RCP paired with Nielsen's transform and the Cut rule 
(for regular constraints, described in detail below)
cannot solve some string constraints;
those string constraints of course are not chain-free.
The concrete example in question is
\begin{equation}
\label{eq:lower_bound_example}
xy xy = yxyx \; \land \; x \in a^+  \; \land \; y \in a^* b a^*
\end{equation}
It is easy to see that the constraints in~\eqref{eq:lower_bound_example} are not satisfiable, by looking at the first position of 
$b$ in any candidate solution.
Note that length-analysis can reduce this equation into two copies of
$xy = yx$, for which Nielsen's transform can detect unsatisfiability, even without the Cut rule or regular constraint propagation.

For the sake of completeness, we give proof system rules for the Cut rule and Nielsen transformation,
see Fig.~\ref{fig:proof-rules-cut-Nielsen}:
For the Cut rule we create two branches by adding an arbitrary regular constraint and its complement on a~variable.

\begin{figure}[tb]
\texttt{[Cut]}
\[
    \prftree[r]{
    	if $L(e')=  \Sigma^* \setminus L(e)$
    	}{
        \Gamma, x \in e \quad \Gamma, x \in e' 
    }{
        \Gamma
    }
\]
\texttt{[Nielsen]}
\[
\prftree[r]{}{
	\prfStackPremises
	{\Gamma[x/y], u[x/y] = v[x/y], \{y \in e\}_{e \in R_x \cup R_y}}
{\{	\Gamma[x/yx], x(u[x/yx]) = v[x/yx], x \in e_x^i, y \in e_y^i\}^k_{i=1}}
{\{	\Gamma[y/xy], u[y/xy] = y(v[y/xy]), x \in e_x^i, y \in e_y^i\}^\ell_{i=k+1}}
}
{	\Gamma, \{x \in e\}_{e \in R_x}, \{y \in e\}_{e \in R_y}, xu = yv  
}
\]
\caption{\label{fig:proof-rules-cut-Nielsen} Additional proof rules.}
\end{figure}

Nielsen's transform considers leading symbols of an equation and branches into three cases, depending on which symbol is a prefix of the other
(there is also a symmetric version for the suffix).
In Figure~\ref{fig:proof-rules-cut-Nielsen} the cases from top-to-bottom are: $x$ equals $y$, $y$ is a prefix of $x$, and finally $x$ is a prefix of $y$.
In particular, it makes substitutions such as $x \gets yx'$ (when $y$ is a prefix of $x$) and needs to compute the regular constraints
for $x'$.
After the substitution and computation of the constraints, $x'$ is renamed to $x$.
We use the notation $u[x/y]$ to denote that $y$ is substituted in place of $x$ in the term $u$.

The rule is somewhat similar to the rule for backward propagation,
in particular it has several branches.
This is not surprising, as for a fixed $y$ we can treat $yx'$ as function that prepends $y$ to its argument $x'$. We have a regular constraint ($e_x$) on its output and want to compute (propagate) the constraint
 $e_{x'}$ on its input.
In particular, we will replace the constraints $e_x, e_y$ with a family
$\{(e_y^i, e_x^i)\}$, one pair for each branch.
The actual formula for $\{(e_y^i, e_x^i)\}$ can be given.\footnote{
	We consider DFA $A$ recognizing $L(e_x)$ and consider the possible state $q$ reached by $A$ after reading $y$
	and require that $A$ starting in $q$ and reading $x'$ ends in a final state;
	alternatively, this can be handled using matrix transition functions or the semigroup-approach to regular languages.
}
However, for the purpose of the lower bound we do not need this exact formula
and we can in fact allow the proof system to overapproximate it --- 
it is enough to assume ``soundness of the substitution''.
Let $e_x, e_y$ be the intersection of all constraints on $x, y$,
i.e.\ $L(e_x) = \bigcap_{e \in R_x}  L(e), L(e_y) = \bigcap_{e \in R_y}  L(e)$, we assume
\begin{align}
	\label{eq:no_underapproximation}
	\exists x' \, x = yx' \land y \in L(e_y) \land x \in L(e_x) 
		&\implies
	\exists i \; y \in L(e^i_y) \land x' \in L(e^{i}_x) 
\intertext{i.e.\ we allow that the new constraints $e_y^i, e_x^i$ overapproximate the old one.
	However, we require a simple upper bound where $e_y^i$ only accepts values that also appear in $e_y$ and $e_x^i$ only accepts values that are correct for some valid value of $y$ (not necessarily accepted by $e_y^i$).
	I.e.\ we assume}
\label{eq:ey_over_approximate}
L(e_y^i) &\subseteq L(e_y)\\
\label{eq:ex_over_approximate}
L(e_x^i) &\subseteq \{v \: : \: \exists u \, uv \in L(e_x) \land u  \in L(e_y)\}
\end{align}

Note that in general Nielsen's rule (as given above) does not detect a solution in which initially some variable has an empty substitution.
However, such substitutions are forbidden by the regular constraints in~\eqref{eq:lower_bound_example}. 

We say that a set $A$ is co-finite in $B$, when $B \setminus A$ is finite.
Consider a condition:

\begin{enumerate}
\item if the regular constraint on $x$ defines a language $L_x$ and for $y$ a language $L_y$
then there is $m$ such that $L_x$ is co-finite in \akm{}
and there are languages  $L_{y,\ell}, L_{y,r}$ co-finite in \akm{}
satisfying $L_{y,\ell}bL_{y,r} \subseteq L_y$. \label{eq:condition}
\end{enumerate}

\begin{theorem}
	\label{thm:propagation_isnot_enough}
	Consider a proof system that uses: Nielsen's rule, regular constraint propagation and the Cut rule only and a proof for~\eqref{eq:lower_bound_example} using this system.
	If Condition~\ref{eq:condition} holds at some sequent, then after application of a proof rule it holds in at least one of the obtained sequents (perhaps for some other value of $m$).
\end{theorem}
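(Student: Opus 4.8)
The plan is to prove the statement by induction on the proof rule applied to a sequent satisfying Condition~\ref{eq:condition}, showing in each case that at least one premise again satisfies the condition, after possibly replacing the modulus $m$ by a multiple $m'$. Replacing $m$ by a multiple is harmless, since a set that is co-finite in \akm{} is also co-finite in $(a^{m'})^*$ whenever $m \mid m'$; this is exactly the freedom granted by the phrase ``perhaps for some other value of $m$''. The single enabling tool is the eventual periodicity of regular languages over a unary alphabet: for any regular $R$ there is a period $p$ with $a^n \in R \iff a^{n+p} \in R$ for all large $n$, so on any progression $(a^{m'})^*$ with $p \mid m'$ the membership of $a^{km'}$ is eventually constant in $k$. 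First I would record the two-dimensional analogue for $a^* b a^*$: running a DFA on $a^i b a^j$, the state reached is eventually periodic in $i$ and in $j$, so on a sufficiently fine grid $\{a^{im'} b a^{jm'}\}$ membership is eventually constant in $(i,j)$; equivalently, such a language eventually either contains or omits a whole co-finite sub-grid.

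The rules \texttt{[Close]} and \texttt{[Intersect]} are immediate. Condition~\ref{eq:condition} forces both $L_x$ and $L_y$ to be infinite, hence non-empty, so neither $x$ nor $y$ can carry a contradictory constraint set and \texttt{[Close]} is inapplicable to such a sequent (this case does not arise); \texttt{[Intersect]} leaves the intersected languages $L_x, L_y$ unchanged, so the condition survives with the same $m$. The rule \texttt{[Fwd-Prop]} only strengthens the output variable of a concatenation (the variable denoting $xyxy$ or $yxyx$), leaving $L_x$ and $L_y$ untouched, so the condition is preserved verbatim; moreover both forward images contain co-finitely the language \akm{} $b$ \akm{} $b$ \akm{}, keeping that output constraint non-empty. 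The only RCP rule that genuinely tightens $L_x, L_y$ is \texttt{[Bwd-Prop]}: the recognizable preimage of an output language containing \akm{} $b$ \akm{} $b$ \akm{} contains co-finitely the generic pairs $(a^{km}, a^{lm} b a^{jm})$, and among the finitely many branches I select one recovering the original shape by the same pigeonhole argument used below for \texttt{[Nielsen]}.

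For \texttt{[Cut]} on a variable with language $L(e)$, let $p$ witness the eventual periodicity of $L(e)$ (in the two-dimensional sense when cutting on $y$) and put $m' = \mathrm{lcm}(m, p)$. On $(a^{m'})^*$ when cutting on $x$, or on the corresponding grid when cutting on $y$, membership in $L(e)$ is eventually all-in or all-out, so exactly one of the two \texttt{[Cut]} branches retains a co-finite portion of the progression, respectively the grid. In that branch the new $L_x$, respectively the new witnesses $L_{y,\ell}, L_{y,r}$, are co-finite in $(a^{m'})^*$, while the untouched variable keeps its old witnesses, which remain co-finite after refining $m$ to $m'$. This settles \texttt{[Cut]}.

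The main obstacle is \texttt{[Nielsen]}. Inspecting the generic candidate $x = a^{km}$, $y = a^{im} b a^{jm}$, the all-$a$ word $x$ is a prefix of $y$ for large $i$, so the relevant premises are those of the third case ($x$ a prefix of $y$), where $y$ is replaced by $xy$: there the new $x$ keeps its old language, co-finite in \akm{}, and the new $y$ ranges over suffixes of $L_y$ after an $L_x$-prefix, i.e.\ again over \akm{}$\,b\,$\akm{}-shaped words. The difficulty is that this case yields finitely many branches carrying independent product constraints $L(e_x^i) \times L(e_y^i)$, and I must exhibit a \emph{single} branch whose $e_x^i$ is co-finite in $(a^{m'})^*$ and whose $e_y^i$ retains the full $b$-form. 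I would refine $m$ to the lcm $m'$ of $m$ and the periods of all the $e_x^i, e_y^i$; on the refined progression and grid each $L(e_x^i)$ is then eventually co-finite or finite, and likewise each $L(e_y^i)$. The no-underapproximation guarantee~\eqref{eq:no_underapproximation} says every large generic split $(a^{km'}, a^{lm'} b a^{jm'})$ is covered by some branch, while the upper bounds~\eqref{eq:ey_over_approximate}--\eqref{eq:ex_over_approximate} keep each branch inside the correct shape. A pigeonhole argument then forces a single good branch: if every branch had $L(e_x^i)$ finite or $L(e_y^i)$ finite on the grid, then fixing one large $k$ confines the coverage of that row to the finitely many branches with co-finite $e_x^i$, whose $e_y^i$ are all finite, contradicting that all large $(l,j)$ must be covered. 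The surviving branch yields witnesses $L_{y,\ell}', L_{y,r}'$ co-finite in $(a^{m'})^*$ with $L_{y,\ell}' b L_{y,r}' \subseteq L_y'$ and a new $L_x'$ co-finite in $(a^{m'})^*$, establishing Condition~\ref{eq:condition} in that premise. The remaining \texttt{[Nielsen]} cases ($x = y$ and $y$ a prefix of $x$) need no analysis, as we only require the condition in one premise.
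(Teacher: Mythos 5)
Your core machinery---generic witnesses on a refined grid, pigeonhole over the finitely many branches, and eventual periodicity of (near-)unary regular languages to convert ``infinitely many witnesses survive'' into co-finiteness---is the same engine the paper uses (its Lemma~\ref{lem:regular_representation} is Chrobak normal form, and its Lemma~\ref{lem:infinite_to_cofinite_lemma} is precisely your periodicity dichotomy), and your \texttt{[Cut]} and \texttt{[Nielsen]} cases match the paper's argument in substance. However, there are two genuine gaps, both concentrated in the propagation case. First, your \texttt{[Bwd-Prop]} argument presupposes that the output language being propagated backwards contains co-finitely \akm{}$b$\akm{}$b$\akm{}. That property concerns the auxiliary variable $z$ that the normal form forces you to introduce ($z = xyxy \land z = yxyx$); it is not part of Condition~\ref{eq:condition}, so your induction hypothesis says nothing about it, and nothing in your proof maintains it---between a forward and a backward propagation the proof may apply \texttt{[Cut]} or \texttt{[Intersect]} to $z$ and reshape or destroy that property. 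The same hole undermines your claim that \texttt{[Close]} ``does not arise'': $z$ could carry an empty intersection even while $L_x, L_y$ are infinite, and since \texttt{[Close]} has no premises, the theorem's conclusion would then fail at that sequent. The paper avoids all of this by analyzing a single combined forward-then-backward propagation rule that maps $(e_x, e_y)$ directly to the branch constraints $(e_x^i, e_y^i)$, so that no constraint on $z$ ever needs to be tracked; if you keep the two rules separate, you must strengthen the invariant to cover $z$ as well.

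Second, every one of your case analyses silently assumes knowledge of what the equation looks like at the current sequent: that $x$ still heads one side and $y$ the other (needed for your \texttt{[Nielsen]} case), and that each side is a concatenation containing exactly two occurrences of $y$ (needed for the two-$b$ shape of the forward images in the propagation case). This is not automatic---the equation changes under Nielsen steps---and it is exactly the content of the paper's Lemma~\ref{lem:Nielsen_rules_application}, proved by a separate induction showing that the equation at any sequent is $xyx^{r+\ell}xy = yxx^{r+\ell}yx$ and that the regular constraints stay inside $a^+$ and $a^*ba^*$. Without this invariant (or some substitute for it), the single-step statement you are proving concerns a sequent whose equational content you have not pinned down, and the propagation case in particular cannot be carried out. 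Both gaps are patchable with the paper's auxiliary lemmas, but as written your induction does not close.
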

%
%

In this way we obtain the main result of this section:
\begin{corollary}
	\label{cor:propagation_isnot_enough}
A proof system that uses: Nielsen's rule, regular constraint propagation and the Cut rule
only, cannot prove the unsatisfiability of~\eqref{eq:lower_bound_example}.
\end{corollary}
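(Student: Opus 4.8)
The plan is to treat Condition~\ref{eq:condition} as an invariant that propagates \emph{upward} through a hypothetical closed proof, invoke Theorem~\ref{thm:propagation_isnot_enough} for the propagation step, and derive a contradiction from the fact that a sequent satisfying the condition can never be an axiom. First I would verify the base case: Condition~\ref{eq:condition} holds at the root sequent~\eqref{eq:lower_bound_example} with $m=1$. Indeed $L_x = a^+$ is co-finite in $\akm = a^*$ (the only missing word is $\epsilon$), and taking $L_{y,\ell} = L_{y,r} = a^*$, both co-finite in $a^*$, gives $L_{y,\ell}\, b\, L_{y,r} = a^* b a^* = L_y$.

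Next, suppose for contradiction that a closed proof $\Pi$ of~\eqref{eq:lower_bound_example} exists in the system with Nielsen, RCP, and Cut. By Theorem~\ref{thm:propagation_isnot_enough}, whenever Condition~\ref{eq:condition} holds at a sequent $S$ and a proof rule is applied to $S$ (producing its premises $S_1,\dots,S_n$ higher up in $\Pi$), at least one premise $S_i$ again satisfies the condition, possibly for a new value of $m$. Starting from the root $S_0$ and repeatedly selecting such a premise, I obtain a path $S_0, S_1, S_2, \dots$ ascending $\Pi$ along which every sequent satisfies Condition~\ref{eq:condition}. Since $\Pi$ is a finite tree, this path terminates at a leaf $S_N$ that still satisfies the condition. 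But in a closed proof every leaf is an axiom produced by \texttt{[Close]}, which fires only when the intersection of the regular constraints on some variable is empty. The only variables that ever occur in the proof are $x$ and $y$: \texttt{[Nielsen]} renames its fresh variable back to $x$ or $y$, \texttt{[Cut]} merely adds regular constraints on existing variables, and the RCP rules introduce no new variables. At $S_N$ the intersection of the constraints on $x$ equals $L_x$, which is co-finite in the infinite set $\akm$ and hence non-empty, while the intersection on $y$ equals $L_y \supseteq L_{y,\ell}\, b\, L_{y,r} \neq \emptyset$, since $L_{y,\ell}$ and $L_{y,r}$ are co-finite in $\akm$. Thus \texttt{[Close]} cannot apply at $S_N$, so $S_N$ is not an axiom, contradicting that $\Pi$ is closed. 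Hence no closed proof exists, which is exactly the statement of the corollary.

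The genuinely hard part is Theorem~\ref{thm:propagation_isnot_enough} itself (the rule-by-rule preservation of the condition), which I take as given; the most delicate cases there are \texttt{[Bwd-Prop]} and \texttt{[Nielsen]}, where a single sequent branches and one must track how co-finiteness in $\akm$ survives a prefix substitution such as $x \gets yx'$, typically by passing to a new modulus $m$. Within the corollary proof itself, the only subtle points are \emph{(i)} confirming that co-finiteness in $\akm$ forces non-empty constraint intersections on every variable actually present, so that \texttt{[Close]} is blocked, and \emph{(ii)} verifying that no admissible rule secretly introduces a variable outside $\{x,y\}$ on which \texttt{[Close]} could instead fire. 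Both follow by inspecting the three rules, but they are precisely the places where the argument would collapse if, say, Cut were allowed to constrain auxiliary variables or if Nielsen left a fresh variable un-renamed.
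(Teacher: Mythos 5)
Your proposal is correct and follows essentially the same route as the paper: verify Condition~\ref{eq:condition} at the root, invoke Theorem~\ref{thm:propagation_isnot_enough} to propagate it upward through any putative proof, and observe that a sequent satisfying the condition has non-empty (indeed infinite) regular constraints and so can never be closed by \texttt{[Close]}, whence no closed proof exists. Your version merely makes explicit what the paper leaves terse (the base case with $m=1$, the finite-tree/leaf contradiction, and the check that \texttt{[Close]} is blocked); the only slight imprecision is the claim that RCP introduces no new variables --- the paper's normal form formally introduces an auxiliary variable $z$ for the equation, which Theorem~\ref{thm:propagation_isnot_enough} sidesteps by using a combined forward-backward propagation rule on $x,y$ directly, but this does not affect your argument since the constraint on such a $z$ would be a forward image of non-empty languages and hence also non-empty.
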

\begin{proof}
Condition \eqref{eq:condition} implies in particular that there are (infinitely many) words
satisfying the regular constraints, so in particular we cannot close a sequent satisfying the condition.

Clearly, condition~\eqref{eq:condition}	holds for the initial regular constraints
and so by Theorem~\ref{thm:propagation_isnot_enough}
there exists an infinite path of sequents, in particular, we cannot close the whole proof, which shows the claim.
\end{proof}

To prove Theorem~\ref{thm:propagation_isnot_enough} we use two auxiliary Lemmata:

\begin{lemma}
	\label{lem:regular_representation}
	Given regular sets $L_x \subseteq a^*, L_y \subseteq a^* b a^*$
	there is a number $m$ such that
	\begin{align*}
		L_x
		&=
		\bigcup_i  L_i & \text {where } L_i \in \left\{a^{k_i}, a^{k_i}\akm\right\} \text{ for each }i\\
		L_y
		&=
		\bigcup_i L_{i,\ell} b L_{i,r} & \text {where } L_{i,\ell} \in \left\{a^{k_{i,\ell}}, a^{k_{i,\ell}}\akm\right\}, L_{i,r} \in \left\{a^{k_{i,r}}, a^{k_{i,r}}\akm\right\} \text{ for each }i
	\end{align*}
	for appropriate values $k_i, k_{i,\ell}, k_{i,r}$ (over all $i$).
	In particular, each $ L_i,  L_{i,\ell}, L_{i,r}$
	is  either finite or co-finite in \akm.
	Furthermore, if such a representation exists for $m$, then it exists for each
	multiple of $m$.
	In particular, if we are given several such sets, then there is a representation of
	all of them for a common $m$.
\end{lemma}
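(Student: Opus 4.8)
The plan is to reduce everything to the classical structure theory of unary (single-letter) regular languages, namely that a set of natural numbers is the length-set of a unary regular language if and only if it is \emph{eventually periodic}. I would first treat $L_x$. Writing $L_x$ as the set of exponents $S_x = \{n : a^n \in L_x\}$, regularity of $L_x$ means $S_x$ is eventually periodic: there are a threshold $N$ and a period $p$ so that membership of $n \geq N$ in $S_x$ depends only on $n \bmod p$. Each exponent below $N$ lying in $S_x$ contributes a singleton $a^{k_i}$, and each residue class whose tail is contained in $S_x$ contributes a progression $a^{k_i}\akm$ with $m = p$ (taking $k_i$ to be the least element $\geq N$ in that class). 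This yields the required decomposition of $L_x$, with every piece either a single word or an infinite progression of period $m$.

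The main work is the two-dimensional case $L_y \subseteq a^* b a^*$. Here I would fix a DFA $\mathcal{A} = (\controls, \{a,b\}, \delta, q_0, \finals)$ for $L_y$ and exploit that every word of $L_y$ contains $b$ exactly once, so it has the shape $a^i b a^j$. Reading the prefix $a^i$ lands $\mathcal{A}$ in a state $p = \delta(q_0, a^i)$, reading $b$ moves it deterministically to $\delta(p,b)$, and acceptance then depends only on whether $\delta(\delta(p,b), a^j) \in \finals$. Crucially, the $b$-transition \emph{decouples} the dependence on $i$ from that on $j$. Writing $I_p = \{i : \delta(q_0,a^i) = p\}$ and $J_{q} = \{j : \delta(q, a^j) \in \finals\}$, I obtain
\[
\{(i,j) : a^i b a^j \in L_y\} \;=\; \bigcup_{p \in \controls} I_p \times J_{\delta(p,b)} ,
\]
a finite union of Cartesian products. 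Since the state sequences $(\delta(q_0,a^i))_i$ and $(\delta(q,a^j))_j$ are ultimately periodic, each $I_p$ and each $J_q$ is an eventually periodic subset of $\mathbb{N}$, hence decomposes into singletons and progressions exactly as for $L_x$. Substituting these decompositions into the products and reading them back as languages gives $L_y = \bigcup_i L_{i,\ell}\, b\, L_{i,r}$ with each factor of the required form.

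To obtain a single $m$ I would take $m$ to be a common multiple (e.g.\ the least common multiple) of all periods produced for $L_x$, for the sets $I_p$, and for the sets $J_q$. The claim that a representation for period $p$ refines to one for any multiple $cp$ is the elementary identity $a^{k}\,(a^{p})^* = \bigcup_{t=0}^{c-1} a^{k+tp}\,(a^{cp})^*$, which splits one progression of period $p$ into $c$ progressions of period $cp$ (singletons being unaffected). The same identity yields the ``furthermore'' statements: any valid $m$ may be replaced by a multiple, so when several such sets are given we simply pass to the least common multiple of their individual periods to get one common $m$. The finite/co-finite characterization is read off piecewise: singletons are finite, and each infinite progression $a^{k_i}\akm$ is co-finite inside the coset $a^{(k_i \bmod m)}\akm$ it belongs to.

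The step I expect to be the crux is the product decomposition for $L_y$: the key insight is that the single occurrence of $b$ forces a deterministic ``pivot'' state at the $b$-transition, which is precisely what makes the length-pair set a \emph{finite union of products} of one-dimensional eventually periodic sets, rather than a genuinely two-dimensional (e.g.\ diagonal) relation. Everything else is the standard unary decomposition together with the bookkeeping needed to align periods to a common $m$.
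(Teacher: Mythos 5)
Your proposal is correct, and at its core it exploits the same key insight as the paper's proof: the single occurrence of $b$ decouples the prefix lengths from the suffix lengths, so $L_y$ splits into a finite union of products $L_\ell\, b\, L_r$, after which everything reduces to the standard decomposition of unary regular languages plus lcm-bookkeeping to align periods. The difference is in how this decoupling is realized. The paper works with an NFA for $L_y$ and performs surgery: it makes one copy of the automaton per $b$-transition, keeping only that transition, so that each copy's language factors as $L'\,b\,L''$; the unary part is then handled by citing Chrobak normal form. You instead take a DFA and partition words $a^i b a^j$ by the pivot state $p=\delta(q_0,a^i)$, which gives the product decomposition $\bigcup_p I_p \times J_{\delta(p,b)}$ directly from determinism, with each $I_p$, $J_q$ eventually periodic because deterministic orbits on a finite state set are rho-shaped. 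Your route is arguably cleaner — no automaton copies are needed, and eventual periodicity replaces the appeal to Chrobak normal form (which concerns NFAs and matters mainly for size bounds, irrelevant here) — while the paper's NFA version avoids determinization and so would give better constants if one cared about the magnitude of $m$. One cosmetic point: your final remark that each infinite progression $a^{k_i}\akm$ is ``co-finite inside the coset $a^{(k_i \bmod m)}\akm$'' is not literally the lemma's claim; under the paper's definition (finite/co-finite \emph{in} $B$ meaning $A\cap B$ finite, resp.\ $B\setminus A$ finite), such a progression is co-finite in $\akm$ when $k_i \equiv 0 \pmod m$ and finite in $\akm$ (empty intersection) otherwise — a trivial case split your argument supports but should state.
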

In case of subsets of $a^*$ the proof follows from application of well-known characterizations of unary regular languages, say Chrobak normal form.
The proof for subsets of $a^*ba^*$ requires some additional simple arguments.

\begin{lemma}
	\label{lem:infinite_to_cofinite_lemma}
	If regular sets $L_x \subseteq a^*, L_y \subseteq a^* b a^*$
	are represented as in Lemma~\ref{lem:regular_representation}
	for a common $m$ and there are infinite sequences
	of words $x_i = a^{k_im} \in L_x$,
	$y_i = a^{k_{i,\ell}m}ba^{k_{i,r}m} \in L_y$
	where all $k_i$ are pairwise different,
	all $k_{i,\ell}$ are pairwise different
	and all $k_{i,r}$ are pairwise different,
	then $L_x$ is co-finite in \akm and there are $L_{y,\ell}, L_{y,r}$ co-finite in \akm such that $L_{y,\ell}bL_{y,r} \subseteq L_y$.
\end{lemma}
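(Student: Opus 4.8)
The plan is to prove Lemma~\ref{lem:infinite_to_cofinite_lemma} by showing that the representations from Lemma~\ref{lem:regular_representation} cannot contain infinitely many distinct ``finite-type'' pieces, so the infinite sequences must be absorbed by co-finite pieces. Recall that Lemma~\ref{lem:regular_representation} expresses $L_x = \bigcup_i L_i$ where each $L_i$ is either a singleton $a^{k_i}$ or a co-finite (in \akm) set $a^{k_i}\akm$, and similarly decomposes $L_y$ through left/right factors around the single $b$. The first observation is that this is a \emph{finite} union: a regular language has finitely many such pieces in its decomposition. This finiteness is the engine of the whole argument.

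First I would handle $L_x$. Since $x_i = a^{k_i m} \in L_x$ for infinitely many pairwise-distinct exponents $k_i$, and $L_x$ is a finite union of pieces each of which is either a single word $a^{k}$ or a co-finite set $a^{k}\akm$, infinitely many of the $x_i$ cannot all be covered by the finitely many singleton pieces (each singleton covers at most one word). Hence at least one co-finite piece $a^{k}\akm$ must cover infinitely many $x_i$; but a set containing a piece $a^{k}\akm$ is already co-finite in \akm. Thus $L_x$ is co-finite in \akm, giving the first conclusion directly.

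Next I would treat $L_y$. Here each element of $L_y$ in the relevant representation has the form $a^{p}ba^{q}$ with $a^{p}$ drawn from a left factor $L_{i,\ell}$ and $a^{q}$ from a right factor $L_{i,r}$, where each factor is again either a singleton or co-finite in \akm. The hypothesis gives words $y_i = a^{k_{i,\ell}m}b\,a^{k_{i,r}m}$ with the $k_{i,\ell}$ pairwise distinct and the $k_{i,r}$ pairwise distinct. By the same pigeonhole-on-finitely-many-pieces argument applied to the left exponents, some block index $i_0$ must have its left factor $L_{i_0,\ell}$ co-finite in \akm (a singleton could absorb only one left-exponent value), and by applying the argument to the right exponents I obtain some block $i_1$ whose right factor $L_{i_1,\ell}$ is co-finite. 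The delicate point is that I want a \emph{single} product $L_{y,\ell}\,b\,L_{y,r}$ with \emph{both} factors co-finite and with $L_{y,\ell}\,b\,L_{y,r}\subseteq L_y$; the co-finite left and right factors I just extracted may come from different blocks $i_0\neq i_1$. I would resolve this by choosing, within one block, the factor $L_{y,\ell}:=L_{i,\ell}$ and $L_{y,r}:=L_{i,r}$ for a block $i$ that simultaneously carries infinitely many of the $y_i$; since there are only finitely many blocks, some single block must itself contain infinitely many of the $y_i$, and within that block infinitely many distinct left-exponents force $L_{i,\ell}$ co-finite while infinitely many distinct right-exponents force $L_{i,r}$ co-finite. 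This block then yields $L_{y,\ell}bL_{y,r}\subseteq L_y$ as a subset of one of the union-pieces, completing the proof.

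The main obstacle I anticipate is precisely this last coordination step: ensuring the co-finite left and right factors live in the \emph{same} product term rather than being assembled from different terms of the union (a naive cross-product $L_{y,\ell}bL_{y,r}$ built from factors of distinct blocks need not be contained in $L_y$). The fix is to push the pigeonhole one level up and first find a single block receiving infinitely many $y_i$, then argue co-finiteness of both its factors from the distinctness of the left- and right-exponents \emph{restricted to that block}. A minor secondary point is verifying that distinctness of all $k_{i,\ell}$ (resp.\ $k_{i,r}$) is preserved when restricting to an infinite subsequence landing in one block, which is immediate since any infinite subsequence of pairwise-distinct values is still pairwise distinct.
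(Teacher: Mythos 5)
Your proposal is correct and follows essentially the same route as the paper's proof: pigeonhole on the finite union $L_y = \bigcup_j L_{j,\ell}\, b\, L_{j,r}$ to find a single block containing infinitely many $y_i$, then use the finite/co-finite dichotomy from Lemma~\ref{lem:regular_representation} to conclude that both factors of that block are co-finite in \akm{} (and analogously, more simply, for $L_x$). The ``coordination'' subtlety you identified and resolved by applying the pigeonhole at the level of blocks first is exactly how the paper handles it, so no gap remains.
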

The proof follows from simple observations applied to Lemma~\ref{lem:regular_representation}:
for appropriate $m$, if a regular language has an infinite intersection with $\akm$
($\akm b \akm$) then it is in fact co-finite in it (for subsets of $a^* b a^*$ please consult the statement).

To prove Theorem~\ref{thm:propagation_isnot_enough} we apply
Lemma~\ref{lem:infinite_to_cofinite_lemma}: given a sequent
with regular constraints $e_x, e_y$ on $x, y$
we choose appropriate infinite sequences in the regular languages $L(e_x) \cap \akm,
L(e_y) \cap \akm b \akm $, which is possible by the assumption that they satisfy condition~\eqref{eq:condition},
and show that after a rule application at least one of the sequents has regular constraints
$e_x', e_y'$, which have infinitely many elements from the sequences,
and so by Lemma~\ref{lem:infinite_to_cofinite_lemma} they in fact satisfy condition~\eqref{eq:condition}.
The choice of sequences depends on the applied rule; say, for a Cut rule,
in which we replace the regular constraint $e_y$ with $e_y, e$ and $e_y, \overline e$
observe that infinitely many elements of the sequence in $L(e_y)$ are in at least one of
$L(e_y) \cap L(e)$ or $L(e_y) \cap \overline{L(e)}$.

\section{Experiments}\label{sec:expr}

For evaluation, we have implemented a new string solver, \solver{}, which is built on top of OSTRICH. OSTRICH relies on a robust foundation---including the BRICS automata framework~\cite{brics}, transducer support, and pre-implemented backward propagation---that enabled us to straightforwardly extend the solver with RCP.

To isolate the impact of RCP, we implemented it on top of the \emph{basic OSTRICH} configuration, which combines backward propagation with equation splitting. In doing so, we disabled other advanced techniques used in OSTRICH’s SMT-COMP portfolio, such as cost-enriched automata (CEA)~\cite{cea} and the algebraic data type (ADT)-based solver~\cite{adt}. As a result, our evaluation considers two configurations of OSTRICH:  
(1) the \emph{SMT-COMP portfolio version}, which integrates basic OSTRICH, CEA-OSTRICH, and the ADT-based solver; and  
(2) the \emph{basic OSTRICH}, on which we directly implement and evaluate RCP.

This setup allows us to measure the isolated benefits of regular constraint propagation and compare it both to the unmodified basic configuration and the full SMT-COMP portfolio variant.

We evaluate \solver{}, as well as cvc5 \cite{cvc5}, OSTRICH \cite{ostrichsmtcomp}, Z3 \cite{z3}, Z3-alpha \cite{z3alpha1}, and Z3-Noodler \cite{z3noodler}, using their latest versions configured as in SMT-COMP 2024 \cite{SMT-COMP2024}.
%
%
The experiments were conducted on a MacBook Pro with 16 GB of RAM, running macOS Sonoma 14.5. The system was powered by an Apple M3 chip. The timeout for each experiment was set to 60 seconds.

\subsection{Proof Search}
\label{sec:proof-search}

To transform the proof system (Figure~\ref{fig:proof-rules}) into a practical algorithm, one must define a strategy for applying the various proof rules. 
In principle, this can be done using Algorithm~\ref{alg:marking} for the case of orderable string constraints, as the algorithm outputs a sequence of propagation rules that can be executed to prove (un)satisfiability. 
While this provides a complete propagation order for formulas within the orderable fragment, it does not generalize to formulas outside this fragment. Moreover, our goal is to define a uniform procedure that applies broadly, even when completeness cannot be guaranteed.

As a baseline, we describe a simple ``fair'' scheduling strategy (see Appendix~\ref{sec:proof-strategies}) that is easy to implement and ensures that every applicable propagation rule is eventually considered. 
To improve on this baseline, we employ a priority-based refinement of fair scheduling that guides propagation using a cost heuristic.
Our solver introduces a bias toward inexpensive and potentially informative propagation steps, e.g., forward propagation from small automata or from functions with constant arguments while at the same time delaying applications that involve large automata or only yield coarse approximations. 
This heuristic assigns priorities to applications of forward/backward propagation using a weighted sum of factors, including:
\begin{itemize}
	\item \textbf{Concrete arguments:} high priority is given to rules whose input or output language is a ground string;
	\item \textbf{Information gain:} backward rules propagating from universal automata (e.g., $x \in \Sigma^*$) receive lower priority;
	\item \textbf{Exactness:} forward rules involving functions without exact post-images (e.g., \texttt{replaceAll} with symbolic patterns) are penalized;
	\item \textbf{Cost:} rules are penalized in proportion to the combined size of input and result automata;
	\item \textbf{Fairness:} newer rule applications are penalized, preferring those that have waited longer in the queue.
\end{itemize}

\subsection{Benchmarks} \label{sec:benchmarks}
We evaluate our solver on three benchmark sets. The first benchmark set consists of 2,000 formulas randomly selected from the QF\_S and QF\_SLIA divisions of SMT-LIB 2024 \cite{smtlib2024}. To ensure representativeness, we sampled proportionally to the distribution of benchmarks in the full SMT-LIB 2024 dataset, which contains approximately 100,000 benchmarks. This resulted in a subset of roughly 400 benchmarks from QF\_S and 1,600 from QF\_SLIA, preserving the overall ratio between the divisions.

Below, we provide descriptions for some of the different benchmark categories within the SMT-LIB:
\begin{itemize}
	\item \textbf{Regular Constraints + Word Equations}: Benchmarks from \textit{SyGuS}, \textit{Automatark}~\cite{automatark}, \textit{Woorpje-lu}~\cite{woorpje}, \textit{String Fuzz}~\cite{stringFuzz}, \textit{ReDoS Attack}~\cite{reDos} and \textit{Kepler}~\cite{Kepler}, which combine word equations with regular constraints and length.
	\item \textbf{String Constraints from Programs}: Extracted from concolic execution engines \textit{Conpy} \cite{conpy}, \textit{PyExZ3}~\cite{pyexz3} symbolic execution of Python programs~\cite{PyEx}, covering operations such as \texttt{replace}, \texttt{indexof}, \texttt{substr}.
	\item \textbf{String Rewrite}: Consists of benchmarks derived from axioms over string predicates and functions~\cite{strRewrite}.
	\item \textbf{WebApp}: A variation of \textit{SLOG} \cite{slog} using \texttt{replaceAll} instead of simple \texttt{replace}, derived from real web applications.
\end{itemize}

A key limitation of the SMT-LIB benchmark set is the relative scarcity of benchmarks containing \emph{complex string functions}. While modern programming languages such as \emph{JavaScript} heavily rely on operations like \texttt{replaceAll}, such functions are significantly underrepresented in the SMT-LIB benchmark releases.
For example, only \emph{0.4\%} of the benchmarks in the SMT-LIB 2024 release include \texttt{replaceAll}.

The second benchmark set consists of instances of \textbf{PCP}, specifically in the form PCP[3,3], meaning each instance contains three dominos, and each word on a domino has length three.
We generate 1,000 random instances over a binary alphabet, following the structure outlined in Example~\ref{ex:pcp}.

The third benchmark set is inspired by \textbf{bioinformatics}~\cite{mount2004bioinformatics,compeau2015bioinformatics} and models a reverse transcription process. In these benchmarks, an unknown RNA string $y$ is transformed into a DNA string by applying a sequence of \texttt{replaceAll} operations that simulate nucleotide base pairing. The solver must determine an RNA string such that, after these replacement operations, a given concrete DNA string is produced, while ensuring that the RNA string contains a specific pattern. We generate 1000 instances, with 500 instances being satisfiable (where a suitable $y$ exists) and 500 unsatisfiable. Each formula contains 4 \texttt{replaceAll} constraints, a constant DNA string of length 200, and a pattern string of length 15.

\newcommand{\cmark}{\ding{51}}  
\newcommand{\xmark}{\ding{55}}  
\begin{table}[t]
	\centering
	\caption{Solver results on the motivating examples in Section~\ref{sec:motivating-example}. \cmark = solved, \xmark = timeout or unknown.}
	\label{tab:motivating-examples}
	\begin{tabular}{lccc}
		\toprule
		Solver & String Sanitization & PCP & Bioinformatics \\
		\midrule
		\solver{}      & \cmark & \cmark & \cmark \\
		cvc5           & \xmark & \xmark & \xmark \\
		OSTRICH   & \xmark & \xmark & \cmark \\
		Z3             & \xmark & \xmark & \xmark \\
		Z3-alpha            & \xmark & \xmark & \xmark \\
		Z3-Noodler     & \xmark & \xmark & \xmark \\
		\bottomrule
	\end{tabular}
\end{table}

We also evaluated solver performance on the three illustrative formulas from Section~\ref{sec:motivating-example}: the string sanitization example, a simplified PCP tile instance, and a representative example from the bioinformatics benchmark set. Table~\ref{tab:motivating-examples} summarizes the results. \solver{} successfully solved all three instances, while existing solvers either timed out or returned \texttt{unknown}. The PCP formula is unsatisfiable, and only \solver{} was able to prove this. The normalization task lies outside the straight-line fragment and involves multiple transductions on the same variable; while OSTRICH supports transducers, it fails to solve this instance. By contrast, the bioinformatics example \emph{is} within the straight-line fragment and can be solved by both \solver{} and OSTRICH, but not by any other solvers.

\begin{table*}[t]
	\centering
\caption{Comparison of various solvers' performance on 2000 SMT-LIB'24 benchmarks, 1000 PCP benchmarks, and 1000 Bioinformatics benchmarks. For SMT-LIB'24, we report the number of satisfiable (Sat), unsatisfiable (Unsat), and unknown (Unknown) instances, along with total solving time (Time(s)), including timeouts. For PCP and Bioinformatics, we report the number of solved instances and total solving time. For all benchmarks, we conservatively assign a time of 60 seconds to each unknown result, even if the solver terminated earlier.}

	\label{tab:solver_results}
	\begin{tabular}{lcccccccc}
		\toprule
		\rowcolor{lightgray} 
		& \multicolumn{4}{c}{\textbf{SMT-LIB'24}} 
		& \multicolumn{2}{c}{\textbf{PCP}} 
		& \multicolumn{2}{c}{\textbf{Bioinformatics}} \\
		\rowcolor{lightgray} 
		& Sat & Unsat & Unknown & Time(s) 
		& Solved & Time(s)
		& Solved & Time(s) \\
		\midrule
		\solver{}    & 1071 & 728 & 201  & 15416.8  & 901 & 7308.7 & 1000 & 5532.6\\
		cvc5         & 1162 & 716 & 122  & 7954.3   & 0 & 60000.0      & 0 & - \\
		OSTRICH-BASE      & 833 & 653 & 514  & 32298.7  & 0 & 60000.0      & 1000 & 2596.2     \\
		OSTRICH-COMP      & 1009 & 733 & 258  & 22840.7  & 0 & 60000.0      & 1000 & 3911.0     \\
		Z3           & 1156 & 730 & 114  & 8286.0  & 0 & 60000.0      & 0 & 60000.0 \\
		Z3-alpha     & 1127 & 724 & 149  & 10681.2  & 0 & 60000.0      & 0 & 60000.0 \\
		Z3-Noodler   & \textbf{1236} & \textbf{749} & \textbf{15} & 797.0 & 0 & 60000.0      & 0 & 60000.0 \\
		\bottomrule
	\end{tabular}
\end{table*}

\subsection{Performance of Individual Solvers}
We begin by analyzing Table~\ref{tab:solver_results}, which presents the performance of all solvers across both benchmark sets.
We report results using different metrics for the two benchmark sets.
We focus on \emph{unsolved benchmarks} to highlight solver weaknesses.

For the first benchmark set, Z3-Noodler emerges as the best-performing solver, solving nearly all instances and leaving only \emph{{\raise.17ex\hbox{$\scriptstyle\mathtt{\sim}$}}1\%} of the benchmarks unsolved. It is closely followed by {Z3} and {cvc5}, both of which leave \emph{6\%} of the benchmarks unsolved. Next, {Z3-alpha} performs slightly worse, failing to solve \emph{7.5\%} of the benchmarks. \solver{} leaves  \emph{10\%} of the benchmarks unsolved.
{OSTRICH-COMP} fails to solve \emph{13\%} of the benchmarks, while {OSTRICH} fails on \emph{25\%}.

These results demonstrate that while \solver{} does not achieve top-tier performance, it remains competitive using simple strategies, effectively solving a significant portion of the benchmark set. Notably, it reduces the number of unsolved benchmarks from \emph{25\%} (roughly 500) with \emph{OSTRICH} to \emph{10\%} (around 200), a \textbf{60\% reduction} in the number of failures.

A different trend emerges in the second and third benchmark sets, which consist of \emph{1,000 PCP benchmarks} and \emph{1,000 bioinformatics benchmarks}. In contrast to the SMT-LIB'24 benchmarks, most solvers fail to solve any instances from these categories. \solver{} is the only exception, solving nearly all PCP instances (with only around \emph{10\%} remaining unsolved). While OSTRICH is unable to solve the PCP benchmarks, it manages to handle the bioinformatics benchmarks.
This disparity for OSTRICH can be attributed to the underlying structure of the benchmarks: while the PCP benchmarks involve a more complex structure, the bioinformatics benchmarks fall into the straight-line fragment, for which OSTRICH is effective.
These results highlight that outside of the SMT-LIB benchmark sets, many challenging benchmarks remain unsolved by existing solvers. However, our propagation-based approach solves a significant portion of these previously unsolved benchmarks, demonstrating that constraint propagation can be crucial for tackling problems beyond the reach of current solvers.

\subsection{Impact of Enhancing Solver Performance with \solver{}}

\begin{table*}[t]
	\centering
	\caption{Benchmark Results for Different Solver Combinations. For each benchmark suite, we report the number of benchmarks left unknown by the portfolio (U), the additional benchmarks solved by \solver{} (C), and the resulting Improvement (I), defined as \(\displaystyle \frac{\text{C}}{\text{C}+\text{U}}\). The last four rows below show additional RCP contribution when added to solvers already combined with OSTRICH-BASE (OB).}
	\label{tab:solver_results-portfolio}
	\begin{tabular}{l|ccc|ccc|ccc}
		\toprule
		\rowcolor{lightgray} 
		& \multicolumn{3}{c}{\textbf{SMT-LIB'24 (2000)}} 
		& \multicolumn{3}{c}{\textbf{PCP (1000)}} 
		& \multicolumn{3}{c}{\textbf{Bioinformatics (1000)}} \\
		\rowcolor{lightgray} 
		Solver Combination 
		& U & C & I 
		& U & C & I  
		& U & C & I  \\
		\midrule
		\solver{} 
		& 201  & --   & -- 
		& 99   & --   & -- 
		& 0    & --   & -- \\
		\hline
		cvc5 + \solver{}      
		& 61   & 61   & 0.50 
		& 99   & 901  & 0.90
		& 0    & 1000 & 1.00 \\
		OSTRICH-COMP + \solver{}   
		& 68   & 197  & 0.74 
		& 99   & 901  & 0.90
		& 0    & 0    & 0.00 \\
		Z3 + \solver{}        
		& 85   & 29   & 0.25
		& 99   & 901  & 0.90 
		& 0    & 1000 & 1.00 \\
		Z3-alpha + \solver{}  
		& 87   & 65   & 0.43
		& 99   & 901  & 0.90 
		& 0    & 1000 & 1.00 \\
		Z3-Noodler + \solver{} 
		& \textbf{11}   & 4    & 0.27
		& 99   & 901  & 0.90 
		& 0    & 1000 & 1.00 \\
		\hline
		(cvc5 + OB) + \solver{}      
		& 61   & 2   & 0.03 
		& 99   & 901  & 0.90 
		& 0    & 0    & 0.00 \\
		(Z3 + OB) + \solver{}        
		& 85   & 5   & 0.06
		& 99   & 901  & 0.90 
		& 0    & 0    & 0.00 \\
		(Z3-alpha + OB) + \solver{}  
		& 87   & 8   & 0.08 
		& 99   & 901  & 0.90 
		& 0    & 0    & 0.00 \\
		(Z3-Noodler + OB) + \solver{} 
		& \textbf{4}   & 0  & 0.00
		& 99   & 901  & 0.90 
		& 0    & 0    & 0.00\\
		\bottomrule
	\end{tabular}
\end{table*}

Table~\ref{tab:solver_results-portfolio} presents the performance of \solver{} and its impact when combined with various solvers. Specifically, for each solver, we compute the \emph{virtual best portfolio} (VBP), which considers the union of benchmarks solved either by the solver alone or by \solver{}. This is a particularly useful measure of impact, as many practical approaches for solving string constraints rely on a portfolio of solvers running in parallel. Evaluating the VBP provides insight into whether including a given solver in such a portfolio is beneficial, thereby helping to assess the practical value of \solver{} when integrated into a multi-solver strategy.
We omit \emph{OSTRICH-BASE} as base solver from this evaluation, as its technique is very close to \solver{}, and the previous experiments have already demonstrated the superior performance of \solver{} over this baseline.

We assess the unique contribution of \solver{}, defined as the number of benchmarks solved by \solver{} that the individual solver could not solve on its own. For each benchmark suite, we report three key quantities: the number of benchmarks left unsolved by the portfolio solver, the contribution provided by \solver{}, and the resulting improvement. The improvement is computed as
$(\frac{\text{contribution}}{\text{unsolved} + \text{contribution}})$
which quantifies the relative gain achieved by including \solver{} in a portfolio. This metric effectively demonstrates the impact of \solver{} in complementing existing solvers.

First, it is immediately apparent that for the PCP benchmarks, no other solver is able to solve any instances, making \solver{} solely responsible for a contribution of 901 solved benchmarks in this set. As a result, all solver combinations with \solver{} show an improvement of approximately 90\% on the PCP benchmarks. In addition, for the bioinformatics benchmarks the improvement is 100\% for all solver combinations except for OSTRICH + \solver{}, as OSTRICH can already solve the entire benchmark suite (yielding an improvement of 0\%).

For the SMT-LIB benchmark set, the contribution of \solver{} varies noticeably across the different base solvers. For instance, when combined with cvc5, \solver{} contributes an additional 61 solved instances from 122 unknown benchmarks, resulting in an improvement of  50\%. 
Meanwhile, combinations with Z3 and Z3-alpha show intermediate improvements of roughly 25\% and 43\%, respectively.
These improvements follow the same pattern, primarily benefiting from enhanced handling of regular constraints, word equations, and length reasoning across various benchmark categories.
In contrast, the combination of OSTRICH with \solver{} yields a significantly higher contribution, solving an extra 197 out of 258 unknown cases—translating to an improvement of about 74\%.\footnote{This also provides a strong indication of the potential improvement in \emph{OSTRICH-COMP} if the \emph{OSTRICH-BASE} component were replaced by our RCP implementation—reducing the number of unsolved instances to just 3\%, and ranking just behind \emph{Z3-Noodler}.}

The improvements are most pronounced in the String Constraints from Programs and String Rewrite benchmarks.
Its propagation strategy proves less effective on benchmarks involving more complex string functions and predicates, leading to weaker performance in those cases.
Finally, Z3-Noodler sees a more modest contribution (4 additional instances from 11 unknown, or 27\% improvement). 
This outcome is unsurprising, as Z3-Noodler also employs RCP as part of its strategy and solves already most of the benchmarks in this suite. The remaining unsolved benchmarks fall into two categories: \emph{one} originates from String Constraints from Programs, while \emph{three} belong to the WebApp category, which relies on \texttt{replaceAll}. Since this operation is currently unsupported in Z3-Noodler, these results indicate that while its propagation techniques are effective, its coverage of string functions in the SMT-LIB 2.7 format remains incomplete.

Notably, while Z3-Noodler showed only a minor absolute improvement on the SMT-LIB benchmarks, this can largely be attributed to the \emph{underrepresentation of complex string functions} such as \texttt{replaceAll} in the SMT-LIB dataset. Since Z3-Noodler already excels at solving the predominant types of string constraints in SMT-LIB, its gains from \solver{} are limited. However, as observed in the WebApp benchmarks, where \texttt{replaceAll} is more frequently used, \solver{} provided improvements.

The last four rows of Table~\ref{tab:solver_results-portfolio} show the additional contribution of \solver{} when added to portfolios already containing OSTRICH-BASE. As expected, the improvements on SMT-LIB'24 benchmarks are marginal (ranging from 0\% to 8\%), due to significant overlap between the constraints solved by \solver{} and OSTRICH-BASE. This highlights a natural trend: as more solvers are combined in a portfolio, the marginal contribution of each individual technique diminishes. Nonetheless, the value of RCP becomes more apparent on harder or previously unaddressed benchmarks. In particular, while OSTRICH-BASE fails to solve any PCP instances, adding RCP recovers 901 solved cases across all configurations, demonstrating its unique strength in handling expressive string constraints beyond SMT-LIB.

%

\section{Related Work} \label{sec:related-work}
String solving has been extensively studied 
in the past decade or so, e.g., see the survey \cite{string-survey}. 
Our work contributes to both the theoretical and practical aspects of string solving. In this section, we organize the related literature into two complementary parts. First, we review the theoretical foundations that underpin string constraint solving—ranging from early results on decidability.
Second, we discuss practical string solvers, highlighting the evolution of implementations from early prototypes to state-of-the-art tools. 

\paragraph{Theoretical Foundations.}

Early work on the satisfiability of word equations established decidability with significant complexity results. Makanin~\cite{Makanin1977} and later Plandowski~\cite{plandowski1999} showed that the problem is decidable and lies in PSPACE, while subsequent work by Je\.{z}~\cite{jez2016} introduced the recompression technique—a simpler method that streamlined earlier approaches. In practice, Nielsen's transformation~\cite{Nielsen1917} has emerged as an effective technique for handling quadratic word equations, maintaining the same theoretical complexity bounds. Although the lower bound is known to be NP, establishing tight bounds remains challenging.

The decidability of word equations with linear length constraints such as $|x| = |y|$ remains an open problem.
However, for quadratic word equations, some positive results have been obtained by combining Nielsen's transformation with counters to track variable lengths~\cite{quad21}. 
Moreover, even simple extensions incorporating transducers render the problem undecidable, which motivated the definition of the straight-line fragment~\cite{popl16}. 
This fragment restricts constraints so that each variable is assigned at most once—an intuitively appealing condition that nonetheless captures a rich class of problems. While the computational complexity of the straight-line fragment is EXPSPACE in general, it drops to PSPACE for constraints of small dimensions (i.e., when the structure of the constraints is sufficiently limited) and remains decidable even when augmented with additional constraints such as length, letter counting, and disequality. 
Recent work has further refined these ideas: a comprehensive proof system for string constraints that captures the straight-line fragment was introduced in~\cite{popl22}, providing a formal framework that informs our approach, and the straight-line condition was extended to the theory of sequences in~\cite{cav23} by applying constraint propagation techniques on parametric languages rather than on regular languages (following a similar approach to that in~\cite{popl19}).

Building on the idea of limiting variable assignments to ensure decidability, the chain-free fragment~\cite{chain19} extends the decidable class of string constraints by allowing variables to be assigned more than once, provided that these assignments do not form a chain. 
Although this extension builds on the concepts underlying the straight-line fragment, the techniques employed are fundamentally different.
In the chain-free fragment, a splitting graph is introduced to systematically manage the interactions between variables—particularly when a variable appears multiple times on the same side of an equation—thus enabling a richer set of constraints while preserving decidability.

In this context, our work shows that the proof system from~\cite{popl22}, specifically the rules for regular constraint propagation, is sufficient to subsume the chain-free fragment, providing a novel algorithmic characterization of it.

\paragraph{String Solvers.}
Over the past decade, a wide range of string solvers have emerged, employing techniques such as reductions to other theories, word equation splitting, and automata-based reasoning. Many automata-based solvers apply constraint propagation, typically using limited forms of automata splitting over concatenation.

Kaluza~\cite{kaluza} and HAMPI~\cite{hampi} handle string constraints over bounded-length variables by translating them into bit-vector constraints. Similarly, G-Strings~\cite{gstrings} and Gecode+s~\cite{gecode} model strings as arrays of integers along with explicit length constraints, relying on constraint propagation in those domains. nfa2sat~\cite{nfa2sat} and Woorpje~\cite{woorpje} encode string problems into propositional logic and solve them via SAT solvers, although Woorpje is limited to bounded-length word equations while nfa2sat supports more general cases.
Sloth~\cite{sloth}, Qzy~\cite{qzy}, and SLOG/Slent~\cite{slog,slent} use automata to represent string constraints and reduce the problem to a model checking task—often solved using tools like IC3~\cite{ic3}. PISA~\cite{pisa}, on the other hand, maps string constraints to monadic second-order logic on finite strings handling \texttt{indexOf} and \texttt{replace} operations.

The Z3Str family—comprising Z3Str/2/3/4~\cite{z3str2, z3str21, z3str3, z3str4}—iteratively decomposes constant strings into substrings and splits variables into subvariables until they become bounded. Z3str2 introduces new search heuristics and detects overlapping variables, while Z3str3 extends this with theory-aware branching to prioritize easier constraints. Z3\textsmaller[1]{STR}3RE~\cite{z3str3re, z3str3re1} further refines the approach by handling regular constraints directly without reducing them to word equations. Building on these ideas, Z3-alpha~\cite{z3alpha1,z3alpha} employs an SMT strategy synthesis technique to select appropriate solving strategies based on the underlying solver (e.g., Z3 or Z3Str4).
Z3~\cite{z3} integrates multiple techniques—including those from Z3-alpha and Z3seq~\cite{Z3seq}, which is based on symbolic automata and derivatives—to capture the theory of sequences. 
While many state-of-the-art string solvers use rewriting as a preprocessing step or to handle specific string functions, cvc5~\cite{cvc5} distinguishes itself by relying on derivation rules as a core strategy, lazily reducing complex string functions to word equations.

Other solvers focus on automata-based techniques and constraint propagation. Early ideas\cite{cp03} applied simplistic regular constraint propagation, while later work~\cite{Min05} introduced forward propagation for web page analysis, supporting complex string functions and transductions. 
STRANGER~\cite{stranger} uses both forward and backward propagation to over-approximate the inputs of string variables at various program points. Norn~\cite{norn,norn1} guarantees termination on the acyclic fragment through automata splitting and length propagation.

Trau~\cite{trau0,trau1, trau2} extends Norn's approach with counterexample-guided abstraction refinement to approximate regular languages using arithmetic formulas and is the first solver to handle the chain-free fragment~\cite{chain19}, supporting features such as string transductions and \texttt{replaceAll}.
CertiStr~\cite{certistr} adopts a forward propagation strategy over regular constraints and concatenation. 
OSTRICH~\cite{popl19} employs refined backward propagation—along with word equation splitting, length abstraction, and letter counting—to propagate regular constraints across string functions (including \texttt{replaceAll} and transductions), thereby guaranteeing completeness for the straight-line fragment.
In practice, OSTRICH is run at SMT-COMP using a portfolio configuration with timeslicing, where three different base solvers are executed sequentially within a fixed time budget: the basic OSTRICH (which combines backward propagation with equation splitting), CEA-OSTRICH~\cite{cea} (which uses cost-enriched automata for improved reasoning about length, \texttt{indexOf}, and \texttt{subString} constraints), and a solver based on algebraic data types~\cite{adt}.
Z3-Noodler~\cite{noodler,z3noodler, z3noodler1} introduces a specialized stabilization algorithm, applying a combination of forward and backward propagation to a selected word equation until the inferred regular languages on both sides stabilize. 
Its techniques have been extended to handle length constraints, disequalities, and string-to-integer conversion. 
Although complete for the chain-free fragment (on supported string functions), Z3-Noodler does not support string transductions and \texttt{replaceAll}—a gap our solver addresses.

Owing to the interest in and the demands of string solving, a standard file
format has been recently agreed and adopted as part of SMT-LIB 2.6~\cite{smtlib2.6}, called 
the theory of Unicode strings (introduced in 2020). This has made it possible to create a track
on string problems at the SMT-COMP~\cite{SMT-COMP}.\footnote{Not all aforementioned solvers
	support SMT-LIB 2.6 format.}


\section{Conclusion and Future Work}
\label{sec:conc}
In this paper, we demonstrated that a simple strategy like 
RCP can effectively solve a wide range of benchmarks in string solving. We first established its completeness on the orderable fragment, which subsumes two of the largest known decidable fragments: \emph{chain-free} and \emph{straight-line} constraints.

We introduced a fair proof search strategy for applying RCP, ensuring that every constraint is eventually propagated and no propagation rule is indefinitely postponed.  
However, the efficiency of proof search may vary depending on the structure of the input constraints. For instance, the Marking Algorithm provides a specialized strategy that is complete for the orderable fragment but does not generalize beyond it.  
Furthermore, our strategies are not without limitations. They are not complete for certain simple equations, and our exploration has not utilized the entire proof system. There are rules that are theoretically not fully understood yet, such as the ``Cut'' rule, which allows splitting the search around arbitrary regular constraints. While this rule can be advantageous in many cases, its integration poses a challenge as it is not clear how to algorithmically choose a constraint to split around.
Developing adaptive heuristics for proof search, tailored to specific classes of string constraints, remains an interesting direction for future work.

Beyond our theoretical contributions, we implemented \solver{} based on our proof system and evaluated its performance on standard SMT string benchmarks. Although \solver{} does not outperform state-of-the-art solvers on the overall SMT-COMP datasets, it remains competitive, solving up to 90\% of the benchmarks—compared to other solvers that solve between 75\% and 99\% of the set. 
Notably, our work significantly improves the performance of base OSTRICH, increasing its benchmark-solving ratio from 75\% to 90\%. When combined with the remainder of the OSTRICH portfolio, the resulting virtual best portfolio achieves up to 97\% of solved benchmarks—second only to Z3-Noodler on the SMT-COMP benchmarks. 
Moreover, \solver{} excels in previously unsolved benchmark categories, namely on PCP instances and on bioinformatics benchmarks, where only OSTRICH achieves comparable results. 
Finally, integrating \solver{} with existing solvers significantly enhances their performance, reducing the number of unsolved benchmarks by 25\% to 74\% on SMT-COMP datasets and by up to 100\% on both PCP and bioinformatics benchmarks.

While our work primarily focuses on string constraints over regular languages, an interesting direction for future research is to explore the applicability of our techniques in the theory of sequences, where the underlying languages are specified by parametric automata. 
As noted in the related work, preliminary investigations using constraint propagation methods in this domain have yielded promising results. 
Further exploration could provide valuable insights into both the theoretical and practical implications of extending our approach to sequence constraints.

s
\section*{Acknowledgments}

This work was supported by the Engineering and Physical Sciences Research Council (EPSRC) under Grant No.~EP/T00021X/1;
the European Research Council (ERC)
under Grant No.~101089343 (LASD);
and the Swedish Research Council
under Grant No.~2021-06327.

\section*{Data Availability Statement}
 The complete implementation of \solver{}, along with the benchmark suites (including the SMT-LIB, PCP, and bioinformatics benchmarks) are available on Zenodo \cite{zenodopower}.



\bibliographystyle{plain}
\bibliography{refs}

\appendix
\section{Proofs for Section~\ref{sec:guarantee}}

RCP assumes that all equational constraints are in a normal form,
i.e.\ they are of the form $x = f(y_1, \ldots, y_k)$, where $f$ uses concatenation only,
or $x = \mathcal{T}(y)$, where $\mathcal{T}$ is a rational function.
We show that the natural translation to the normal form,
in which we replace an equational constraint $t = \mathcal{T}(t')$ with
$x = t, y = t', x = \mathcal{T}(y)$,
preserves the chain-freeness:

\begin{lemma}
	\label{lem:normal_form_chain_free}
	A set of equational constraints $\chi$ is chain-free if and only if the set of its normal forms $\psi$ is chain-free.
\end{lemma}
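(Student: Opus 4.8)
The plan is to compare the splitting graph $G_\chi$ of $\chi$ with the splitting graph $G_\psi$ of its normal form $\psi$, and to prove that $G_\chi$ has a cycle if and only if $G_\psi$ does; since chain-freeness is exactly the absence of cycles, this yields the equivalence. To make the comparison manageable I would first decompose each edge of a splitting graph as a composition of two relations on nodes: the \emph{opposite-side} relation $p \mathrel{O} p'$ (the two nodes lie on opposite sides of a common constraint) and the \emph{same-label} relation $p' \mathrel{S} q$ ($p'$ and $q$ carry the same variable and $p' \neq q$). By the definition of the splitting graph, there is an edge $p \to q$ exactly when $p \mathrel{O} p' \mathrel{S} q$ for some $p'$, so every cycle is an alternating closed walk $p_0 \mathrel{O} p_0' \mathrel{S} p_1 \mathrel{O} p_1' \mathrel{S} \cdots$.

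The key step I would carry out is a local analysis of the gadget the translation introduces. For each original constraint $\phi_j : t_{2j-1} = \mathcal{T}(t_{2j})$ the normal form replaces $\phi_j$ by $x_j = t_{2j-1}$, $y_j = t_{2j}$, and $x_j = \mathcal{T}(y_j)$, with $x_j,y_j$ \emph{fresh}; I would write $A_j,C_j$ for the two nodes labelled $x_j$, $B_j,D_j$ for the two nodes labelled $y_j$, and $P_j,Q_j$ for the node sets coming from the unchanged positions of $t_{2j-1}$ and $t_{2j}$. The crucial observation is that freshness pins down the gadget edges: since $x_j$ occurs only at $A_j,C_j$ and $y_j$ only at $B_j,D_j$, every $p \in P_j$ has out-degree one with unique edge $p \to C_j$ (its only opposite node is $A_j$, whose only same-label partner is $C_j$), symmetrically every $q \in Q_j$ has unique edge $q \to D_j$, while $C_j$'s only out-edge is $C_j \to B_j$ and $D_j$'s is $D_j \to A_j$. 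Hence the only way a walk can cross the gadget is along one of the length-three paths $p \to C_j \to B_j \to w$ (for $p \in P_j$) or $q \to D_j \to A_j \to w'$ (for $q \in Q_j$), where $w$ (resp.\ $w'$) ranges precisely over the occurrences of the variables of $t_{2j}$ (resp.\ $t_{2j-1}$). A direct check shows these are exactly the endpoints of the single $\chi$-edges $p \to w$ and $q \to w'$ obtained by stepping across $\phi_j$ and then to another occurrence.

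Both directions then follow from this correspondence. If $G_\chi$ has a cycle, I would expand each edge—say $p \to w$ through $\phi_j$—into the matching length-three path $p \to C_j \to B_j \to w$ (or the $D_j,A_j$ variant) and concatenate to obtain a closed walk, hence a cycle, in $G_\psi$. Conversely, given a cycle in $G_\psi$ I would first note that it cannot live inside the gadgets, since every edge leaving $B_j$ or $A_j$ and every edge entering $C_j$ or $D_j$ has an original-variable endpoint, so no cycle is internal to one gadget and no two gadgets are adjacent; thus the cycle visits original positions, and by the forced-traversal property it splits into gadget segments of the two shapes, each contracting to a single $\chi$-edge. The contracted walk is a nonempty closed walk in $G_\chi$ and therefore contains a cycle (counting self-loops, which the definition of chain permits). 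Regular constraints are irrelevant to both graphs and constants are carried along unchanged, so neither affects the argument.

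The main obstacle will be the gadget edge analysis: one must check carefully that the two auxiliary variables $x_j,y_j$, each occurring exactly twice, force every passage through the gadget to be one of the two complete length-three detours, with endpoints matching exactly the opposite-side-then-same-label step of $\phi_j$ in $G_\chi$. Once this forced-traversal property is in hand, the expansion and contraction maps between cycles are routine, and it is precisely the freshness of $x_j,y_j$ that rules out the spurious shortcuts that could otherwise create or destroy cycles.
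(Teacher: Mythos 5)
Your proposal is correct and takes essentially the same approach as the paper's proof: both identify that every edge of the original splitting graph crossing $\phi_j$ corresponds to a forced length-three detour through the fresh-variable nodes of the gadget (the paper's edges $(x,2)\to(y,1)$ and $(y,2)\to(x,1)$ are exactly your $C_j\to B_j$ and $D_j\to A_j$), and both conclude by expanding cycles into paths and contracting paths back into edges. The only organizational difference is that the paper normalizes one constraint at a time by induction while you translate all constraints simultaneously; the underlying gadget analysis and cycle correspondence are identical.
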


\begin{proof}
	We show the proof by induction on the number of equations turned into normal form,
	i.e.\ we fix one constraint $\phi_j = `t_{2j-1} = \mathcal T(t_{2j})'$ and replace it with three
	equations
	\begin{eqnarray*}
		x &=& t_{2j-1}\\
		y &=& t_{2j}\\
		x &=& \mathcal{T}(y)
	\end{eqnarray*}
	and show that $G$ has a cycle if and only if $G'$ has a cycle.
	Let $x_1, \ldots, x_n$ be all variables used in both $t_{2j-1}$ and $t_{2j}$.
	For simplicity of proof we will use the same names as in $G$ for the corresponding nodes in $G'$;
	the only new nodes are those corresponding to $x, y$ (two nodes each).
	For simplicity, let them be called $(x, 1)$ for the node in $x = t_{2j-1}$, $(y, 1)$ for $y = t_{2j}$
	and $(x,2), (y,2)$ for the last equation.
	
	Consider an edge $(2j-1,i) \to (j', i')$ (so from node corresponding to a variable in $t_{2j-1}$ to some other variable) in $G$.
	Then in $G'$ there is a path $(2j-1,i) \to (x,2) \to (y, 1) \to (j', i')$, moreover,
	\begin{itemize}
		\item $(x,2)$ has incoming edges only of the form $(2j-1,i) \to (x,2)$;
		\item $(x,2)$ has only one  outgoing edge: $(x,2) \to (y, 1)$ and this is the only incoming edge of $(y, 1)$;
		\item $(y,1)$ has outgoing edges to $(j', i')$ if and only if there is an edge $(2j-1,i) \to (j', i')$ (this is independent of the choice of $i$).
	\end{itemize}
	So, in a sense, this edge was replaced with a path of length $3$ (this is not strict in the sense that $(x,2) \to (y, 1)$ is shared by many such paths).

	Similarly, consider an edge $(2j,i) \to (j', i')$ (so from node corresponding to a variable in $t_{2j}$ to some other variable) in $G$.
	Then in $G'$ there is a path $(2j,i) \to (y,2) \to (x, 1) \to (j', i')$,
	moreover:
	\begin{itemize}
		\item $(y, 2)$ has incoming edges only of the form $(2j,i) \to (y,2)$;
		\item $(y, 2)$ has only one  outgoing edge: $(y,2) \to (x, 1)$ and this is the only incoming edge of $(x, 1)$;
		\item $(x, 1)$ has outgoing edges to $(j', i')$ if and only if there is an edge $(2j,i) \to (j', i')$ (this is indepoendent of the choice of $i$).
	\end{itemize}
	So, as above, this edge was replaced with a path of length $3$.

	
	It is easy to see that if there is a cycle in $G$ then there is one in $G'$ (obtained by replacing appropriate edges by paths; they may in general share node and edges).

In the other direction, if there is a cycle in $G'$ then we can create a cycle in $G$:
any edge that does not use $(x, 1), (x, 2),  (y, 1), (y, 2)$ exists also in $G$.
If the cycle uses, say, $(y,1)$ then it needs to use a path of the form $(2j-1,i) \to (x,2) \to (y, 1) \to (j', i')$ (for some $i, j' , i'$)
and this corresponds to an edge $(2j-1,i) \to (j', i')$ in $G$,
the other case are shown in the same way.
Iterating this process yields a path in $G$.
\end{proof}

\begin{proof}[proof of Lemma~\ref{lemma:fwd}]
	The claim follows from
	\[
	z \in f(L_1, \ldots, L_n) \equiv \exists y_1,\ldots, y_n \; z = f(y_1,\ldots ,y_n) \land \bigwedge_{i=1}^n y_i \in L_i \enspace ,
	\]
	which is the definition of the image of $f$.
%
	%
	%
\end{proof}

\begin{proof}[proof of Lemma~\ref{lemma:bwd}]
	Denote by $\psi$ $\psi'$ the first and second formula.
	
	$\Rightarrow$: Assume that $\psi$ holds.
	Then there is a model that satisfies $\psi$ and there exists a $z$ that satisfies $z \in L_z$ and $z = f(y_1,\ldots ,y_n)$.
	We know that $f$ is backwardable and by assumption $f^{-1}(L_z) = \bigcup^k_{j=1} L_{1,j}\times \dots \times L_{n,j}$,
	hence there exists at least one $j$ such that
	$(y_1, \ldots, y_n) \in L_{1,j}\times \dots \times L_{n,j}$ and $z = f(y_1, \ldots, y_n)$.
	Therefore one  of the disjuncts in $\psi'$ is true and so $\psi'$ is also true.

	$\Leftarrow$: Assume that $\psi'$ holds.
	Then there is a model that satisfies it.
	Since $f^{-1}(L_z) = \bigcup^k_{j=1} L_{1,j}\times \dots \times L_{n,j}$
	and at least one of the disjuncts in $\psi'$ is true,
	for some $j$ we have
	$(y_1, \ldots, y_n) \in L_{1,j}\times \dots \times L_{n,j}$
	and so $f(y_1, \ldots, y_n) \in L_z$, we take it as a witness $z$;
	note that this might be a different witness than in $\psi'$.
	Hence, the same model satisfies $\psi$.
%
%
\end{proof}

\section{Proofs for Section~\ref{sec:lower}}

We first show how the equation and regular constraints
can look like in different sequents of the proof system applied on~\eqref{eq:lower_bound_example} .

\begin{lemma}
	\label{lem:Nielsen_rules_application}
	Consider a sequent in a branch of a proof
	(in a proof system using only rules \texttt{Nielsen}, \texttt{Cut}, \texttt{Intersect}, \texttt{Fwd-Prop}, \texttt{Bwd-Prop} and \texttt{Close}).
	used on example~\eqref{eq:lower_bound_example} such that 
	\texttt{Nielsen} was applied $\ell$ times at the prefix and $r$ times at the suffix. 
	Then the equation is
	$$
	xy x^{r + \ell} xy = yx x^{r + \ell} yx
	$$
	Moreover, the regular constraints are of the form $x \in e_x, y \in e_y$,
	where $L(e_x) \subseteq a^+  \; \land \; L(e_y) \subseteq a^* b a^*$.
\end{lemma}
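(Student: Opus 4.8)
The plan is to prove the statement by induction on the length of the path from the root of the proof down to the sequent under consideration, maintaining the invariant that (i) the word equation has the stated shape $xy\,x^{r+\ell}\,xy = yx\,x^{r+\ell}\,yx$, and (ii) the intersection of all regular constraints on $x$ is contained in $a^+$ while the intersection of all regular constraints on $y$ is contained in $a^* b a^*$. The base case ($\ell = r = 0$, no rule applied) is immediate, since the root is exactly \eqref{eq:lower_bound_example} and $xy\,x^0\,xy = xyxy$, $yx\,x^0\,yx = yxyx$.

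For the inductive step I would go through the available rules. The rules \texttt{Cut} and \texttt{Intersect} touch only regular constraints and leave the equation (hence $\ell$ and $r$) unchanged: \texttt{Cut} adds $x \in e$ and $x \in \overline{e}$ (or the analogue on $y$) on its two branches, and intersecting with the already-present constraint keeps the combined language inside $a^+$ (resp.\ $a^* b a^*$); \texttt{Intersect} only replaces several constraints by their intersection, which again stays within the bounds. The rules \texttt{Fwd-Prop} and \texttt{Bwd-Prop} never fire, because \eqref{eq:lower_bound_example} contains no equational constraint $z = f(\dots)$ built from a string function --- the only equation is the word equation, which is handled exclusively by \texttt{Nielsen}. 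Finally \texttt{Close} produces a leaf and is irrelevant to the surviving sequents.

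The heart of the argument is the \texttt{Nielsen} rule; consider first its prefix version applied to $xy\,x^{n}\,xy = yx\,x^{n}\,yx$ with $n = r + \ell$. The leading variables are $x$ on the left and $y$ on the right, and the rule splits into the sub-cases $x = y$, $y$ a prefix of $x$, and $x$ a prefix of $y$. Using the upper-bound side-conditions \eqref{eq:ey_over_approximate} and \eqref{eq:ex_over_approximate}, the first two sub-cases close: for $x = y$ the constraint on the merged variable is forced into $a^+ \cap a^* b a^* = \emptyset$, and for ``$y$ prefix of $x$'' the new constraint on the remainder of $x$ is bounded by $\{v : \exists u,\, uv \in L(e_x) \subseteq a^+ \wedge u \in L(e_y) \subseteq a^* b a^*\} = \emptyset$, since no word containing $b$ is a prefix of a word in $a^+$. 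Only the sub-case ``$x$ prefix of $y$'' (substitution $y \to xy$) survives, and a direct computation --- writing the left-hand tail as $u = y\,x^{n+1}\,y$ and the right-hand tail as $v = x^{n+1}\,y\,x$, so that the new equation $u[y/xy] = y\,(v[y/xy])$ becomes $xy\,x^{n+1}\,xy = yx\,x^{n+1}\,yx$ --- shows the equation keeps its shape with $n$ increased by one, i.e.\ with $\ell$ increased by one. The bounds (ii) are preserved because, by the symmetric form of \eqref{eq:ey_over_approximate} and \eqref{eq:ex_over_approximate}, the new constraint on $x$ is a sublanguage of the old one ($\subseteq a^+$), while the new constraint on $y$ consists of words obtained from $L(e_y) \subseteq a^* b a^*$ by deleting a prefix in $a^+$, which again lies in $a^* b a^*$.

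The suffix version of \texttt{Nielsen} is treated symmetrically by reading the equation from the right: the productive sub-case is now ``$x$ a suffix of $y$'' (substitution $y \to yx$), the other two sub-cases close by the same $b$-versus-$a^+$ mismatch, and the analogous computation turns $xy\,x^{n}\,xy = yx\,x^{n}\,yx$ into $xy\,x^{n+1}\,xy = yx\,x^{n+1}\,yx$ while incrementing $r$. Since each productive prefix step raises $\ell$ by one, each productive suffix step raises $r$ by one, and both raise the exponent $n = r + \ell$ by one, the invariant is maintained, completing the induction. I expect the main obstacle to be the regular-constraint bookkeeping under \texttt{Nielsen}: one must argue carefully, through the upper-bound conditions \eqref{eq:ey_over_approximate} and \eqref{eq:ex_over_approximate}, both that the two non-productive sub-cases are forced to empty constraints (so that every open sequent arises from the productive branch) and that the productive branch keeps the two languages inside $a^+$ and $a^* b a^*$; the exponent arithmetic itself is routine once the tails are written out.
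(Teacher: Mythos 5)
Your induction structure, the base case, the treatment of \texttt{Cut}/\texttt{Intersect}, and the Nielsen analysis (including the exponent arithmetic and the use of \eqref{eq:ey_over_approximate} and \eqref{eq:ex_over_approximate} to close the two non-productive branches) all match the paper's proof and are correct. However, there is one genuine gap: your claim that \texttt{Fwd-Prop} and \texttt{Bwd-Prop} ``never fire'' because the word equation is not of the form $z = f(\dots)$. This reading makes the lemma --- and the lower bound it supports --- vacuous with respect to regular constraint propagation, which is the entire point of Theorem~\ref{thm:propagation_isnot_enough}. The paper's intended (and, in its proof of that theorem, explicit) interpretation is that the equation $xyxy = yxyx$ is brought into the normal form required by RCP by introducing an auxiliary variable $z$ with $z = xyxy$ and $z = yxyx$, after which $z$ serves as a vehicle for forward and backward propagation between the two sides; propagation therefore \emph{can} be applied in every sequent, and your case analysis must cover it.

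Fortunately, the missing case is easy, and the paper dispatches it in one sentence: an application of propagation either produces an empty constraint (closing the branch, hence irrelevant to surviving sequents) or only \emph{adds} regular constraints on $x$ and $y$, which after intersection can only shrink $L(e_x)$ and $L(e_y)$; since the lemma's claim is an upper bound ($L(e_x) \subseteq a^+$, $L(e_y) \subseteq a^*ba^*$), the invariant is preserved, and the equational constraints themselves are never modified or removed by RCP rules. You should replace your ``never fires'' argument with this observation (grouping propagation together with \texttt{Cut}, exactly as you already handle that rule); with that change your proof coincides with the paper's.
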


\begin{proof}
	We show the claim by induction on the number of rules' application.
	It clearly holds for the initial equation for  $r = \ell = 0$.

	If we apply a Cut rule or regular constraint propagation then either we end up with a contradiction (terminating a branch)
	or we make the constraint for $x, y$ stricter, i.e.\ we restrict the regular set,
	so the upper-bounds on them still hold; we also do not modify the equation.
	
	Suppose that we apply the Nielsen's transform to the prefix.
	Taking $x / y$ implies an update in the regular constraint for $x$, i.e.\
	$e_x , e_y$, which is clearly empty, as each string in $L(e_y)$ contains $b$ 
	and no string in $L(e_x)$ contains $b$.

	Taking $x / yx$ implies an update in the regular constraint for $x$,
	i.e.\ $e_x '$ which by~\eqref{eq:ex_over_approximate}
	should satisfy
	$L(e_x') \subseteq \{v \: : \: \exists u \, uv \in L(e_x) \land u  \in L(e_y)\}$,
	which is empty, as each string in $L(e_y)$ contains $b$ 
	and no string in $L(e_x)$ contains $b$.

	Hence the only possible branch uses $y/xy$
	and the equation $xy x^{\ell+r} xy = yx x^{\ell+r} yx$ is replaced (after removing of the two leading $x$'s) with
	$$
	xy x^{r + \ell+1} xy = yx x^{r + \ell+1} yx \enspace ,
	$$
	as claimed.
	Note that the analysis for the application of Nielsen's transform to the suffix is similar,
	with the difference that we consider substitutions $x / xy$  and $y / yx$,
	but the resulting equation is the same.

	For the regular constraint observe that we update the constraint $e_y$,
	and by \eqref{eq:ex_over_approximate}
	it should hold that $L(e_y') \subseteq L(e_x)^{-1}L(e_y)$ or $L(e_y') \subseteq L(e_y)L(e_x)^{-1}$,
	but as $L(e_x) \subseteq a^+, L(e_y) \subseteq a^*ba^*$ this ends in a (perhaps empty) subset of $a^*ba^*$,
	so the claim still holds.
\end{proof}

In the following we represent the regular constraints on $x, y$ using the lengths of strings from $a^*$;
roughly speaking, the set of those lengths are unions of
arithmetic progressions with the same step.
We say that a set $A$ is finite (co-finite) in $B$,
when $A \cap B$ is finite ($B \setminus A$ is finite). 	

\begin{proof}[proof of Lemma~\ref{lem:regular_representation}]
	Consider first $L_x$.
	By known characterizations of unary regular languages, say Chrobak normal form of DFA recognizing a unary regular language,
	we get that $L_x$ is recognized by a DFA which is a finite path followed by a cycle of $m$ states,
	then the accepting states on the finite path correspond to $a^{k_i}$
	and the accepting states on the cycle yield  sets of the form $a^{k_i} \akm$ for appropriate
	$k_i$, and the whole $L_x$ is a union of such languages.
	Now, if there is some $k_i \equiv_m 0$ with $L_i = a^{k_i} \akm$
	then $L_x$ is co-finite in $\akm$ and otherwise it is finite in \akm.

	If $m' = \ell m$ then we can easily compute the representation for $m'$
	from the one for $m$, as $\akm = \bigcup_{k = 0}^{\ell-1} a^{km} \left(a^{m'}\right)^*$.
	Consequently, if we are given several such representations then it is enough
	to take new $m$ as the smallest common multiplier of all such $m$s
	and use the remark above.

	For $L_y$ observe that we cannot simply claim that $L_y = L_y' b L_y''$
	for some regular sets $L_y', L_y''$,
	as there could be some connection between lengths before and after $b$ in $L_y$.
	We can resolve this by a simple argument:
	given an NFA recognizing $L_y$ we find all, say $t$, transitions by $b$ in it
	and make a separate copy of the NFA for each such a transition by $b$,
	in which this transition remains and all other $b$-transitions are removed,
	clearly $L_y =\bigcup_{i = 1}^t L_{y,i}$,
	where $L_{y,i}$ is the language for the automaton with $i$th $b$-transition remaining (and other removed).
	Then $L_{y,i} = L_{y,i}' b L_{y,i}''$ for appropriate regular languages $L_{y,i}', L_{y,i}'' \subseteq a^*$
	and we can represent them as union as in the argument above.
	Using distributivity, we get the required form,
	note that we need to use common $m$ for $L_{y,i}', L_{y,i}''$,
	which is done as before.
	%
	Finally, we need to take one $m$ for $L_x, L_y$, which is done in the same way.
	Other claims follow in the same way as for $L_x$.
\end{proof}

\begin{proof}[proof of Lemma~\ref{lem:infinite_to_cofinite_lemma}]
	The claim for $L_x$ is clear, as by Lemma~\ref{lem:regular_representation}
	it is either finite or co-finite in \akm.
	Similarly, represent $L_y$
	as 
	$$
	L_y = \bigcup_j L_{j,\ell} b L_{j,r} .
	$$
	Then for some $j$ we have that infinitely many $y_i$
	are in $L_{j,\ell} b L_{j,r}$
	and hence infinitely many 
	$a^{k_{i,\ell}m}$ are in $L_{j,\ell}$ and
	infinitely many $a^{k_{i,r}m}$ are in $L_{j,r}$.
	By Lemma~\ref{lem:regular_representation}
	each $L_{j, \ell}, L_{j,r}$ 
	is either finite or co-finite in \akm,
	so both are in fact co-finite in \akm, which ends the proof of the claim.
	
\end{proof}

\begin{proof}[proof of Theorem~\ref{thm:propagation_isnot_enough}]
	We show that if a sequent contains
	regular constraints $x \in e_x, y \in e_y$ (for simplicity we will represents perhaps many constraints as one,
	which can be obtained by intersecting them),
	where the representation of $L(e_x), L(e_y)$ according to Lemma~\ref{lem:regular_representation}
	(for a common $m$) are
	$L(e_x) = \bigcup_i L_{x,i}$ and $L(e_y) = \bigcup_j L_{y,\ell,j}bL_{y,r,j}$ then 
	for some $i, j$ the
	$L_{x,i}, L_{y,\ell,j}, L_{y,r,j}$ are all co-finite in $\akm$.

	In many case the exact index is not needed and we will refer
	to those sets as $L_{x}, L_{y,\ell}, L_{y,r}$.


	We show the claim of the Lemma by induction on the number $m$ of times a rule is applied;
	clearly it holds at the beginning, as for $m = 1$ we have $L_{x} = L_{y,\ell} = L_{y,r} = a^*$.

	For a Cut rule: In a Cut rule we consider an arbitrary regular expression $e$
	(defining a language language $R$)
	and consider separately two branches:
	the constraint on $y$ replaced with $e_y'$ such that
	$L(e_y') = L_y \cap R$ and in the other case: with $L_y \cap \overline R$
	(we can also perform a Cut for $e_x$, in the following we discuss the one for $e_y$ as it is more involved, the analysis in case of $e_x$ can be done in a similar, yet simplified way).
	Consider $R' = R \cap a^*  b a^*$ and $R'' = \overline R \cap a^*  b a^*$,
	clearly $L_y \cap R = L_y \cap R'$ and $L_y \cap \overline R = L_y \cap R''$ and $R' \cup R' = a^* b a^*$.
	Represent $L_y, R', R''$ as in Lemma~\ref{lem:regular_representation} with a common $m$.
	By assumption there are $L_{y,\ell}, L_{y,r}$ co-finite in \akm such that $L_y \supseteq L_{y,\ell} b L_{y,r}$.
	
	Consider the language
	$$
	\{a^{km} b a^{km} \: : \: k \in \mathbb N\} \subseteq \akm b \akm = R' \cup R''
	$$
	At least one of $R', R''$ contains infinitely many element of it, say $R'$ does.
	Then by Claim~\ref{lem:infinite_to_cofinite_lemma}
	there are $L_\ell, L_r$ co-finite in $\akm$ such that $L_\ell b L_r \subseteq R'$.
	Then also $L_\ell \cap L_{y,\ell}$, $L_r \cap L_{y,r}$ are co-finite in \akm
	and clearly
	$$
	(L_{y,\ell}\cap L_\ell) b (L_{y,r} \cap L_r) \subseteq L_y \cap R' \subseteq L_y \cap R \enspace ,
	$$
	as required.

	For a Nielsen's rule:
	by Lemma~\ref{lem:Nielsen_rules_application} only rules $y / xy$ or $y / yx$
	do not lead to immediate contradiction; consider the former,
	the other is symmetric.
	Then the regular constraints $x \in e_x$ and $y \in e_y$
	are updated, into $e_x^i, e_y^i$ (on $i$-th branch).
	Take $m$ as in Lemma~\ref{lem:regular_representation},
	which is common for $L(e_x), L(e_y)$ and all
	$L(e^i_x), L(e^i_y)$.
	By assumption there are $L_x, L_{y,\ell}, L_{y,r}$, all co-finite in \akm,
	such that $L_x \subseteq L(e_x)$ and $L_{y,\ell} b L_{y,r} \subseteq L(e_y)$.
	Consider the triples $(a^{km}, a^{2km},a^{km})$,
	co-finitely many of them are in $L_x \times L_{y,\ell} \times L_{y,r}$
	and so $x_k:=a^{km} \in L_x, y_k:=a^{2km}ba^{km} \in L_{y,\ell}b L_{y,r}$
	for each $k \geq k_0$ for some $k_0$;
	define $y_k'$ such that $y_k = x_ky_k'$, i.e.\
	$y_k'=a^{km}ba^{km}$.
	By~\eqref{eq:no_underapproximation} for each $k \geq k_0$ there is $i$ such that
	$x_k \in L(e_x^i), y'_k  \in L(y_x^i)$,
	so for some $i$ we have that for infinitely many
	$k$ we have $x_k \in L(e_x^i)$ and $y'_k  \in L(e_y^i)$; fix this $i$.
	Then by Claim~\ref{lem:infinite_to_cofinite_lemma}
	$L_x^i$ is co-finite in \akm
	and there are $L_{y,\ell}^i, L_{y,r}^i$ co-finite in \akm
	such that $L_{y,\ell}^i b L_{y,r}^i \subseteq L(e_y^i)$, as required.
	The argument for rule $y \gets yx$ is symmetric.

	For regular constraint propagation:
	formally, the format of the equations required by RCP,
	i.e.\ the normal form,
	forces us to introduce new variable $z$
	and rewrite the equation
	$xyxy = yxyx$ as $z = xyxy$ and $z = yxyx$
	and then $z$ is used as a vehicle to transfer the regular constraints from one side to the other (using forward and backward propagation).
	For the purpose of the proof we will use a common generalization of both (forward and backward) propagation,
	for simplicity we will use only two variables, the generalization to many variables is clear,
	but not needed for the proof:
	\[
	\prftree[r]{
		$\begin{array}{l}
			\text{if } f^{-1}(g (L(e_x), L(e_y))) 
			\ = \bigcup^k_{i=1} (L(e^i_x) \times L(e^i_y))
		\end{array}$
	}{
		\{
		\Gamma,
		x \in e_x, x \in e^i_x, y \in e_y, y \in e^i_y
		\}^k_{i=1}
	}{
		\Gamma, x \in e_x, y \in e_y, g(x,y) = f(x,y)
	}
	\]
	where here $f, g$ are string functions using only concatenation,
	so in particular their image of regular sets is regular and and pre-image of regular sets 
	are recognizable relations.
	It can be seen as applying first the forward and then backward propagation:
	we compute the regular constraint on $g(x,y)$ (forward propagation)
	and then transfer it to $f(x,y)$ (backwards propagation).

	By Lemma~\ref{lem:Nielsen_rules_application} the considered equation is of the form:
	$xy x^{r + \ell}x y = yx x^{r + \ell} yx$ for some $r, \ell$.

	The analysis is symmetric for the sides,
	so let us apply the propagation from the left-hand side
	to the right-hand side.
	Let $L(e_x) = L_x, L(e_y) = L_y$,
	define $f(L_1, L_ 2) = L_2 L_1 L_1^{\ell + r} L_2 L_1$
	and $g(L_1, L_ 2) = L_1 L_2 L_1^{\ell + r} L_1 L_2$;
	that is, the equation can be represented as
	$g(x,y) = f(x,y)$.
	By definition we can branch into new pairs of constraints $\{e_x^i, e_y^i\}_{i\in I}$,
	on $x, y$, when
	$$
	\bigcup_{i \in I} L(e_x^i) \times L(e_y^i) = f^{-1} (\underbrace{L_x L_y L_x^{\ell + r} L_x L_y}_{=g(L_x,L_y)})
	$$

	Consider the representation of $L_x, L_y, \{L_x^i, L_y^i\}_{i\in I}$ according to Lemma~\ref{lem:regular_representation} for a common $m$.
	We want to show that for some $i$ the $L_x^i, L_{y,\ell}^i, L_{y,r}^i$
	are all co-finite in \akm
	and $L_{y,\ell}^ib L_{y,r}^i \subseteq L_y^i$.
	
	By assumption there are $L_{y,\ell}, L_{y,r}$
	such that $L_x, L_{y,\ell}, L_{y,r}$ are co-finite in \akm
	and $L_{y,\ell}b L_{y,r} \subseteq L_y$.
	Then
	\begin{align*}
		g(L_x,L_y)
		&\supseteq
		g(L_x,L_{y,\ell}b L_{y,r})\\
		&=
		\underbrace{L_x L_{y,\ell}}_{\text{co-finite in }\akm}b \underbrace{L_{y,r} (L_x)^{\ell + r} L_x L_{y,\ell}}_{\text{co-finite in }\akm}b \underbrace{L_{y,r}}_{\text{co-finite in }\akm}
		\enspace .
	\end{align*}
	Now consider
	\begin{align*}
		x_i
		&=
		a^{im}\\
		y_i
		&=
		a^{im}ba^{im}\\
		f(x_i,y_i)
		&=
		y_ix_ix_i^{\ell+r}y_ix_i\\
		&=
		\underbrace{a^{im}}_{\in \akm}b
		\underbrace{a^{(\ell+r+3)im}}_{\in \akm}b
		\underbrace{a^{2im}}_{\in \akm} .
	\end{align*}
	Hence, except for finitely many elements,
	$f(x_i,y_i) \in g(L_x,L_y)$
	and so for co-finitely many $i$s we have
	
	\begin{align*}
		(x_i,y_i) &\in f^{-1}(g(L_x,L_y)) \\
		&=
		\bigcup_{j \in I} L(e_x^j) \times L(e_y^j) 
	\end{align*}
	In particular, for some $j$ infinitely many $x_i$ belong to
	$L(e_x^j)$, which is then co-finite in \akm{} by Claim~\ref{lem:infinite_to_cofinite_lemma},
	and infinitely many $y_i$ belong to $L(e_y^j)$,
	so by Claim~\ref{lem:infinite_to_cofinite_lemma} there are $L_{y,\ell}^j, L_{y,r}^j$ co-finite in \akm
	such that $L_{y,\ell}^j b L_{y,r}^j \subseteq L(e_y^j)$, as required.
\end{proof}


\section{Proof Search Strategies} \label{sec:proof-strategies}

We begin with some observations.
Every equational constraint $x = f(\bar{y})$ defines two propagation rules:$(x = f(\bar{y}), \leftarrow)$ and $(x = f(\bar{y}), \rightarrow)$.
The set of propagation rules derived from a string constraint \( \psi \) consists of all such propagation rules for each equational constraint occurring in \( \psi \).

Note, none of the proof rules adds equational constraints to a string constraint.
For the sake of presentation and simplicity, we assume that each string variable is associated with exactly one regular constraint.  This assumption is justified by the fact that multiple constraints can be combined through \texttt{Intersect}, ensuring a unified representation without loss of generality.
If no such constraint is explicitly given, we assume \( \Sigma^* \) as a default constraint.

We extend the definition of a \emph{proof tree} in Section \ref{sec:proof_sys} to an \emph{annotated proof tree}, where each node is annotated with propagation rules and associated clocks. We call the clocks associated with a leaf and propagation rule, \emph{propagation clock} and the clocks only associated with a leaf \emph{branch clock}.
\begin{definition}
	Let $T = (V, E, P, B)$ be an \emph{annotated proof tree}, where $(V,E)$ forms the underlying proof tree. The annotation $P$  is defined as follows:
	$$ P: L \to 2^{\mathcal{P} \times \mathbb{N}}, $$
	where $L$ is the set of leaf nodes in $T$, \( \mathcal{P} \) is the set of all propagation rules induced by occurring equational constraints in $T$ and \( \mathbb{N} \) tracks clock values.
	The annotation $B$  is defined as follows:
	$$ B: L \to \mathbb{N}.$$
\end{definition}

For a given string constraint  $\psi = \bigwedge_{i = 1}^{n} \phi_i$, the initial proof tree has a single root node $\Gamma = \phi_1, \ldots, \phi_n$.
The annotation assigns $P(\Gamma) = \{ ((\phi, \leftarrow), 0), ((\phi, \rightarrow), 0) \mid \phi \in \Gamma \}$.
That is, the clock for each propagation rule is $0$.
Furthermore, the annotation assigns $B(\Gamma) = 0$. That is, the clock for the initial branch is set to 0.

We say an annotated proof tree is \emph{closed} if the underlying proof tree is closed.
Next, we formally define what it means to apply a proof rule in the context of constructing a proof tree. More precisely, we define how a proof tree is expanded through the application of a proof rule.
We start with the rule \texttt{Close} which closes branches of the proof tree.

\begin{definition} \label{def:close-expansion}
	Let \( T = (V, E, P, B) \) be an \emph{annotated proof tree}, where \( P, B \) are the annotation functions.
	Given a leaf node \( \Gamma \in V \), suppose there exists a regular constraint \( \phi := x \in \emptyset \) such that \( \phi \in \Gamma \).
	
	Then \( \Gamma \) satisfies the condition for the \textnormal{\texttt{Close}} rule:
	$$
	\frac{}{\Gamma} \quad \textnormal{\texttt{[Close]}}
	$$
	We define the \emph{expansion} $\mathcal{E}$ of \( T \) with respect to \( \Gamma \) using \textnormal{\texttt{Close}} as the function:
	$$
	\mathcal{E}(T, \Gamma, \textnormal{\texttt{Close}}) = T',
	$$	
	
	The updated proof tree \( T' = (V', E', P', B') \) is constructed as follows:
	\begin{itemize}
		\item The new set of nodes is given by \( V' = V \cup \{\bot\} \), where \( \bot \) represents a fresh dead-end node.
		\item The set of edges is updated to \( E' = E \cup \{ (\Gamma, \bot) \} \).
		\item The annotation function \( P', B' \) are updated as follows:
		\begin{itemize}
			\item The new dead-end node \( \bot \) is also marked as closed:
			$$
			P'(\bot) = \emptyset \text{ and } B'(\bot) = 0.
			$$
			\item For all other leaf nodes \( \Gamma' \neq \Gamma \), the annotation remains unchanged:
			$$
			P'(\Gamma') = P(\Gamma') \text{ and } B'(\Gamma') = B(\Gamma').$$
		\end{itemize}
	\end{itemize}
\end{definition}

\begin{definition} \label{def:expansion}
	Let \( T = (V, E, P, B) \) be an \emph{annotated proof tree}, where \( P, B \) are the annotation functions.  
	Given a leaf node \( \Gamma \in V \) containing the equation \( \phi_f := x = f(\bar{y}) \), suppose there exists a backward propagation rule \( (\phi_f, \leftarrow) \) with clock value \( c \), such that \( ((\phi_f, \leftarrow), c) \in P(\Gamma) \). Additionally, assume that \( \Gamma \) includes the regular constraint \(\phi_e := x \in e \).

	If \( \phi_f, \phi_e \) under sequents \( S_1, \dots, S_n \) form a valid proof rule:
	$$
	\frac{S_1 \quad S_2 \quad \dots \quad S_n}{x = f(\bar{y}), x \in e} \quad \textnormal{\texttt{[Bwd-Prop]}}
	$$
	Then, we define the \emph{expansion} $\mathcal{E}$ of \( T \) with respect to $\Gamma$ and $(\phi_f, \leftarrow)$ as the function:
	$$
	\mathcal{E}(T,\Gamma, (\phi_f, \leftarrow)) = T',
	$$	
	
	First for each $i \in \{1,\ldots,n\}$ we can define $S_i' := S_i \cup \Gamma$.	
	Then we construct a proof tree \( T' = (V', E', P', B') \) where:
	\begin{itemize}
		\item The new set of nodes is given by \( V' = V \cup \{ S_1', \dots, S_n' \} \).
		\item The set of edges is updated as \( E' = E \cup \{ (\Gamma, S_i') \mid i = 1, \dots, n \} \).
		\item The annotation function \( P' \) is updated as follows:
		\begin{itemize}
			\item Each new node \( S_i' \) inherits the remaining annotations from \( \Gamma \) and resets the clock of \( (\phi_f, \leftarrow) \), increments all other propagation clocks, and initializes a new branch clock:
			\[
			\begin{aligned}
				P'(S_i') =	&\left\{ ((\phi, d), c+1) \mid ((\phi, d), c) \in P(\Gamma), d \in \{\leftarrow, \rightarrow\}, \phi \neq \phi_f \right\} \\
				&\quad \cup \{ ((\phi_f, \leftarrow), 0) \} \\
				&\quad \cup \{ ((\phi_f, \rightarrow), c+1) \mid ((\phi_f, \rightarrow), c) \in P(\Gamma) \}. \\
				B'(S_i') = & 0
			\end{aligned}
			\]

			\item The propagation clocks of all other leaves $\Gamma' \notin \{ S_1', \dots, S_n'\}$ stay the same and the branch clocks are incremented by 1:
			
			$
			P'(\Gamma') = P(\Gamma')\\
			B'(\Gamma') = B(\Gamma') +1
			$
		\end{itemize}
	\end{itemize}
\end{definition}

The expansion of a proof tree using \texttt{Fwd-Prop} can be defined analogously to the backward propagation case. 
We define the expansion $\mathcal{E}$ of \( T \) with respect to \( (\phi, \rightarrow) \) as the function:
$$
\mathcal{E}(T, \Gamma, (\phi, \rightarrow)) = T',
$$
which produces a new annotated proof tree \( T' = (V', E', P', B') \) from \( T \) by applying an analogous expansion process for forward propagation.
Next, we define a proof search as a sequence of proof tree expansions, formally described as follows.

\begin{definition}
	A \emph{proof search} on a set of string constraints \( \Gamma \) is a sequence of annotated proof trees
	$$
	T_0, T_1, T_2, \dots
	$$  
	where the initial tree is given by \( T_0 = (\{\Gamma\}, \emptyset, P_0, B_0) \), consisting of a single node \( \Gamma \) with the annotation functions  
\begin{align*}
	P_0(\Gamma) &= \{ ((\phi, \leftarrow), 0), ((\phi, \rightarrow), 0) \mid \phi \in \Gamma \}, \\
	B_0(\Gamma) &= 0.
\end{align*}

	For each subsequent tree \( T_{i+1} \), there exists either:
	\begin{itemize}
		\item A propagation rule \( (\phi, d) \) in the annotation function of some leaf node \( \Gamma' \), where \( d \in \{\leftarrow, \rightarrow\} \), such that  
		$$
		T_{i+1} = \mathcal{E}(T_i, \Gamma', (\phi, d)),
		$$  
		or
		\item A leaf node \( \Gamma' \) containing a regular constraint of the form \( x \in \emptyset \), in which case the \textnormal{\texttt{Close}} rule is applied:
		$$
		T_{i+1} = \mathcal{E}(T_i, \Gamma', \textnormal{\texttt{Close}}).
		$$
	\end{itemize}
\end{definition}

Finally, we can give the fairness definition.

\begin{definition}
	A proof search \( T_0, T_1, T_2, \dots \) is \emph{fair} if it satisfies the following two conditions:
	\begin{enumerate}
		\item For each propagation rule \( (\phi, d) \) with \( d \in \{\leftarrow, \rightarrow\} \), there exists a constant \( c_j > 0 \) such that whenever
		\[
		((\phi, d), c) \in P_i(\Gamma_i) \quad \text{for some leaf node } \Gamma_i \in V_i \text{ in some } T_i,
		\]
		it holds that
		\[
		c \leq c_j.
		\]
		\item For each leaf node \(\Gamma \in V_i\) that is not closed (i.e., extendable), there exists a constant \( c_\Gamma > 0 \) such that, if \(\Gamma\) remains as a leaf node in some subsequent proof tree \( T_j \) (with \( j \geq i \)), then its branch clock value satisfies
		\[
		B_j(\Gamma) \leq c_\Gamma.
		\]
	\end{enumerate}
\end{definition}

This condition ensures that for a given bound, no propagation rule is indefinitely postponed during the proof search. 
For example, consider a string constraint $\psi = \bigwedge_{i=1}^{n} \phi_i,$
where each $\phi_i$ is an equational constraint. In this case, a natural choice for an upper bound on the propagation clock is $2n$. This ensures that each propagation rule is applied at least once along any single branch of the proof tree.

In contrast, the upper bounds for the branch clocks must be chosen dynamically, since the width of the proof tree can increase as new branches are created by the application of proof rules. For a given proof tree $T_i$, a simple bound is $|L_i|$, where $L_i$ denotes the set of open (i.e., not closed) leaf nodes in $T_i$. This bound guarantees that every branch is expanded at least once before any branch is revisited for further expansion.

\end{document}